\def\maxwidth{ %
  \ifdim\Gin@nat@width>\linewidth
    \linewidth
  \else
    \Gin@nat@width
  \fi
}
\definecolor{fgcolor}{rgb}{0.345, 0.345, 0.345}
\definecolor{shadecolor}{rgb}{.97, .97, .97}
\definecolor{messagecolor}{rgb}{0, 0, 0}
\definecolor{warningcolor}{rgb}{1, 0, 1}
\definecolor{errorcolor}{rgb}{1, 0, 0}
\newenvironment{knitrout}{}{} 
\numberwithin{equation}{section}
\theoremstyle{plain}
\theoremstyle{plain}
\long\def\comment#1{}
\newtheorem{theorem}{Theorem}
\newtheorem{lemma}{Lemma}
\theoremstyle{definition}
\numberwithin{definition}{section}
\newtheorem{remark}{Comment}[section]
\numberwithin{remark}{section}
\DeclareMathOperator{\Var}{Var}
\newcommand{\F}{\mathcal{F}}
\newcommand{\E}{\mathbb{E}}
\begin{document}

\begin{frontmatter}
\title{Transformation Models in High-Dimensions\thanksref{T1}}
\runtitle{Transformation Models in High-Dimensions}
\thankstext{T1}{Version September 2017. We thank Philipp Bach for processing the dataset for the application.}

\begin{aug}
\author{\fnms{Sven} \snm{Klaassen}\ead[label=e1]{}}
\and
\author{\fnms{Jannis} \snm{K\"uck}\ead[label=e2]{}}
\and
\author{\fnms{Martin} \snm{Spindler}\ead[label=e3]{}}

\affiliation{University of Hamburg}

\address{Jannis K\"uck\\
University of Hamburg\\
Hamburg Business School\\
Moorweidenstr. 18\\
20148 Hamburg\\
Germany\\
E-mail: jannis.kueck@uni-hamburg.de}

\address{Sven Klaassen\\
University of Hamburg\\
Hamburg Business School\\
Moorweidenstr. 18\\
20148 Hamburg\\
Germany\\
E-mail: sven.klaassen@uni-hamburg.de}

\address{Martin Spindler\\
University of Hamburg\\
Hamburg Business School\\
Moorweidenstr. 18\\
20148 Hamburg\\
Germany\\
E-mail: martin.spindler@uni-hamburg.de}
\end{aug}

\begin{abstract}
Transformation models are a very important tool for applied statisticians and econometricians. In many applications, the dependent variable is transformed so that homogeneity or normal distribution of the error holds. In this paper, we analyze transformation models in a high-dimensional setting, where the set of potential covariates is large. We propose an estimator for the transformation parameter and we show that it is asymptotically normally distributed using an orthogonalized moment condition where the nuisance functions depend on the target parameter. In a simulation study, we show that the proposed estimator works well in small samples. A common practice in labor economics is to transform wage with the log-function. In this study, we test if this transformation holds in CPS data from the United States.

\end{abstract}

\begin{keyword}[class=MSC]
\kwd[Primary ]{62H}
\kwd{62F}
\end{keyword}

\begin{keyword}
\kwd{Transformation Models}
\kwd{Lasso}
\kwd{Post-selection Inference}
\kwd{High-dimensional Setting}
\kwd{Z-estimation}
\kwd{Box-Cox transformation}
\kwd{VC class}
\end{keyword}

\end{frontmatter}

\section{Introduction}
Over the last few years, substantial progress has been made in the problem of fitting high-dimensional linear models of the form 
\begin{align}\label{mo}
Y=X\beta +\varepsilon,
\end{align}
where the number of regressors $p$ is much larger than the sample size $n$. The theoretical properties of penalization approaches, such as lasso, are now well understood under the assumption that the coefficient vector $\beta$ is sparse. A detailed summary of the recent results is given in textbook length in \cite{nr:buhlmann.de-geer:2011}.\\
In this paper, we take up the idea of the high dimensional linear model in \ref{mo} and combine it with a parametric transformation of the response variable $\Lambda_\theta(\cdot)\in\mathcal{F}_\Lambda$, where $\mathcal{F}_\Lambda=\{\Lambda_\theta(\cdot):\theta\in\Theta\}$ is a given family of strictly monotone increasing functions. For every $\theta\in\Theta$, we assume a linear model
\begin{align}\label{trmo}
\Lambda_{\theta}(Y)=X\beta_{\theta}+\varepsilon_\theta
\end{align}
with $\E[\varepsilon_\theta]=0$. Our analysis allows the number of regressors to be much larger than the number of observations, although we require sparsity for every $\beta_\theta$ in \ref{trmo}. The goal of data transformation is to change the scale preventing wrong model assumptions, such as by establishing normal distributed errors.\\
Transformation models are widely used in biostatistics, medicine, economics and many other fields, where skewed data is often first log-transformed and then analyzed. For example, in labor economics it is common practice to transform the wage by the logarithm. Feng et al. \cite{feng2013log} list some common scenarios of the misuse and misinterpretation of log transformation. This underlines the importance of the right transformation to handle the problem of skewed data and non-negative outcomes.\\ 
In this study, we will present an estimate for the unknown transformation parameter $\theta_0\in\Theta$ in a high-dimensional transformation model, which satisfies
\begin{align}
\Lambda_{\theta_0}(Y)=X\beta_{\theta_0}+\varepsilon_{\theta_0}
\end{align}
with $\varepsilon_{\theta_0}\sim\mathcal{N}(0,\sigma^2)$. This means that under the true parameter $\theta_0$, the errors are normally distributed with unknown variance.
We establish that our estimator is root-n consistent, asymptotically unbiased, and normal. The transformation enables us to establish normality of the error terms and subsequent application of procedures based on normality.
Our setting fits into a general Z-estimation problem with a high-dimensional nuisance function, which depends on the target parameter $\theta$. Inference on a target parameter in general Z-estimation problems in high dimensions is covered in Belloni et al. (2014) \cite{belloni2014uniform} and Chernozhukov et al. (2017) \cite{chernozhukov2017double}. In high-dimensional transformation models, the nuisance function depends on the target parameter $\theta$; therefore, in the supplementary material we establish a theorem regarding inference in a general Z-estimation setting under a different set of entropy conditions where such a dependence is explicitly allowed. This result might be of interest for Z-estimation problems with the same structure. Appendices \ref{appendixB} and \ref{appendixC} might be of independent interest for related problems of the same underlying structure.

Transformation of the dependent variable is very common in statistics and econometrics. The Box-Cox power transformations (\cite{BoxCox}) or the modification proposed by Yeo and Johnson (\cite{YeoJohnson}) are very popular transformations. The aim of transformations is typically to achieve symmetry, normality, or independence of the error terms. In labor economics the analysis of wage data is key, and wage data is non-negative and often highly skewed. By default, wage data are transformed by the logarithm and then further processed; for example, as a dependent variable in a Mincer equation. A crucial point for the subsequent analysis is that the applied transformation is correctly specified. In this paper, we analyse the estimation and inference of the transformation parameter in a high-dimensional setting; that is, where the set of potential dependent variables is large, even larger than the number of observations. Here we focus on estimation and inference of the transformation parameter because this is the first crucial step and it is important for the interpretation of the model and application of subsequent statistical procedures. A related line of research has focused on inference on the covariates in the model. Given that inference in this case relies on the estimated transformation model, valid post-selection/estimation inference is crucial, as pointed out by \cite{BD1981}, which has led to a vivid discussion on this topic. \cite{BD1981} cover the parametric case, the semiparametric case is covered by \cite{linton2008estimation}, and has more recently been examined by \cite{KL2017}, amongst others. Inference on the covariates in high-dimensional settings is an interesting problem that we plan to address in future research. The underlying theory is built on Neyman orthogonal moment conditions, as summarized in \cite{chernozhukov2017double}.
\section*{Plan of this Paper}
The rest of this paper is organized as follows. In Section \ref{normal}, we formally define the setting and propose an estimator for the transformation parameter. In section \ref{main section}, we prove that a Neyman orthogonality condition holds and we provide theoretical results for the estimation rates of the nuisance functions. We also present the main result for the asymptotic distribution of the estimated transformation parameter. Section \ref{simulation} provides a simulation study and section \ref{application} gives an empirical application.
The proofs are provided in Appendix \ref{appendixA}. The supplementary material includes additional technical material. In Appendix \ref{appendixB}, conditions for the uniform convergence rates of the lasso estimator are presented. Finally, Appendix \ref{appendixC} provides a theoretical result about inference on a target parameter in general Z-problems with dependent and high-dimensional nuisance functions.
\section*{Notation}
In what follows, we work with triangular array data $\{(z_{i,n},i=1,...,n),n=1,2,3,...\}$ with $z_{i,n}=(y_{i,n},x_{i,n})$ defined on some common probability space $(\Omega,\mathcal{A},P)$. The law $P_n\in\mathcal{P}_n$ of $\{z_i,i=1,...,n\}$ changes with $n$. Thus, all parameters that characterize the distribution of $\{z_i,i=1,...,n\}$ are implicitly indexed by the sample size $n$, but we omit the index $n$ to simplify notation.\\
The $l_2$ and $l_1$ norms are denoted by $||\cdot||_2$ and $||\cdot||_1$. The $l_0$-norm, $||\cdot||_0$, denotes the number of non-zero components of a vector. We use the notation $a\vee b:=\max (a,b)$ and $a\wedge b:=\min (a,b)$.\\
The symbol $\E$ denotes the expectation operator with respect to a generic probability measure. If we need to signify the dependence on a probability measure $P$, then we use $P$ as a subscript in $\E_P$. For random variables $Z_1,\dots,Z_n$ and a function $g:\mathcal{Z}\to\mathbb{R}$, we define the empirical expectation
$$\E_n[g(Z)]\equiv\E_{\mathbb{P}_n}[g(Z)]:=\frac{1}{n}\sum\limits_{i=1}^ng(Z_i)$$
and
$$G_n(g):=\frac{1}{\sqrt{n}}\sum\limits_{i=1}^n \Big(g(Z_i)-\E \big[g(Z_i)\big]\Big).$$\\
For a class of measurable functions $\mathcal{F}$ on a measurable space, let
$N(\varepsilon,\mathcal{F},\|\cdot\|)$ be the minimal number of balls $B_{\varepsilon}(g):=\{f:\|g-f\|<\varepsilon\}$ of radius $\varepsilon$ to cover the set $\F$. Let $F$ be an envelope function of $\F$ with $F(x) \ge |f(x)|$ for all $f \in \F$. The uniform entropy number with respect to the $L_r(Q)$ seminorm $||\cdot||_{Q,r}$ is defined as 
$$ent(\mathcal{F},\varepsilon):=\sup_{Q}\log N(\varepsilon\|F\|_{Q,r},\mathcal{F},L_r(Q)),$$
where the supremum is taken over all probability measures $Q$ with $0<\E_Q[F^r]^{1/r}<\infty$.\\
For any function $\nu(\theta,u)$ we use the notation $\dot\nu_{\theta^*}(u):=\partial\nu(\theta,u)/\partial\theta|_{\theta=\theta^*}$, respectively $\nu'_{\theta}(u^*):=\partial\nu(\theta,u)/\partial u|_{u=u^*}$.
\section{Transformation model}\label{normal}
We consider a high-dimensional transformation model, where the unknown transformation parameter $\theta_0$ is identified as being the only parameter for which the errors are normally distributed. This assumption is typical for transformation models. Let $\{\Lambda_\theta(\cdot):\theta\in\Theta\}$ be a given parametric family of strict monotone increasing and two times differentiable functions and $\Theta\subset\mathbb{R}$ be compact. For every $\theta\in\Theta$, we assume a linear model
\begin{align}\label{model}
\Lambda_{\theta}(Y)=X\beta_{\theta}+\varepsilon_\theta
\end{align}
with $\E[\varepsilon_\theta]=0$. We write
\begin{align*}
\varepsilon_\theta:=\Lambda_\theta(Y)-\underbrace{m_\theta(x)}_{=X\beta_{\theta}}
\end{align*}
with
\begin{align*}
m_\theta(x)\equiv m(\theta,x):=E[\Lambda_\theta(Y)|X=x].
\end{align*}
By assumption, 
\begin{align*}
\sigma^2_\theta\equiv\sigma^2(\theta):=Var(\varepsilon_\theta|X)
\end{align*}
is independent from $X$. We allow the number of covariates $p=p_n$ to increase with the sample size $n$ but we require that the index set $$S_\theta:=\{j:\beta_{\theta,j}\neq 0\}$$ is sparse for every $\theta\in\Theta$ with $s:=\sup_{\theta\in\Theta}||\beta_\theta||_0$. The number of relevant variables $s=s_n$ may also increase with the sample size $n$ but it does so at a moderate rate.
We assume that $\beta_\theta$ is differentiable in $\theta$. Therefore, we can write
\begin{align}\label{modeldot}
\dot\Lambda_{\theta}(Y)=X\dot\beta_{\theta}+\dot\varepsilon_\theta
\end{align}
with $\E[\dot\varepsilon_\theta]=0$ under regularity conditions (as mentioned later on). The model \ref{modeldot} is sparse with $\dot s:=\sup_{\theta\in\Theta}||\dot\beta_\theta||_0\le 2s$.\\ \\
The assumption that $\beta_\theta$ is sparse and differentiable is common in other applications. For example, in high-dimensional quantile regression, Belloni and Chernozhukov (2011) \cite{belloni2011l1} assume that for every quantile $u\in(0,1)$ the coefficient $\beta(u)$ is sparse and smooth with respect to $u$.\\ \\
We estimate $\theta_0$ by a method similar to the a "profile likelihood procedure" proposed in Linton et al. (2008) \cite{linton2008estimation}. The main idea is to formulate our estimation problem into a Z-estimation problem and then plug-in estimates for all unknown terms.
\subsection{Transformation parameter}\label{transformation parameter}
For the estimation of the transformation parameter, we first determine the likelihood. Because $\Lambda_{\theta}(\cdot)$ is strictly increasing, we have
\begin{align*}
P(Y\le y|X)&=P(\Lambda_{\theta}(Y)\le \Lambda_{\theta}(y)|X)\\&=P(\varepsilon_\theta\le \Lambda_{\theta}(y)-m_{\theta}(X)|X).
\end{align*}
For $\theta=\theta_0$ we obtain
\begin{align*}
P(Y\le y|X)&=P(\varepsilon_{\theta_0}\le \Lambda_{\theta_0}(y)-m_{\theta_0}(X)|X)\\
&=P(\varepsilon_{\theta_0}\le \Lambda_{\theta_0}(y)-m_{\theta_0}(X))\\
&=\Phi\left(\frac{\Lambda_{\theta_0}(y)-m_{\theta_0}(X)}{\sigma}\right),
\end{align*}
with $\Phi$ being the cdf of a standard normal distribution and $\sigma\equiv\sigma_{\theta_0}$.\\
By transforming the densities, we obtain
\begin{align*}
f_{Y|X}(y|x)&=f_{\varepsilon_{\theta_0}}(\Lambda_{\theta_0}(y)-m_{\theta_0}(x))\Lambda_{\theta_0}'(y)\\
&=\frac{1}{\sqrt{2\pi\sigma^2}}\exp\left(-\frac{(\Lambda_{\theta_0}(y)-m_{\theta_0}(x))^2}{2\sigma^2}\right)\Lambda_{\theta_0}'(y)
\end{align*}
and, therefore, the following log-likelihood function
\begin{align*}
l_{Y|X}(\theta)&=-\frac{n}{2}\log(2\pi\sigma_\theta^2)-\frac{1}{2\sigma_\theta^2}\sum_{i=1}^n(\Lambda_{\theta}(Y_i)-m_{\theta}(X_i))^2+\sum_{i=1}^n\log(\Lambda_\theta'(Y_i)).
\end{align*}
The maximum likelihood estimator
\begin{align}\label{profile}
\begin{aligned}
\theta^*&=\underset{\theta\in\Theta}{\arg \max}\bigg[-\frac{1}{2}\log(2\pi\sigma_\theta^2)-\frac{1}{2\sigma_\theta^2n}\sum_{i=1}^n(\Lambda_{\theta}(Y_i)-m_{\theta}(X_i))^2\\&\quad+\frac{1}{n}\sum_{i=1}^n\log(\Lambda_\theta'(Y_i))\bigg]
\end{aligned}
\end{align}
fulfills
\begin{align*}
0&=\partial\bigg(-\frac{1}{2}\log(2\pi\sigma_\theta^2)-\frac{1}{2\sigma_\theta^2 n}\sum_{i=1}^n(\Lambda_{\theta}(Y_i)-m_{\theta}(X_i))^2\\&\quad+\frac{1}{n}\sum_{i=1}^n\log(\Lambda_\theta'(Y_i))\bigg)/\partial\theta\bigg|_{\theta=\theta^*}\\
&=\frac{1}{n}\sum_{i=1}^n\bigg[-\frac{{\dot\sigma_{\theta^*}}^2}{2\sigma_{\theta^*}^2}-\frac{1}{\sigma_{\theta^*}^2}(\Lambda_{\theta^*}(Y_i)-m_{\theta^*}(X_i))(\dot\Lambda_{\theta^*}(Y_i)-\dot m_{\theta^*}(X_i))\\&\quad+\frac{{\dot\sigma_{\theta^*}}^2}{2\sigma_{\theta^*}^4}(\Lambda_{\theta^*}(Y_i)-m_{\theta^*}(X_i))^2+\frac{\dot\Lambda'_{\theta^*}(Y_i)}{\Lambda'_{\theta^*}(Y_i)}\bigg]\\
&=:\mathbb{E}_n\Big[\psi\big((Y,X),\theta^*,h_0(\theta^*,X)\big)\Big],
\end{align*}
where $h_0:\Theta\times\mathcal{X}\to\mathbb{R}\times\mathbb{R}^+\times \mathbb{R}\times \mathbb{R}$ with
$$h_0\equiv(h_{0,1},h_{0,2},h_{0,3},h_{0,4}):=(m_\theta,\sigma^2_\theta,\dot m_\theta,\dot\sigma^2_\theta)$$
is a nuisance function.
We substitute the function $h_0$ by a Lasso estimator $\hat{h}_0$, which is defined in subsection \ref{nuissance} and analyzed in subsection \ref{uniformsection}.\\
Finally, we estimate the transformation parameter $\theta_0$ by an estimator $\hat{\theta}$, which solves
\begin{align}\label{estimator}
\left|\mathbb{E}_n\Big[\psi\big((Y,X),\hat{\theta},\hat{h}_0(\hat{\theta},X)\big)\Big]\right|=\inf_{\theta\in\Theta}\left|\mathbb{E}_n\Big[\psi\big((Y,X),\theta,\hat{h}_0(\theta,X)\big)\Big]\right|+\epsilon_n,
\end{align}
where $\epsilon_n=o\left(n^{-1/2}\right)$ is the numerical tolerance. 


\subsection{Nuisance function}\label{nuissance}
The unknown nuisance function 
\begin{align*}
h_0=(m_\theta,\sigma^2_\theta,\dot m_\theta,\dot\sigma^2_\theta)
\end{align*}
can be estimated by
\begin{align*}
\hat{h}_0=(\hat{m}_\theta,\hat{\sigma}^2_\theta,\hat{\dot m}_\theta,\hat{\dot\sigma}^2_\theta),
\end{align*}
where $\hat{m}_\theta(x)=x\hat{\beta}_\theta$ with $\hat{\beta}_\theta$ being the lasso estimate
\begin{align*}
\arg\max_{\beta}\mathbb{E}_n[(\Lambda_\theta(Y)-x\beta)^2]+\frac{\lambda}{n}||\Psi_\theta\beta||_1
\end{align*}
with penalty term $\lambda$ and penalty loadings $\Psi_\theta$ as in \cite{belloni2017program} (p. 260). Analogously, we estimate $\dot m_\theta$ by $\hat{\dot m}_\theta(x)=x\hat{\dot\beta}_\theta$ with $\hat{\dot\beta}_\theta$ being the lasso estimate
\begin{align*}
\arg\max_{\beta}\mathbb{E}_n[(\dot\Lambda_\theta(Y)-x\beta)^2]+\frac{\tilde{\lambda}}{n}||\tilde{\Psi}_\theta\beta||_1.
\end{align*}
The unknown variance $\sigma^2_\theta$ can be estimated by 
$$\hat{\sigma}_\theta^2:=\frac{1}{n}\sum\limits_{i=1}^n\hat{\varepsilon}_{i,\theta}^2$$ and $\dot\sigma^2_\theta$ by
$$\hat{\dot\sigma}_\theta^2:=\frac{2}{n}\sum\limits_{i=1}^n\hat{\varepsilon}_{i,\theta}\hat{\dot\varepsilon}_{i,\theta}$$
with $\hat{\varepsilon}_{i,\theta}:=\Lambda_{\theta}(Y_i)-\hat{m}_\theta(X_i)$ and $\hat{\dot\varepsilon}_{i,\theta}:=\dot\Lambda_{\theta}(Y_i)-\hat{\dot m}_\theta(X_i)$. Under regularity conditions
$$\dot{\sigma}_\theta^2=\partial\E[\varepsilon_\theta^2]/\partial\theta=\E[\partial(\varepsilon_\theta^2)/\partial\theta]=2\E[\varepsilon_\theta\dot\varepsilon_\theta]$$
holds.

\subsection{Identification of the true transformation}\label{identificationsec}
First, we formulate our estimation problem as a Z-estimation problem (cf \ref{transformation parameter}). Let 
$$\mathcal{H}=\mathcal{H}_1\times\mathcal{H}_2\times\mathcal{H}_3\times\mathcal{H}_4$$ be a suitable convex space of measurable functions with
$\mathcal{H}_1=\{h_1:(\theta,x)\mapsto\mathbb{R}\}$, $\mathcal{H}_2=\{h_2:\theta\mapsto\mathbb{R}^+\}$, $\mathcal{H}_3=\{h_3:(\theta,x)\mapsto\mathbb{R}\}$ and $\mathcal{H}_4=\{h_4:\theta\mapsto\mathbb{R}\}$. We obtain the moment function $$\psi\big((Y,X),\theta,h\big):\left(\mathcal{Y}\times\mathcal{X}\right)\times\Theta\times\mathcal{H}\to\mathbb{R}$$ with 
\begin{align*}
\big((Y,X),\theta,h\big)\mapsto&-\underbrace{\frac{h_4(\theta) }{2h_2(\theta)}}_{=:I(\theta,h_2,h_4)}-\underbrace{\frac{1}{h_2(\theta)}(\Lambda_{\theta}(Y)-h_1(\theta,X))(\dot\Lambda_{\theta}(Y)-h_3(\theta,X))}_{=:II(\theta,h_1,h_2,h_3)}\\
&+\underbrace{\frac{h_4(\theta)}{2\big(h_2(\theta)\big)^2}(\Lambda_{\theta}(Y)-h_1(\theta,X))^2}_{=:III(\theta,h_1,h_2,h_4)}+c_\theta,
\end{align*}
where $c_\theta:=\frac{\dot\Lambda'_{\theta}(Y)}{\Lambda'_{\theta}(Y)}$ and $\mathcal{Y}$ is the support of $Y$, respectively, $\mathcal{X}$ is the support of $X$.\\ \\
Under regularity conditions, the next lemma ensures the identification of the transformation parameter. The condition \ref{A1}, \ref{A5}, and \ref{A6}, as stated on page \textit{11}, are only sufficient conditions and they will be required for the main theorem in \ref{main ref}.
\begin{lemma}{}\label{identification}
Under the conditions \ref{A1}, \ref{A5}, and \ref{A6} the true parameter $\theta_0$ is identified as a unique solution of the moment condition
$$\E \Big[\psi\big((Y,X),\theta_0,h_0\big)\Big]=0.$$
\end{lemma}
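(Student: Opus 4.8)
The plan is to recognise that, evaluated at the true nuisance $h_0=(m_\theta,\sigma^2_\theta,\dot m_\theta,\dot\sigma^2_\theta)$, the moment function $\psi$ is the score of a Gaussian pseudo-likelihood, and then to invoke the information inequality. For $\theta\in\Theta$ and $x\in\mathcal{X}$ define the conditional density
\[
g_\theta(y\mid x):=\frac{1}{\sqrt{2\pi\sigma_\theta^2}}\exp\!\Big(-\frac{(\Lambda_\theta(y)-m_\theta(x))^2}{2\sigma_\theta^2}\Big)\Lambda_\theta'(y).
\]
First I would verify, via the change of variables $u=\Lambda_\theta(y)$---legitimate because $\Lambda_\theta$ is strictly increasing and differentiable---that $\int g_\theta(y\mid x)\,dy=1$, so that $g_\theta(\cdot\mid x)$ is a bona fide density for every $\theta$, not only at $\theta_0$. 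Next, differentiating $\log g_\theta(Y\mid X)$ in $\theta$ and using $\partial_\theta\varepsilon_\theta=\dot\varepsilon_\theta$, one checks the pointwise identity $\psi\big((Y,X),\theta,h_0\big)=\partial_\theta\log g_\theta(Y\mid X)$: the summands $-I$, $-II+III$, and $c_\theta$ are exactly the derivatives of $-\tfrac12\log(2\pi\sigma^2_\theta)$, $-\varepsilon_\theta^2/(2\sigma^2_\theta)$, and $\log\Lambda_\theta'(Y)$. This is the population counterpart of the first-order condition obtained from \ref{profile}.

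I would then set $L(\theta):=\E[\log g_\theta(Y\mid X)]$ and, interchanging differentiation and expectation (justified by the smoothness and integrability supplied by \ref{A1}, \ref{A5}, and \ref{A6}), conclude $\E[\psi((Y,X),\theta,h_0)]=\partial_\theta L(\theta)$. At $\theta=\theta_0$ the assumption that $\varepsilon_{\theta_0}\mid X\sim\mathcal{N}(0,\sigma^2)$ with $\sigma^2$ free of $X$ means the true conditional density of $Y$ given $X$ equals $g_{\theta_0}(\cdot\mid X)$. The Gibbs inequality then gives, for every $\theta$,
\[
L(\theta_0)-L(\theta)=\E\!\left[\log\frac{g_{\theta_0}(Y\mid X)}{g_\theta(Y\mid X)}\right]=\E_X\big[D_{\mathrm{KL}}\big(g_{\theta_0}(\cdot\mid X)\,\big\|\,g_\theta(\cdot\mid X)\big)\big]\ge 0,
\]
so $\theta_0$ maximises $L$; as $\theta_0$ is interior, $\partial_\theta L(\theta_0)=0$ and hence $\E[\psi((Y,X),\theta_0,h_0)]=0$, which gives existence of the root.

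The main obstacle is uniqueness, because a zero of $\partial_\theta L$ need not be a global maximiser, so exhibiting $\theta_0$ as the unique maximiser is not by itself enough. I expect to close this using the identification content of \ref{A1}, \ref{A5}, and \ref{A6}: equality in the Gibbs inequality forces $g_\theta(\cdot\mid X)=g_{\theta_0}(\cdot\mid X)$ almost surely, and these conditions should make the map $\theta\mapsto g_\theta(\cdot\mid X)$ injective, so that $\theta_0$ is the unique maximiser; to rule out additional stationary points on the compact set $\Theta$ I would use the sign/curvature information in the same conditions (e.g.\ a sign condition on $\partial_\theta L$ off $\theta_0$, or local strict concavity of $L$ at $\theta_0$). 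Combining the information inequality with these regularity and identification assumptions then yields that $\theta_0$ is the unique solution of $\E[\psi((Y,X),\theta,h_0)]=0$.
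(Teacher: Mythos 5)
Your proposal follows essentially the same route as the paper's proof: identify $\psi(\cdot,\theta,h_0)$ with the score of the Gaussian pseudo-density $g_\theta(y\mid x)$, use the Kullback--Leibler (Gibbs) inequality to show that $\theta_0$ maximises $L(\theta)=\E[\log g_\theta(Y\mid X)]$, and interchange differentiation and expectation (justified via \ref{A1}, \ref{A5}, \ref{A6} together with the bounds $0<c\le\sigma_\theta^2\le C$) to conclude $\E\big[\psi\big((Y,X),\theta_0,h_0\big)\big]=\partial_\theta L(\theta)\big|_{\theta=\theta_0}=0$. The only difference is that you explicitly flag the gap between ``unique maximiser of $L$'' and ``unique root of $\partial_\theta L$'', which the paper passes over by simply asserting that the KL divergence vanishes only at $\theta_0$; your added care is warranted but does not alter the substance of the argument.
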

\begin{proof}
We use the same argument as Neumeyer, Noh, Van Keilegom (2016) \cite{neumeyer2014heteroscedastic}.
Define
\begin{align*}
f^{(\theta)}(y|x):=\frac{1}{\sqrt{2\pi\sigma_\theta^2}}\exp\left(-\frac{(\Lambda_{\theta}(y)-x\beta_{\theta})^2}{2\sigma_\theta^2}\right)\Lambda_{\theta}'(y).
\end{align*}
The expected Kullback-Leibler-Distance between $f_{Y|X}$ and $f^{(\theta)}$ is greater or equal to zero and equality only holds for the true parameter $\theta_0$.
Therefore, the following expression is minimized in $\theta_0$
\begin{align*}
&\quad\int\int\log\left(\frac{f_{Y|X}(y|x)}{f^{(\theta)}(y|x))}\right)f_{Y|X}(y|x)dydF_X(x)\\ &=\underbrace{\int\int\log(f_{Y|X}(y|x)f_{Y|X}(y|x)dydF_X(x)}_{constant}\\&\quad-\underbrace{\int\int\log(f^{(\theta)}(y|x))f_{Y|X}(y|x)dydF_X(x)}_{=E[\log(f^{(\theta)}(Y|X))]}.
\end{align*}
It follows that $E[\log(f^{(\theta)}(Y|X))]$ is maximized for the true parameter $\theta=\theta_0$. Under regularity conditions \ref{A1}, \ref{A5} and \ref{A6}
\begin{align*}E \Big[\psi\big((Y,X),\theta_0,h_0\big)\Big]&=E\left[\frac{\partial}{\partial \theta}\log(f^{(\theta)}(Y|X))\big|_{\theta=\theta_0}\right]\\&=\frac{\partial}{\partial \theta} E[\log(f^{(\theta)}(Y|X))]\big|_{\theta=\theta_0}=0.
\end{align*}
Here we used that for all $\theta$
\begin{align*}
0<c\le\sigma_\theta^2\quad \text{and}\quad \sigma_\theta^2\le\E\left[\sup\limits_{\theta\in\Theta}\varepsilon^2_\theta\right]\le C<\infty, 
\end{align*}
which is shown in the proof of theorem \ref{Uniformlasso}.
\end{proof}
\newpage
\section{Main Results}\label{main section}
This section focuses on the central elements of our Z-estimation problem, which are Neyman orthogonality, uniform estimation of the nuisance function, and a theorem about the asymptotic distribution of the estimated transformation parameter based on an entropy condition. In the following, we consider the model described in section \ref{normal}.
\subsection{Neyman orthogonality condition}\label{orthogonality}
To be able to use plug-in estimators for the nuisance function, the moment condition to identify $\theta_0$ needs to be insensitive towards small changes in the estimated nuisance function. This property is granted by the Neyman orthogonality condition that is defined in Chernozhukov et al. (2017) \cite{chernozhukov2017double}. In this work, the authors describe the condition in great detail and they provide an extensive overview of the settings, where the condition holds.\\
To prove the Neyman orthogonality condition, we define the Gateaux derivative with respect to some $h\in\mathcal{H}$ in $h_0$
$$D_r[h-h_0]:=\partial_r\Big\{\E\Big[\psi\Big((Y,X),\theta_0,h_0+r(h-h_0)\Big)\Big]\Big\},$$
where 
\begin{align*}
&\quad h_0+r(h-h_0)\\&:=\Big(m_\theta +r(h_1-m_\theta),\sigma^2_\theta+r(h_2-\sigma^2_\theta),\dot m_\theta+r(h_1-\dot m_\theta),\dot\sigma^2_\theta+r(h_2-\sigma^2_\theta)\Big).
\end{align*}
It is important to mention that $\mathcal{H}_1$ to $\mathcal{H}_4$ are assumed to be convex, which is why the term $\psi\Big((Y,X),\theta_0,h_0+r(h-h_0)\Big)$ is well defined and exists for all $r\in [0,1)$.
\begin{lemma}{}\label{orthogonalityprop}
Let $\mathcal{H}'\subseteq \mathcal{H}$.
Under the conditions 
\begin{align*}
&\E\left[\sup\limits_{\theta\in\Theta}\varepsilon_\theta^2\right]<\infty\quad
\text{and}\quad \E\left[\sup\limits_{h\in\mathcal{H}'}\left|\psi\Big((Y,X),\theta_0,h\Big)\right|\right]<\infty
\end{align*}
the Neyman orthogonality condition
$$D_0[h-h_0]=0$$
is satisfied for all $h\in\mathcal{H}'$.
\end{lemma}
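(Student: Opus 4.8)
The plan is to differentiate under the expectation and reduce the Gateaux derivative at $r=0$ to a sum of four directional terms, one per nuisance coordinate, and then show each one vanishes at $(\theta_0,h_0)$. First I would justify interchanging $\partial_r$ and $\E$: since $\psi$ is a rational function of the nuisance arguments that is smooth in $r$ on $[0,1)$, the two integrability hypotheses $\E[\sup_\theta\varepsilon_\theta^2]<\infty$ and $\E[\sup_{h\in\mathcal{H}'}|\psi((Y,X),\theta_0,h)|]<\infty$ supply a dominating function, so by the chain rule
\[
D_0[h-h_0]=\E\Big[\partial_r\,\psi\big((Y,X),\theta_0,h_0+r(h-h_0)\big)\big|_{r=0}\Big]=\sum_{j=1}^4\E\big[\partial_{h_j}\psi\cdot\delta_j\big]\big|_{(\theta_0,h_0)},
\]
with directions $\delta_1=h_1-m_{\theta_0}$, $\delta_2=h_2-\sigma^2_{\theta_0}$, $\delta_3=h_3-\dot m_{\theta_0}$, $\delta_4=h_4-\dot\sigma^2_{\theta_0}$. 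Writing $\varepsilon=\Lambda_{\theta_0}(Y)-m_{\theta_0}(X)$ and $\dot\varepsilon=\dot\Lambda_{\theta_0}(Y)-\dot m_{\theta_0}(X)$, I will repeatedly use $\E[\varepsilon\mid X]=0$, $\E[\dot\varepsilon\mid X]=0$ (the latter because $\dot m_\theta=\E[\dot\Lambda_\theta(Y)\mid X]$ under the stated regularity), $\E[\varepsilon^2]=\sigma^2$, and the identity $\E[\varepsilon\dot\varepsilon]=\tfrac12\dot\sigma^2$ from subsection \ref{nuissance}.

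For the two coordinates entering as conditional means, $h_1$ (in terms $II$ and $III$) and $h_3$ (in $II$ only), differentiating the quotients and evaluating at $(\theta_0,h_0)$ gives
\[
\partial_{h_1}\psi\cdot\delta_1=\Big(\frac{\dot\varepsilon}{\sigma^2}-\frac{\dot\sigma^2}{\sigma^4}\varepsilon\Big)\delta_1,\qquad\partial_{h_3}\psi\cdot\delta_3=\frac{\varepsilon}{\sigma^2}\,\delta_3,
\]
where at $\theta_0$ both $\delta_1=\delta_1(X)$ and $\delta_3=\delta_3(X)$ are functions of $X$ alone. Conditioning on $X$ and invoking $\E[\varepsilon\mid X]=\E[\dot\varepsilon\mid X]=0$ makes both expectations vanish.

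For the coordinates $h_2$ (in $I$, $II$, $III$) and $h_4$ (in $I$ and $III$), the directions $\delta_2,\delta_4$ depend on $\theta$ only, hence are deterministic scalars at $\theta_0$ and factor out. Direct differentiation yields
\[
\E\big[\partial_{h_2}\psi\big]\delta_2=\delta_2\Big(\frac{\dot\sigma^2}{2\sigma^4}+\frac{\E[\varepsilon\dot\varepsilon]}{\sigma^4}-\frac{\dot\sigma^2\,\E[\varepsilon^2]}{\sigma^6}\Big),\qquad\E\big[\partial_{h_4}\psi\big]\delta_4=\delta_4\Big(-\frac{1}{2\sigma^2}+\frac{\E[\varepsilon^2]}{2\sigma^4}\Big),
\]
and substituting $\E[\varepsilon^2]=\sigma^2$ together with $\E[\varepsilon\dot\varepsilon]=\tfrac12\dot\sigma^2$ telescopes each bracket to $0$. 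Summing the four vanishing contributions gives $D_0[h-h_0]=0$ for every $h\in\mathcal{H}'$.

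The quotient-rule bookkeeping is routine; what actually carries the result is matching the two structural ingredients to the two \emph{types} of nuisance coordinate. The conditional-mean-zero properties $\E[\varepsilon\mid X]=\E[\dot\varepsilon\mid X]=0$ dispose of the $X$-dependent directions $h_1,h_3$, while the variance identities (which encode that $\psi$ is the $\theta$-score of the Gaussian model at $\theta_0$) dispose of the scalar directions $h_2,h_4$. The one step requiring care is therefore verifying $\E[\dot\varepsilon\mid X]=0$, which relies on the regularity permitting $\partial_\theta$ to pass inside $\E[\Lambda_\theta(Y)\mid X]$; everything else is algebra.
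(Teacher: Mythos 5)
Your proposal is correct and follows essentially the same route as the paper's proof: both interchange $\partial_r$ and $\E$ under the stated integrability conditions, condition on $X$, and close the argument via $\E[\varepsilon_{\theta_0}\mid X]=\E[\dot\varepsilon_{\theta_0}\mid X]=0$, the homoskedasticity identity $\E[\varepsilon_{\theta_0}^2\mid X]=\sigma^2_{\theta_0}$, and $\dot\sigma^2_{\theta_0}=2\E[\varepsilon_{\theta_0}\dot\varepsilon_{\theta_0}]$. The only difference is bookkeeping --- you organize the Gateaux derivative by nuisance coordinate $h_1,\dots,h_4$ while the paper organizes it by the three terms $I$, $II$, $III$ of $\psi$ --- and your correctly flagged delicate step ($\E[\dot\varepsilon_{\theta_0}\mid X]=0$ via differentiating inside the conditional expectation) is handled at the same level of rigor as in the paper.
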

It is sufficient to restrict the condition onto the nuisance realization set, as defined in Section \ref{entropysec}, which contains the estimated nuisance function with probability $1-o(1)$.\\
Our estimation procedure is closely related to the 'concentrated out' approach in general Likelihood and other M-estimation problems as described in Chernozhukov et al. (2017) \cite{chernozhukov2017double} and Newey (1994) \cite{newey1994asymptotic}. In Lemma 2.5, Chernozhukov et al. (2017) \cite{chernozhukov2017double} give conditions when the score $\psi$ is Neyman orthogonal at $(\theta_0,h_0)$. They suppose that the target parameter $\theta$ and the nuisance parameter $h_0(\theta)$ solve optimization problem
\begin{align}
\max\limits_{\theta\in\Theta,h\in\mathcal{H}}\E[l((Y,X),\theta,h(\theta))]
\end{align}
and 
\begin{align}\label{db0}
h_0(\theta)=\arg\max\limits_{h\in\mathcal{H}}\E[l((Y,X),\theta,h(\theta))]
\end{align}
for all $\theta\in\Theta$, where $l$ is a known criterion function. However, our model does not fit in this setting because we set  
\begin{align*}
l_{Y|X}(\theta)&=-\frac{n}{2}\log(2\pi\sigma_\theta^2)-\frac{1}{2\sigma_\theta^2}\sum_{i=1}^n(\Lambda_{\theta}(Y_i)-m_{\theta}(X_i))^2+\sum_{i=1}^n\log(\Lambda_\theta'(Y_i)),
\end{align*} 
which is the log-likelihood of our model \ref{model} only if $\theta=\theta_0$. Therefore, in general, $h_0(\theta)$ does not satisfy \ref{db0} and our problem is not covered by this setting.\\ \\
Next, we give a set of assumptions that are needed for the following theorems and have already been used   for lemma \ref{identification} (\ref{A1}, \ref{A5}, and \ref{A6}).\\ \\
Assumptions \textbf{A1}-\textbf{A11}.\\
The following assumptions hold uniformly in $n\ge n_0,P\in\mathcal{P}_n$:
\begin{enumerate}[label=\textbf{A\arabic*},ref=A\arabic*]
\item\label{A1}
\begin{align*}
\E\left[\sup\limits_{\theta \in \Theta}|\log(\Lambda_\theta'(Y))|\right]<\infty
\end{align*}
\item \label{A3}
 The parameters obey the growth condition $$s\log(p\vee n)=o\left(n^{1/4}\right).$$
 \item \label{A4}
 For all $n\in\mathbb{N}$ the regressor $X=(X_1,\dots,X_p)$ has a bounded support $\mathcal{X}$.
\item \label{A5}
Uniformly, in $\theta$ the variance of the error term and its derivation with respect to the transformation parameter is bounded away from zero, namely
\begin{align*}
    0&<c\le \inf\limits_{\theta\in\Theta}\E\big[\varepsilon_\theta^2\big]\\
    0&<c\le \inf\limits_{\theta\in\Theta}\E\big[\dot\varepsilon_\theta^2\big].
    \end{align*}
\item \label{A6}
The transformations and its derivations are measurable and the classes of functions 
    $$\mathcal{F}_{\Lambda}:=\big\{\Lambda_{\theta}(\cdot)|\theta\in\Theta\big\}\quad \dot{\mathcal{F}}_{\Lambda}:=\big\{\dot\Lambda_{\theta}(\cdot)|\theta\in\Theta\big\}$$
    have VC index $C_{\Lambda}$, respectively, $\dot C_{\Lambda}$.\\
    The classes $\mathcal{F}_{\Lambda}$ and $\dot{\mathcal{F}}_{\Lambda}$ have envelopes $F_{\Lambda}$, respectively, $\dot F_{\Lambda}$ with 
    $$\E [F_{\Lambda}(Y)^{14}]<\infty\quad\text{and}\quad\E [\dot F_{\Lambda}(Y)^8]<\infty.$$
\item \label{A7}
The following condition for the second derivation of the transformation with respect to $\theta$ holds:
    $$\sup\limits_{\theta\in\Theta}\E \Big[\big(\ddot\Lambda_\theta(Y)\big)^2\Big]\le C.$$
\item \label{A8}
    With probability $1-o(1)$, the empirical minimum and maximum sparse eigenvalues are bounded from zero and above, namely
    \begin{align*}
		0<\kappa'&\le\inf\limits_{||\delta||_0\le s\log(n),||\delta||=1}||X^T\delta||_{\mathbb{P}_{n,2}}\\&\le\sup\limits_{||\delta||_0\le s\log(n),||\delta||=1}||X^T\delta||_{\mathbb{P}_{n,2}}\le\kappa''<\infty.
		\end{align*}
\item \label{A9}
The class of functions 
    $$\mathcal{J}_{\Lambda}:=\left\{c_\theta(\cdot)=\frac{\dot\Lambda_{\theta}'(\cdot)}{\Lambda_{\theta}'(\cdot)}\bigg|\theta\in\Theta\right\}$$
    has an envelope $J_{\Lambda}$ with 
    $$\E [J_{\Lambda}(Y)^6]<\infty.$$
\item \label{A10}
For all $\theta\in\Theta$, $\tilde{h}\in\tilde{\mathcal{H}}$ it holds that
\begin{itemize}
\item[(i)]
\begin{align*}
\E\Big[\Big(\psi\big((Y,X),\theta,h_0(\theta,X)\big)-\psi\big((Y,X),\theta_0,h_0(\theta_0,X)\big)\Big)^2\Big]\le C|\theta-\theta_0|^{2}
\end{align*}
\item[(ii)]
\begin{align*}
&\quad\E\Big[\Big(\psi\big((Y,X),\theta,\tilde{h}(\theta,X)\big)-\psi\big((Y,X),\theta,h_0(\theta,X)\big)\Big)^2\Big]\\&\le C\E\left[\|\tilde{h}(\theta,X)-h_0(\theta,X)\|^{2}_2\right]
\end{align*}
\item[(iii)]
\begin{align*}
&\quad\sup\limits_{r\in(0,1)}\left|\partial^2_r\bigg\{\E\Big[\psi\big((Y,X),\theta_0+r(\theta-\theta_0),h_0+r(\tilde{h}-h_0)\big)\Big]\bigg\}\right|\\
&\le C\left(|\theta-\theta_0|^2+\sup\limits_{\theta^*\in\Theta}\E\Big[\|\tilde{h}(\theta^*,X)-h_0(\theta^*,X)\|^{2}_2\Big]\right)
\end{align*}
\end{itemize}
for a constant $C$ independent from $\theta$ and $\tilde{\mathcal{H}}$ defined in \ref{entropysec}.\\
\item \label{A11}
For $h \in \tilde{\mathcal{H}}$ the function
$$\theta\mapsto \E \Big[\psi\big((Y,X),\theta,h(\theta,X)\big)\Big]$$
is differentiable in a neighborhood of $\theta_0$ and
for all $\theta\in\Theta$, the identification relation
$$2|\E[\psi((Y,X)),\theta,h_0(\theta,X)]|\ge |\Gamma(\theta-\theta_0)|\wedge c_0$$
is satisfied with
$$\Gamma:=\partial_\theta\E \Big[\psi\big((Y,X),\theta_0,h_0(\theta_0,X)\big)\Big]>c_1.$$
\item \label{A12}
The map $(\theta,h)\mapsto\E[\psi((X,Y),\theta,h)]$ is twice continuously Gateaux-differentiable on $\Theta\times\mathcal{H}$.
\end{enumerate}\ \\
Assumptions \ref{A1}-\ref{A12} are a set of sufficient conditions for the main result stated in theorem \ref{main}. Assumption \ref{A1} allows us to interchange derivation and integration, which is necessary for the verification of the moment condition. The growth condition \ref{A3} restricts the sparsity index. Assumptions \ref{A3}, \ref{A7}, and \ref{A8} are needed for the uniform estimation of the nuisance function. Condition \ref{A3} can be relaxed, cf assumption 6.1 from Belloni et al. (2017) \cite{belloni2017program}. Assumption \ref{A8} is a standard eigenvalue condition for the lasso estimation. Condition \ref{A5} prevents degenerate distributions in the models \ref{model} and \ref{modeldot}. Assumptions \ref{A6} and \ref{A9} control the complexity of the class of transformations and bound the moments uniformly over $\theta$. Assumption \ref{A10} is a set of mild smoothness conditions. Assumption \ref{A11} implies sufficient identifiability of the true transformation parameter $\theta_0$. Assumption \ref{A12} only requires differentiability of the function $(\theta,h)\mapsto\E[\psi((X,Y),\theta,h)]$, which is a weaker condition than the differentiability of the function $(\theta,h)\mapsto\psi((X,Y),\theta,h)$.\\ \\
The following lemma shows that the first part of condition \ref{A6} is satisfied for the popular Box-Cox power transformations and the modification proposed by Yeo and Johnson.
\begin{lemma}\label{VC}
The class of Box-Cox-Transformations $\mathcal{F}_1=\{\Lambda_{\theta}(\cdot)|\theta\in\mathbb{R}\}$ and the class of derivatives $\mathcal{F}_2=\{\dot\Lambda_{\theta}(\cdot)|\theta\in\mathbb{R}\}$ with respect to the transformation parameter $\theta$ are VC classes. The same holds for Yeo-Johnson power transformations.
\end{lemma}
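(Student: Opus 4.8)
The plan is to reduce everything to a single elementary criterion for the VC-subgraph property and then verify that criterion by a derivative computation. The criterion I would use is the standard fact that a one-parameter family $\{f_\theta:\theta\in\mathbb{R}\}$ of real functions is VC-subgraph whenever there is a finite $k$ such that, for every fixed argument $y$ and every level $c\in\mathbb{R}$, the map $\theta\mapsto f_\theta(y)-c$ has at most $k$ zeros. Indeed, given test points $(y_1,t_1),\dots,(y_m,t_m)$, the functions $\theta\mapsto f_\theta(y_i)-t_i$ have together at most $mk$ zeros, which split the $\theta$-line into at most $mk+1$ intervals; on each interval the membership pattern $\{i:f_\theta(y_i)>t_i\}$ is constant, so at most $2mk+1$ of the $2^m$ possible patterns occur, and shattering fails once $m$ is large. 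Hence it suffices to bound, uniformly in $y$ and $c$, the number of solutions in $\theta$ of $\Lambda_\theta(y)=c$ and of $\dot\Lambda_\theta(y)=c$.

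For $\mathcal{F}_1$ in the Box--Cox case I would establish the stronger statement that $\theta\mapsto\Lambda_\theta(y)$ is monotone. Writing $u=\log y$ and $\Lambda_\theta(y)=(e^{u\theta}-1)/\theta$, a direct differentiation gives $\partial_\theta\Lambda_\theta(y)=h(u\theta)/\theta^2$ with $h(v):=e^v(v-1)+1$. Since $h(0)=0$ and $h'(v)=ve^v$ changes sign only at $v=0$, one has $h\ge 0$ everywhere, so $\partial_\theta\Lambda_\theta(y)\ge 0$. Thus $\theta\mapsto\Lambda_\theta(y)$ is nondecreasing for every $y$, each equation $\Lambda_\theta(y)=c$ has at most one solution (in fact the subgraphs are then totally ordered by inclusion, which alone forces the VC-subgraph property), and $\mathcal{F}_1$ is VC.

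The genuinely new point is $\mathcal{F}_2$, where $\dot\Lambda_\theta(y)=h(u\theta)/\theta^2$ is no longer monotone in $\theta$, so I must count zeros instead. Clearing the positive factor $\theta^2$, the solutions of $\dot\Lambda_\theta(y)=c$ are the zeros of $G(\theta):=e^{u\theta}(u\theta-1)+1-c\theta^2$. The key computation is that $G$ differentiates cleanly, namely $G'(\theta)=\theta\bigl(u^2e^{u\theta}-2c\bigr)$. Because $e^{u\theta}$ is strictly monotone in $\theta$, the factor $u^2e^{u\theta}-2c$ vanishes at most once, so $G'$ has at most two zeros; Rolle's theorem then forces at most three zeros of $G$, uniformly in $u\ne 0$ and in $c$. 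Hence $\dot\Lambda_\theta(y)=c$ has at most three solutions and $\mathcal{F}_2$ is VC. The degenerate slice $y=1$, where $\Lambda_\theta$ and $\dot\Lambda_\theta$ vanish identically, contributes no shattering and is harmless.

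For the Yeo--Johnson family I would treat the two branches separately and reduce each to the Box--Cox analysis. On $y\ge 0$ the transformation is Box--Cox in $w=1+y\ge 1$, so both the monotonicity and the three-zero bound apply verbatim. On $y<0$ it equals $-(w^{\,2-\theta}-1)/(2-\theta)$ with $w=1-y>1$, i.e.\ a Box--Cox expression in $w$ evaluated at the affinely reparametrized point $s=2-\theta$; since $\theta\mapsto 2-\theta$ changes neither monotonicity nor the number of level crossings and the outer sign only reverses the direction of monotonicity, the same two bounds transfer. The main obstacle is therefore isolated entirely in the derivative class $\mathcal{F}_2$: showing that the level-crossing count for $\dot\Lambda_\theta$ stays bounded uniformly over all $y$ and all levels $c$, which the factorization $G'(\theta)=\theta\,(u^2e^{u\theta}-2c)$ resolves in one stroke.
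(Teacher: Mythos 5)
Your proof is correct, but it follows a genuinely different route from the paper's. The paper works through Assouad's dual-class device: it considers the between-graph sets $C_\theta$ and shows that each dual set $D_{(x,t)}=\{\theta:(x,t)\in C_\theta\}$ is a union of at most two intervals, then invokes Assouad's Proposition 2.12. For $\mathcal{F}_1$ this interval structure comes from the same monotonicity computation you give (your $h(v)=e^{v}(v-1)+1\ge 0$ is exactly the paper's inequality $\log(x^\theta)\ge (x^\theta-1)/x^\theta$), but for the derivative class the paper proves the \emph{stronger} fact that $\theta\mapsto\dot\Lambda_\theta(x)$ is monotone, via a sign analysis of $\ddot\Lambda_\theta(x)=\theta^{-3}f(x^\theta)$ with $f(z)=(\log z-1)^2z+z-2$ and $f'(z)=(\log z)^2\ge 0$, split into the four cases $x\gtrless 1$, $\theta\gtrless 0$; and for Yeo--Johnson it splits the domain at $y=0$ and needs the VC-preservation lemma for pairwise unions (Lemma 2.6.17(iii) of van der Vaart and Wellner). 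You instead replace duality by a direct shattering count from bounded level crossings (the same underlying principle of bounding the complexity of the $\theta$-sections, made elementary), and, more substantively, you replace monotonicity of $\dot\Lambda_\theta$ in $\theta$ by the weaker but sufficient bound of at most three roots of $G(\theta)=e^{u\theta}(u\theta-1)+1-c\theta^2$, which your factorization $G'(\theta)=\theta\bigl(u^2e^{u\theta}-2c\bigr)$ plus Rolle delivers in one stroke. Your approach buys robustness: no second-derivative case analysis, and since your crossing criterion is pointwise in $y$, the two Yeo--Johnson branches need no union lemma at all. The paper's approach buys the sharper structural conclusion that the dual sections for $\dot\Lambda_\theta$ are single intervals on each of $\{x\ge 1\}$ and $\{x<1\}$. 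Two small points to state carefully, both of which you essentially already handle: your criterion must be read as ``at most $k$ sign changes'' rather than ``at most $k$ zeros,'' since at $y=1$ the maps vanish identically (a coordinate with constant membership indeed cannot be shattered, so your remark is the right fix); and at $\theta=0$ your closed forms are limits --- note $G(0)=0$ identically, and $\theta=0$ solves $\dot\Lambda_\theta(y)=c$ only when $(\log y)^2/2=c$, so the crossing count on all of $\mathbb{R}$ remains bounded --- whereas the paper excises $\theta=0$ and uses that adjoining a single function raises the VC index by at most one.
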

The proof of the lemma is given in the appendix.
\subsection{Uniform estimation of the nuisance functions}\label{uniformsection}
The rates for the estimation of the regression functions $m_\theta$ and $\dot m_\theta$ can be directly obtained by the uniform prediction rates of the lasso estimator.
\begin{theorem}\label{Uniformlasso1}\ \\
Under assumptions \ref{A1}-\ref{A8}, uniformly for all $P\in\mathcal{P}_n$ with probability $1-o(1)$, it holds that:
\begin{align}
\sup\limits_{\theta\in\Theta}||\hat{\beta}_\theta||_0&=O(s)\\
\sup\limits_{\theta\in\Theta}||X^T(\hat{\beta}_\theta-\beta_\theta)||_{\mathbb{P}_{n,2}}&=o\left(n^{-\frac{1}{4}}\right)\\
\sup\limits_{\theta\in\Theta}||\hat{\beta}_\theta-\beta_\theta||_{1}&=o\left(n^{-\frac{1}{4}}\right).
\end{align}
Respectively
\begin{align}
\sup\limits_{\theta\in\Theta}||\hat{\dot \beta}_\theta||_0&=O(s)\\
\sup\limits_{\theta\in\Theta}||X^T(\hat{\dot \beta}_\theta-\dot \beta_\theta)||_{\mathbb{P}_{n,2}}&=o\left(n^{-\frac{1}{4}}\right)\\
\sup\limits_{\theta\in\Theta}||\hat{\dot\beta}_\theta-\dot\beta_\theta||_{1}&=o\left(n^{-\frac{1}{4}}\right).
\end{align}
\end{theorem}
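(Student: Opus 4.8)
The plan is to recognise both families of penalised regressions — the one with response $\Lambda_\theta(Y)$ and coefficient $\beta_\theta$, and the one with response $\dot\Lambda_\theta(Y)$ and coefficient $\dot\beta_\theta$ — as instances of a single uniform-in-$\theta$ lasso problem, and to deduce the stated rates from the uniform prediction bounds collected in Appendix \ref{appendixB}. For each fixed $\theta$ the estimator $\hat\beta_\theta$ is an $\ell_1$-penalised least-squares fit of $\Lambda_\theta(Y)$ on $X$ with score $\varepsilon_\theta=\Lambda_\theta(Y)-X\beta_\theta$; since $m_\theta(x)=\E[\Lambda_\theta(Y)\mid X=x]=x\beta_\theta$ by the model, we have $\E[\varepsilon_\theta\mid X]=0$ and hence $\E[X_j\varepsilon_\theta]=0$. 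The task is to make every classical lasso ingredient hold \emph{simultaneously} over $\theta\in\Theta$. I would verify the hypotheses of the general uniform lasso theorem in the order: (i) the sparse/restricted-eigenvalue condition, which is exactly \ref{A8}; (ii) boundedness of the design, \ref{A4}; (iii) a uniform bound on the score process; (iv) calibration of the penalty level $\lambda$ and loadings $\Psi_\theta$ so that they dominate the score for every $\theta$ at once.

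The technical heart is step (iii): a maximal inequality controlling
$$\sup_{\theta\in\Theta}\max_{1\le j\le p}\big|G_n\big(X_j\varepsilon_\theta\big)\big|,$$
the empirical score process indexed jointly by the coordinate $j$ and the transformation parameter $\theta$. Writing $X_j\varepsilon_\theta=X_j\Lambda_\theta(Y)-X_jX\beta_\theta$, the $\theta$-dependence enters only through the VC class $\mathcal{F}_\Lambda=\{\Lambda_\theta\}$ of assumption \ref{A6}, so the governing class has a uniform entropy integral controlled by the VC index $C_\Lambda$ and the envelope $F_\Lambda$; the bounded design \ref{A4} absorbs the multiplication by $X_j$ and the linear part, while $\E[F_\Lambda(Y)^{14}]<\infty$ supplies the tail control needed for a maximal inequality over a class whose envelope is a product with $X$. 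This yields, with probability $1-o(1)$, a uniform score bound of order $\sqrt{\log(p\vee n)}$, which fixes a single admissible penalty level and legitimises the loadings uniformly in $\theta$. On this event the standard oracle argument — cone/restricted-set membership of $\hat\beta_\theta-\beta_\theta$, followed by inversion through the sparse eigenvalues \ref{A8} — delivers, uniformly in $\theta$, the sparsity bound $\sup_\theta\|\hat\beta_\theta\|_0=O(s)$, the prediction bound $\sup_\theta\|X^T(\hat\beta_\theta-\beta_\theta)\|_{\mathbb{P}_{n,2}}=O\big(\sqrt{s\log(p\vee n)/n}\big)$, and the $\ell_1$ bound $\sup_\theta\|\hat\beta_\theta-\beta_\theta\|_1=O\big(s\sqrt{\log(p\vee n)/n}\big)$.

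It then remains to convert these oracle orders into the claimed $o(n^{-1/4})$ rates, which is where the growth condition \ref{A3} enters: from $s\log(p\vee n)=o(n^{1/4})$ one gets $\sqrt{s\log(p\vee n)/n}=o(n^{-3/8})=o(n^{-1/4})$ for the prediction norm, and, using $(s\log(p\vee n))^2=o(n^{1/2})$, also $s\sqrt{\log(p\vee n)/n}=o(n^{-1/4})$ for the $\ell_1$ norm. The estimator $\hat{\dot\beta}_\theta$ is handled by the identical argument applied to the response class $\dot{\mathcal{F}}_\Lambda=\{\dot\Lambda_\theta\}$, which is again VC with index $\dot C_\Lambda$ and envelope $\dot F_\Lambda$ obeying $\E[\dot F_\Lambda(Y)^8]<\infty$ by \ref{A6}, and using $\dot s\le 2s$. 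Finally, to supply the variance bounds invoked in the proof of Lemma \ref{identification}, the lower bound $0<c\le\sigma_\theta^2$ is \ref{A5}, while for the upper bound I would use $|m_\theta(X)|\le\E[F_\Lambda(Y)\mid X]$ to obtain $\sup_{\theta\in\Theta}\varepsilon_\theta^2\le\big(F_\Lambda(Y)+\E[F_\Lambda(Y)\mid X]\big)^2$ and conclude $\sigma_\theta^2\le\E\big[\sup_{\theta}\varepsilon_\theta^2\big]\le C<\infty$ via \ref{A6} and Jensen's inequality, with the analogous statement for $\dot\sigma_\theta^2$.

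I expect the main obstacle to be step (iii): establishing the uniform-in-$\theta$ maximal inequality for the score process that admits a \emph{single} penalty level valid for all $\theta$ with high probability. This is precisely what distinguishes the present statement from the fixed-$\theta$ lasso theory, and it is the sole place where the VC structure and the (deliberately different) envelope moment requirements of \ref{A6} are genuinely used; once that bound is in hand, the remaining steps are the usual oracle inequality manipulations together with the bookkeeping against the growth condition \ref{A3}.
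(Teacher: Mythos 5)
Your proposal is correct in substance and arrives at the same rate bookkeeping as the paper, but it takes a genuinely different route to the uniform lasso bounds themselves. The paper's own proof of Theorem \ref{Uniformlasso1} is a two-line reduction: assumptions \ref{A1}--\ref{A8} are checked to imply conditions \ref{B1}--\ref{B7} for both models \ref{model} and \ref{modeldot} (with \ref{A6} supplying the VC/envelope structure for $\dot{\mathcal{F}}_\Lambda$ and \ref{A7} playing the role of \ref{B6} for the derivative model), and Theorem \ref{Uniformlasso} is invoked with $\delta_n=O\left(n^{-1/4}\log(p\vee n)\right)$. Theorem \ref{Uniformlasso}, in turn, is not proved by re-running the lasso analysis: its proof verifies Assumption 6.1 of Belloni et al. \cite{belloni2017program} --- the actual work being uniform laws of large numbers for $\varepsilon_\theta^2$ and $\Lambda_\theta(Y)^2$ and two modulus-of-continuity bounds over $|\theta-\theta'|\le 1/n$, all via VC entropy calculations and the maximal inequality of Lemma 1 in \cite{belloni2014uniform} --- while score domination, validity of the data-driven loadings $\Psi_\theta$, the oracle inequality, and the sparsity bound are delegated wholesale to the cited theorem. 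You instead re-derive the uniform lasso from scratch via a joint maximal inequality for $\sup_{\theta\in\Theta}\max_{j\le p}\left|G_n(X_j\varepsilon_\theta)\right|$; note that the hypotheses you list under (iii)--(iv) are internal steps of that machinery, not the conditions \ref{B1}--\ref{B7} that the paper actually verifies. Your route is sound (it is essentially what happens inside \cite{belloni2017program}, and a direct empirical-process bound over the pair $(j,\theta)$ would even subsume the discretization-in-$\theta$ step that makes the paper's continuity conditions necessary), and your conversion of the oracle orders $\sqrt{s\log(p\vee n)/n}$ and $\sqrt{s^2\log(p\vee n)/n}$ into $o(n^{-1/4})$ via \ref{A3}, your treatment of $\hat{\dot\beta}_\theta$ through $\dot{\mathcal{F}}_\Lambda$ with $\dot s\le 2s$, and your envelope bound $\sigma_\theta^2\le\E[\sup_{\theta\in\Theta}\varepsilon_\theta^2]\le C$ all coincide with the paper (the last is exactly display \ref{boundsigma}). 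What your approach buys is self-containedness; what it glosses are precisely the two ingredients the citation route gets for free: uniform-in-$\theta$ validity of the data-driven penalty loadings, which does not follow from the score bound alone but needs a uniform LLN for quantities such as $\mathbb{E}_n[X_j^2\varepsilon_\theta^2]$ (the cousin of the paper's verified condition on $(\mathbb{E}_n-\E)\varepsilon_\theta^2$), and the sparsity bound $\sup_{\theta\in\Theta}\|\hat{\beta}_\theta\|_0=O(s)$, which requires the usual first-order-condition argument with the sparse eigenvalues of \ref{A8} up to dimension $s\log n$ rather than cone membership alone --- both standard, neither a failing step, but a complete write-up along your lines would have to supply them. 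One small misattribution: the fourteenth moment in \ref{A6} is not what the lasso step consumes (Appendix \ref{appendixB} runs on $\E[F_\Lambda(Y)^6]<\infty$); the higher moments are spent later, in the envelope bound of Theorem \ref{entropy}.
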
\ \\
As a consequence of the uniform rates of the Lasso estimator, we are able to achieve uniform rates for the estimation of the variance $\sigma_\theta^2$ and its derivation $\dot \sigma_\theta^2$.
\begin{theorem}\label{varianceest}\ \\
Under assumptions of Theorem \ref{Uniformlasso1}, uniformly for all $P\in\mathcal{P}_n$ with probability $1-o(1)$, it holds that:
\begin{align}\sup\limits_{\theta\in\Theta}|\hat{\sigma}^2_{\theta}-\sigma^2_{\theta}|&=o\left(n^{-\frac{1}{4}}\right)\\
\sup\limits_{\theta\in\Theta}|\hat{\dot\sigma}^2_{\theta}-\dot\sigma^2_{\theta}|&=o\left(n^{-\frac{1}{4}}\right).
\end{align}
\end{theorem}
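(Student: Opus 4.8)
The plan is to expand the estimated residuals around the oracle residuals and to reduce every resulting piece either to the uniform prediction rates of Theorem \ref{Uniformlasso1} or to a uniform law of large numbers for the oracle quantities. Write $\delta_\theta(X):=\hat m_\theta(X)-m_\theta(X)=X^T(\hat\beta_\theta-\beta_\theta)$ and $\dot\delta_\theta(X):=\hat{\dot m}_\theta(X)-\dot m_\theta(X)=X^T(\hat{\dot\beta}_\theta-\dot\beta_\theta)$, so that $\hat\varepsilon_\theta=\varepsilon_\theta-\delta_\theta$ and $\hat{\dot\varepsilon}_\theta=\dot\varepsilon_\theta-\dot\delta_\theta$. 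Theorem \ref{Uniformlasso1} supplies, with probability $1-o(1)$ and uniformly in $P\in\mathcal{P}_n$,
\begin{align*}
\sup_{\theta\in\Theta}\|\delta_\theta\|_{\mathbb{P}_{n,2}}=\sup_{\theta\in\Theta}\|X^T(\hat\beta_\theta-\beta_\theta)\|_{\mathbb{P}_{n,2}}=o\!\left(n^{-1/4}\right),\qquad \sup_{\theta\in\Theta}\|\dot\delta_\theta\|_{\mathbb{P}_{n,2}}=o\!\left(n^{-1/4}\right),
\end{align*}
and I will also use $\sigma^2_\theta=\E[\varepsilon_\theta^2]$ and $\dot\sigma^2_\theta=2\E[\varepsilon_\theta\dot\varepsilon_\theta]$.

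For the variance, squaring gives $\hat\varepsilon_\theta^2=\varepsilon_\theta^2-2\varepsilon_\theta\delta_\theta+\delta_\theta^2$, hence
\begin{align*}
\hat\sigma^2_\theta-\sigma^2_\theta=\big(\E_n[\varepsilon_\theta^2]-\E[\varepsilon_\theta^2]\big)-2\,\E_n[\varepsilon_\theta\delta_\theta]+\E_n[\delta_\theta^2].
\end{align*}
The third term equals $\|\delta_\theta\|_{\mathbb{P}_{n,2}}^2=o(n^{-1/2})$ uniformly. For the cross term, Cauchy--Schwarz gives $|\E_n[\varepsilon_\theta\delta_\theta]|\le \E_n[\varepsilon_\theta^2]^{1/2}\,\|\delta_\theta\|_{\mathbb{P}_{n,2}}$, which is $O_P(1)\cdot o(n^{-1/4})=o_P(n^{-1/4})$ uniformly once $\sup_\theta\E_n[\varepsilon_\theta^2]=O_P(1)$ is known (this follows from the empirical process bound below together with $\sigma^2_\theta\le C$, established in the proof of Lemma \ref{identification}). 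It therefore remains only to control $\sup_\theta|\E_n[\varepsilon_\theta^2]-\E[\varepsilon_\theta^2]|$.

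For $\hat{\dot\sigma}^2_\theta$ the identical scheme applies. Expanding $\hat\varepsilon_\theta\hat{\dot\varepsilon}_\theta=\varepsilon_\theta\dot\varepsilon_\theta-\varepsilon_\theta\dot\delta_\theta-\dot\varepsilon_\theta\delta_\theta+\delta_\theta\dot\delta_\theta$ gives
\begin{align*}
\hat{\dot\sigma}^2_\theta-\dot\sigma^2_\theta=2\big(\E_n[\varepsilon_\theta\dot\varepsilon_\theta]-\E[\varepsilon_\theta\dot\varepsilon_\theta]\big)-2\,\E_n[\varepsilon_\theta\dot\delta_\theta]-2\,\E_n[\dot\varepsilon_\theta\delta_\theta]+2\,\E_n[\delta_\theta\dot\delta_\theta].
\end{align*}
By Cauchy--Schwarz and the two prediction rates, the two mixed error--residual terms are each $O_P(1)\cdot o(n^{-1/4})=o_P(n^{-1/4})$ (using in addition $\sup_\theta\E_n[\dot\varepsilon_\theta^2]=O_P(1)$), while the pure residual--residual term is $o(n^{-1/4})\cdot o(n^{-1/4})=o_P(n^{-1/2})$. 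What remains is the empirical process term $\sup_\theta|\E_n[\varepsilon_\theta\dot\varepsilon_\theta]-\E[\varepsilon_\theta\dot\varepsilon_\theta]|$.

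The only genuinely nontrivial step, and thus the main obstacle, is the uniform law of large numbers for the two oracle objects, i.e.\ showing
\begin{align*}
\sup_{\theta\in\Theta}\big|\E_n[\varepsilon_\theta^2]-\E[\varepsilon_\theta^2]\big|=O_P\!\left(n^{-1/2}\right),\qquad \sup_{\theta\in\Theta}\big|\E_n[\varepsilon_\theta\dot\varepsilon_\theta]-\E[\varepsilon_\theta\dot\varepsilon_\theta]\big|=O_P\!\left(n^{-1/2}\right),
\end{align*}
both of which are in particular $o_P(n^{-1/4})$. I would obtain these from a maximal inequality for the empirical process $G_n$ indexed by the one-dimensional parameter $\theta$. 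The classes $\{\varepsilon_\theta=\Lambda_\theta(Y)-X^T\beta_\theta:\theta\in\Theta\}$ and $\{\dot\varepsilon_\theta\}$ are built from the VC classes $\mathcal{F}_\Lambda,\dot{\mathcal{F}}_\Lambda$ of Assumption \ref{A6} together with the differentiable one-parameter families $\theta\mapsto X^T\beta_\theta$, $\theta\mapsto X^T\dot\beta_\theta$ on the bounded support $\mathcal{X}$ (Assumption \ref{A4}); the squared and product classes $\{\varepsilon_\theta^2\}$, $\{\varepsilon_\theta\dot\varepsilon_\theta\}$ are then of VC type with finite uniform entropy integral. Their envelopes are dominated by $(F_\Lambda(Y)+C)^2$ and $(F_\Lambda(Y)+C)(\dot F_\Lambda(Y)+C)$, which lie in $L_2$ because $\E[F_\Lambda^{14}]<\infty$ and $\E[\dot F_\Lambda^{8}]<\infty$; via Hölder these same moment budgets also deliver the $O_P(1)$ bounds on $\sup_\theta\E_n[\varepsilon_\theta^2]$ and $\sup_\theta\E_n[\dot\varepsilon_\theta^2]$ used above. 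The maximal inequality then bounds $\E[\sup_\theta|G_n(\varepsilon_\theta^2)|]$ and $\E[\sup_\theta|G_n(\varepsilon_\theta\dot\varepsilon_\theta)|]$ by a constant, which after dividing by $\sqrt n$ yields the $O_P(n^{-1/2})$ rate. The delicate point is to verify that these product and square classes retain finite uniform entropy integral uniformly in $P\in\mathcal{P}_n$, so that the VC indices and envelope moments controlled by \ref{A6} survive multiplication; once that is in place, everything else reduces to substituting the rates of Theorem \ref{Uniformlasso1}.
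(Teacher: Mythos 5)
Your proposal is correct and, in outline, matches the paper's proof of Theorem \ref{varianceest}: expand $\hat\varepsilon_\theta=\varepsilon_\theta-\delta_\theta$, absorb the quadratic term via the squared prediction rate of Theorem \ref{Uniformlasso1}, control the mixed terms, and reduce the rest to a uniform law of large numbers for the oracle classes $\{\varepsilon_\theta^2\}$ and $\{\varepsilon_\theta\dot\varepsilon_\theta\}$, obtained from a maximal inequality over VC-type classes (the paper applies Lemma 1 of \cite{belloni2014uniform} and gets $O(\log(n)n^{-1/2})$ with probability $1-o(1)$ rather than your $O_P(n^{-1/2})$; both are $o(n^{-1/4})$). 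Two local divergences are worth recording. First, for the cross terms you use Cauchy--Schwarz in the empirical $L_2$ norm together with $\sup_\theta\E_n[\varepsilon_\theta^2]=O_P(1)$, so that only the prediction-norm rate is needed; the paper instead bounds them by $2K\sup_\theta\|\hat\beta_\theta-\beta_\theta\|_1\cdot\frac{1}{n}\sum_i|\varepsilon_{\theta,i}|$, using the bounded support of $X$ (\ref{A4}), the $\ell_1$-rate from Theorem \ref{Uniformlasso1}, and $\sup_\theta\frac{1}{n}\sum_i|\varepsilon_{\theta,i}|=O(1)$ from a separate maximal inequality for $\mathcal{E}_\Lambda$. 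Your pairing is, if anything, more economical. Second --- and this is precisely the step you flag as delicate but do not execute --- the paper's entropy verification does not argue via ``differentiable one-parameter families $\theta\mapsto X^T\beta_\theta$'' at all: it uses the identity $x\beta_\theta=\E[\Lambda_\theta(Y)\,|\,X=x]$, so the class of regression functions is the class of conditional expectations of the VC class $\mathcal{F}_\Lambda$, whose uniform entropy is transferred from that of $\mathcal{F}_\Lambda$ by Lemma L.2 in the supplement to \cite{belloni2017program}; the squares and products $\mathcal{E}^2_\Lambda$ and $\mathcal{E}_\Lambda\dot{\mathcal{E}}_\Lambda$ are then handled by the preservation result Lemma L.1 there, which is exactly what makes your ``products of VC-type classes remain VC-type'' assertion rigorous. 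A smoothness-in-$\theta$ argument could also be made to work (with Lipschitz envelope $\dot F_\Lambda+\E[\dot F_\Lambda|X]$), but it is a bracketing-type route, not a VC-subgraph one, and it is not what the paper does.

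One imprecision in your sketch needs repair in any case: the envelope $(F_\Lambda(Y)+C)^2$ is not valid, since $\sup_\theta|x\beta_\theta|$ is bounded by $\E[F_\Lambda\,|\,X=x]$, which need not be bounded by a constant on $\mathcal{X}$. The correct envelopes, as in the paper, are $\left(F_\Lambda+\E[F_\Lambda|X]\right)^2$ and $\left(F_\Lambda+\E[F_\Lambda|X]\right)\left(\dot F_\Lambda+\E[\dot F_\Lambda|X]\right)$; the moment requirements then follow from \ref{A6} by Jensen and H\"older, exactly as carried out in the paper (note also that the bound $\sigma^2_\theta\le C$ you attribute to Lemma \ref{identification} is in fact established in the proof of Theorem \ref{Uniformlasso}). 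With these two repairs your argument is complete; no step in it would fail.
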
\ \\
\subsection{Entropy condition}\label{entropysec}
At first we define the following classes of functions 
\begin{align*}
\tilde{\mathcal{H}}_1&:=\left\{\tilde{h}_1:\Theta\times \mathcal{X}\to \mathbb{R}|\ \tilde{h}_1(\theta,x)=x\tilde{\beta}_\theta,\|\tilde{\beta}_\theta\|_0\le Cs, \|\tilde{\beta}_\theta-\beta_\theta\|_1\le Cn^{-\frac{1}{4}}\right\}\\
\tilde{\mathcal{H}}_2&:=\left\{\tilde{h}_2:\Theta\to \mathbb{R}^+|\ |\tilde{h}_2(\theta)-\sigma^2_\theta|\le Cn^{-\frac{1}{4}}\right\}\\
\tilde{\mathcal{H}}_3&:=\left\{\tilde{h}_3:\Theta\times \mathcal{X}\to \mathbb{R}|\ \tilde{h}_3(\theta,x)=x\tilde{\beta}_\theta,\|\tilde{\beta}_\theta\|_0\le Cs, \|\tilde{\beta}_\theta-\dot{\beta}_\theta\|_1\le Cn^{-\frac{1}{4}}\right\}\\
\tilde{\mathcal{H}}_4&:=\left\{\tilde{h}_4:\Theta\to \mathbb{R}|\ |\tilde{h}_4(\theta)-\dot\sigma^2_\theta|\le Cn^{-\frac{1}{4}}\right\}
\end{align*}
and 
$$\tilde{\mathcal{H}}:=\tilde{\mathcal{H}}_1\times \tilde{\mathcal{H}}_2\times \tilde{\mathcal{H}}_3\times \tilde{\mathcal{H}}_4.$$
The set $\tilde{\mathcal{H}}$ is called the nuisance realization set. Theorems \ref{Uniformlasso1} and \ref{varianceest} enable us to choose constants $C$ independent from $\theta$ and still contain the estimated functions in $\tilde{\mathcal{H}}$ with probability $1-o(1)$.\\
Furthermore, for an arbitrary but fixed $\theta\in\Theta$  we define the following projections 
\begin{align*}
\tilde{\mathcal{H}}_1(\theta)&:=\left\{\tilde{h}_1:\mathcal{X}\to \mathbb{R}|\ \tilde{h}_1(x)=\tilde{h}_1(\theta,x)\in\tilde{\mathcal{H}}_1\right\}\\
\tilde{\mathcal{H}}_2(\theta)&:=\left\{c\in\mathbb{R}^+\big|\ |c-\sigma_\theta^2|\le Cn^{-1/4}\right\}
\end{align*}
and $\tilde{\mathcal{H}}_3(\theta)$, $\tilde{\mathcal{H}}_4(\theta)$, respectively, $\tilde{\mathcal{H}}(\theta)$, analogously.\\
We restrict the entropy of $\tilde{\mathcal{H}}(\theta)$ uniformly over $\theta$ to use the maximal inequality stated in Theorem 5.1 from Chernozhukov et al. (2014) \cite{chernozhukov2014gaussian}. This enables us to bound the empirical process in the proof of theorem \ref{generalmain} Step 1.
\begin{theorem}{}\label{entropy}
Under the assumptions \ref{A5}, \ref{A6}, and \ref{A9} the class of functions 
$$\Psi(\theta)=\Big\{(y,x)\mapsto \psi\big((y,x),\theta,\tilde{h}(\theta,x)\big),\tilde{h}\in\tilde{\mathcal{H}}(\theta)\Big\}$$
has a measurable envelope $\bar\psi\ge \sup_{\psi\in \Psi(\theta)}|\psi|$ independent from $\theta$, such that for some $q\ge 4$
$$\E \Big[(\bar\psi(Y,X))^q\Big]\le C_1.$$
The class $\Psi(\theta)$ is pointwise measurable and uniformly for all $\theta\in\Theta$
$$\sup_{Q}\log N(\varepsilon||\bar\psi||_{Q,2},\Psi(\theta),L_2(Q))\le C_1 s\log\left(\frac{C_2(p\vee n)}{\varepsilon}\right)$$
with $C_1$ and $C_2$ being independent from $\theta$.
\end{theorem}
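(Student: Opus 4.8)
The plan is to fix $\theta\in\Theta$ and exploit that, for fixed $\theta$, the transformation $\Lambda_\theta(Y)$, its derivative $\dot\Lambda_\theta(Y)$, and the ratio $c_\theta$ are \emph{single} functions of $(Y,X)$, so that the entire complexity of $\Psi(\theta)$ is carried by the nuisance realizations $\tilde h=(\tilde h_1,\tilde h_2,\tilde h_3,\tilde h_4)\in\tilde{\mathcal H}(\theta)$. Reading off the definition of $\psi$, up to the fixed additive term $c_\theta$ the score is an algebraic combination (sums and products) of the four elementary classes
$$\big\{\Lambda_\theta(Y)-\tilde h_1\big\},\quad \big\{\dot\Lambda_\theta(Y)-\tilde h_3\big\},\quad \big\{1/\tilde h_2\big\},\quad \big\{\tilde h_4\big\}.$$
The strategy therefore reduces to (i) bounding the uniform entropy of each building block, (ii) combining them through the permanence properties of uniform covering numbers, and (iii) controlling the moments of the resulting envelope.

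For the building blocks, $\tilde h_1(\theta,\cdot)$ and $\tilde h_3(\theta,\cdot)$ range over classes of $Cs$-sparse linear functions $x\mapsto x\tilde\beta_\theta$. Each fixed-support linear class is VC-subgraph of index $O(s)$ and hence has uniform entropy $\lesssim s\log(1/\varepsilon)$; taking the union over the $\binom{p}{Cs}\le p^{Cs}$ admissible supports adds $\lesssim s\log p$, so $\tilde{\mathcal H}_1(\theta)$ and $\tilde{\mathcal H}_3(\theta)$ each have uniform entropy $\lesssim s\log(p/\varepsilon)$. The scalar classes $\tilde{\mathcal H}_2(\theta),\tilde{\mathcal H}_4(\theta)$ are bounded intervals (bounded away from zero for $\tilde h_2$ by \ref{A5}), with entropy $\lesssim\log(1/\varepsilon)$. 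Shifting by the fixed functions $\Lambda_\theta(Y),\dot\Lambda_\theta(Y)$ leaves entropy unchanged, so $\{\Lambda_\theta-\tilde h_1\}$ and $\{\dot\Lambda_\theta-\tilde h_3\}$ inherit the sparse-linear bound, with envelopes dominated by $F_\Lambda+|m_\theta|$ and $\dot F_\Lambda+|\dot m_\theta|$ up to a design perturbation $|x(\tilde\beta_\theta-\beta_\theta)|\le\|X\|_\infty\,Cn^{-1/4}$ that is uniformly bounded via \ref{A4}.

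I would then invoke the standard permanence lemma for uniform entropy: for classes with measurable envelopes, the uniform entropy of a sum (resp.\ product) is bounded by the sum of the individual uniform entropies, with envelope the sum (resp.\ product) of the envelopes. Applying this term by term to $I$, $II$, and $III$ --- finite products of the four building blocks, with $1/\tilde h_2$ bounded by \ref{A5} --- and adding the fixed $c_\theta$, the total uniform entropy is $\lesssim s\log(p/\varepsilon)+\log(1/\varepsilon)$, which is absorbed into the claimed bound $C_1 s\log\!\big(C_2(p\vee n)/\varepsilon\big)$. Pointwise measurability of $\Psi(\theta)$ follows from continuity of $\psi$ in $\tilde h$ and separability of the sparse-linear parametrization. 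For the envelope moment bound, the resulting $\bar\psi$ is a polynomial in $F_\Lambda,\dot F_\Lambda,J_\Lambda$ with coefficients bounded through \ref{A5}; the binding term is $II\sim|\Lambda_\theta-\tilde h_1|\,|\dot\Lambda_\theta-\tilde h_3|$, whose $q$-th moment is controlled by Cauchy--Schwarz through the $2q$-th moments of $F_\Lambda$ and $\dot F_\Lambda$ (with $|m_\theta|,|\dot m_\theta|$ inheriting these moments by Jensen). Thus $\E[\dot F_\Lambda^{8}]<\infty$ from \ref{A6} is exactly what forces the choice $q=4$, while $\E[F_\Lambda^{14}]<\infty$ (\ref{A6}) and $\E[J_\Lambda^{6}]<\infty$ (\ref{A9}) are more than sufficient. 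All constants depend only on $s$, $p$, the uniform envelope moments of \ref{A6} and \ref{A9}, and the uniform variance lower bound of \ref{A5}, hence can be taken independent of $\theta$.

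The main obstacle I anticipate is the permanence step, where the effective Lipschitz factor of $\psi$ in the nuisance directions is not a constant but the unbounded function $\Lambda_\theta$ or $\dot\Lambda_\theta$. Bounding a product $\|F\cdot(f-g)\|_{Q,2}$ of an envelope against a difference requires the reweighting device in the product permanence lemma (passing from $Q$ to $dQ'\propto F^2\,dQ$ and re-expressing the covering radius in envelope-normalized units); it is precisely while tracking these envelope-normalized radii that one must verify the product structure does not inflate the required moment order beyond $q=4$, i.e.\ that the $8$-th moment budget of $\dot F_\Lambda$ in \ref{A6} suffices.
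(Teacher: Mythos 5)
Your proposal is correct and follows essentially the same route as the paper's proof: the same decomposition of $\psi$ into the three algebraic blocks plus the fixed term $c_\theta$, the same union-over-$\binom{p}{Cs}$-supports VC-subgraph bound giving entropy $\lesssim s\log(p/\varepsilon)$ for the sparse linear classes, the same interval/reciprocal treatment of $\tilde h_2,\tilde h_4$ (with the lower bound from \ref{A5} converting the $n^{-1/4}$-ball around $\sigma^2_\theta$ into one around $1/\sigma^2_\theta$), the same sum/product permanence of uniform covering numbers (the paper's Lemma L.1 from the supplement to \cite{belloni2017program}, which is exactly the reweighting device you flag as the main obstacle), and the same $\theta$-independent envelope checked at $q=4$ with $\E[\dot F_\Lambda^8]<\infty$ as the binding moment. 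The only cosmetic difference is that you treat $\Lambda_\theta,\dot\Lambda_\theta$ as fixed shifts for each $\theta$, whereas the paper covers them through the VC-class entropies of $\mathcal{F}_\Lambda,\dot{\mathcal{F}}_\Lambda$ from \ref{A6}; both yield constants independent of $\theta$ (and both, like your argument, implicitly use the bounded support of $X$ for the envelope of the linear parts).
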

The motivation of the entropy condition stated in theorem \ref{entropy} is described in \ref{entropycomment}. The entropy condition and the results in subsection \ref{orthogonality} and subsection \ref{uniformsection} enable us to establish the asymptotic distribution of our estimated transformation parameter.
\subsection{Main Theorem}\label{main ref}
The main theorem provides that our estimator $\hat{\theta}$ converge with rate $1/\sqrt{n}$ and is asymptotic unbiased and normal. 
\begin{theorem}{} \label{main}
Under the assumptions \ref{A1}-\ref{A12} the estimator $\hat{\theta}$ in (\ref{estimator}) obeys
$$n^{\frac{1}{2}}(\hat{\theta}-\theta_0)\xrightarrow{\mathcal{D}}\mathcal{N}(0,\Sigma),$$
where $$\Sigma:=\E\Big[\Gamma^{-2}\psi^2\big((Y,X),\theta_0,h_0(\theta_0,X)\big)\Big]$$ with $\Gamma=\partial_\theta\E \Big[\psi\big((Y,X),\theta_0,h_0(\theta_0,X)\big)\Big]$.
\end{theorem}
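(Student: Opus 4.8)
The plan is to verify the hypotheses of the general Z-estimation theorem \ref{generalmain} of Appendix \ref{appendixC} and read off its conclusion; the mechanics are as follows. Abbreviate $\psi_0:=\psi\big((Y,X),\theta_0,h_0(\theta_0,X)\big)$ and $\bar\psi(\theta,h):=\E[\psi((Y,X),\theta,h)]$, so that $\bar\psi(\theta_0,h_0)=0$ by Lemma \ref{identification} and $\partial_\theta\bar\psi(\theta_0,h_0)=\Gamma>c_1>0$ by \ref{A11}. The target is the linear expansion
\[
\sqrt n\,\mathbb{E}_n[\psi_0] + \Gamma\sqrt n(\hat\theta-\theta_0) = o_p(1),
\]
from which, since $\sqrt n\,\mathbb{E}_n[\psi_0]=G_n(\psi_0)$ (because $\E[\psi_0]=0$) and $G_n(\psi_0)\xrightarrow{\mathcal D}\mathcal N(0,\E[\psi_0^2])$ by the Lindeberg central limit theorem (finite moments guaranteed by \ref{A6} and \ref{A9}), inverting $\Gamma$ gives $\sqrt n(\hat\theta-\theta_0)\xrightarrow{\mathcal D}\mathcal N(0,\Gamma^{-2}\E[\psi_0^2])=\mathcal N(0,\Sigma)$.

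First I would establish consistency $\hat\theta\xrightarrow{p}\theta_0$. By Theorems \ref{Uniformlasso1} and \ref{varianceest} the estimated nuisance $\hat h_0$ lies in the realization set $\tilde{\mathcal H}$ with probability $1-o(1)$, and the entropy bound of Theorem \ref{entropy} yields a uniform law of large numbers, $\sup_{\theta\in\Theta}\big|\mathbb{E}_n[\psi((Y,X),\theta,\hat h_0(\theta,X))]-\bar\psi(\theta,h_0(\theta))\big|=o_p(1)$, where the $L_2$-continuity \ref{A10}(i)--(ii) lets one pass from $\hat h_0$ to $h_0$. Combined with the identification inequality of \ref{A11} and the near-minimization defining $\hat\theta$ in (\ref{estimator}), a standard argmin argument gives consistency. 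Moreover, because $\theta\mapsto\mathbb{E}_n[\psi((Y,X),\theta,\hat h_0(\theta,X))]$ is continuous and changes sign near $\theta_0$ with probability $1-o(1)$ (its population limit has positive slope $\Gamma$ at its unique zero $\theta_0$), the infimal absolute value in (\ref{estimator}) is $o_p(n^{-1/2})$; together with $\epsilon_n=o(n^{-1/2})$ this forces $\sqrt n\,\mathbb{E}_n[\psi((Y,X),\hat\theta,\hat h_0(\hat\theta,X))]=o_p(1)$.

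Next I would decompose this quantity as
\begin{align*}
\sqrt n\,\mathbb{E}_n[\psi((Y,X),\hat\theta,\hat h_0(\hat\theta,X))] &= G_n(\psi_0) + \underbrace{G_n\big[\psi((Y,X),\hat\theta,\hat h_0(\hat\theta,X))-\psi_0\big]}_{\mathrm{(I)}}\\
&\quad + \underbrace{\sqrt n\,\bar\psi\big(\hat\theta,\hat h_0(\hat\theta)\big)}_{\mathrm{(II)}}.
\end{align*}
Term $\mathrm{(I)}$ is handled by stochastic equicontinuity: the maximal inequality of Chernozhukov et al. (2014) \cite{chernozhukov2014gaussian} applied to the class $\Psi(\theta)$ with the entropy bound $s\log(C_2(p\vee n)/\varepsilon)$ of Theorem \ref{entropy}, together with the $L_2$-moduli \ref{A10}(i)--(ii), consistency, and the rate $\|\hat h_0-h_0\|_2=o_p(n^{-1/4})$, shows the relevant $L_2$-radius shrinks fast enough --- using the growth condition \ref{A3} $s\log(p\vee n)=o(n^{1/4})$ --- so that $\mathrm{(I)}=o_p(1)$. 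Term $\mathrm{(II)}$ is where Neyman orthogonality enters: a second-order Taylor expansion (legitimate by the twice Gateaux-differentiability \ref{A12}) of $r\mapsto\bar\psi\big(\theta_0+r(\hat\theta-\theta_0),\,h_0+r(\hat h_0-h_0)\big)$ about $r=0$ has vanishing nuisance-directional first derivative $D_0[\hat h_0-h_0]=0$ by Lemma \ref{orthogonalityprop}, so $\mathrm{(II)}=\Gamma\sqrt n(\hat\theta-\theta_0)+\sqrt n\cdot\mathrm{Rem}$, where by \ref{A10}(iii)
\[
|\mathrm{Rem}|\le C\Big(|\hat\theta-\theta_0|^2+\sup_{\theta^*\in\Theta}\E\|\hat h_0(\theta^*,X)-h_0(\theta^*,X)\|_2^2\Big)=o_p(|\hat\theta-\theta_0|)+o_p(n^{-1/2}).
\]
Substituting and solving the resulting linear relation for $\sqrt n(\hat\theta-\theta_0)$ (the coefficient multiplying it being $\Gamma(1+o_p(1))$ with $\Gamma>c_1$) gives $\sqrt n(\hat\theta-\theta_0)=-\Gamma^{-1}G_n(\psi_0)+o_p(1)$, and the central limit theorem finishes the proof.

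I expect term $\mathrm{(I)}$ to be the main obstacle: one must control an empirical process evaluated at the random, data-dependent argument $(\hat\theta,\hat h_0)$, where $\hat h_0$ is itself a high-dimensional Lasso object living in the data-dependent but entropy-controlled set $\tilde{\mathcal H}$. The delicate points are to argue on the event $\{\hat h_0\in\tilde{\mathcal H}\}$ of probability $1-o(1)$, to localize the modulus of continuity so that the $L_2$-radius vanishes, and to check that the entropy growth of Theorem \ref{entropy} is slow enough relative to \ref{A3} for the maximal inequality to go to zero. This is precisely the role of the uniform-in-$\theta$ entropy bound and the nuisance realization set, and it is exactly because the nuisance function depends on the target parameter $\theta$ that the separate entropy treatment in Appendix \ref{appendixC} is needed rather than the standard argument.
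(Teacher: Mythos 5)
Your proposal is correct and follows essentially the same route as the paper: Theorem \ref{main} is proved there precisely by checking conditions \ref{C1}--\ref{C7} of the general Theorem \ref{generalmain}, using Lemma \ref{identification} for \ref{C1}, Lemma \ref{orthogonalityprop} with $\mathcal{H}'=\tilde{\mathcal{H}}$ for \ref{C2}, Theorems \ref{Uniformlasso1} and \ref{varianceest} for \ref{C4}, Theorem \ref{entropy} for \ref{C5}, assumption \ref{A10} together with the growth condition \ref{A3} and the structure of $\tilde{\mathcal{H}}$ (the $\ell_1$-bound on coefficients plus bounded support giving $\E\|\tilde h(\theta,X)-h_0(\theta,X)\|_2^2\le\rho_n^2$) for \ref{C6}, and \ref{A12}, \ref{A11} for \ref{C3}, \ref{C7}. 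Your sketch of the internals --- the decomposition into $G_n(\psi_0)$, a stochastic-equicontinuity term controlled by a maximal inequality under the uniform-in-$\theta$ entropy bound, and a population term linearized via Neyman orthogonality with a second-order remainder bounded by \ref{A10}(iii) --- matches Steps 1 and 4 of the paper's proof of Theorem \ref{generalmain}, including your correct identification of term $\mathrm{(I)}$ as the place where the $\theta$-dependence of the nuisance forces the separate entropy treatment of Appendix \ref{appendixC}.

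The one place where you genuinely deviate, and where your version has a soft spot, is the claim that the infimum in (\ref{estimator}) is $o_p(n^{-1/2})$ because the empirical map $\theta\mapsto\E_n[\psi((Y,X),\theta,\hat h_0(\theta,X))]$ is continuous and changes sign near $\theta_0$. Continuity of this map in $\theta$ requires continuity of the lasso path $\theta\mapsto\hat\beta_\theta$ (and of $\hat{\dot\beta}_\theta$, $\hat\sigma^2_\theta$, $\hat{\dot\sigma}^2_\theta$) with $\theta$-dependent penalty loadings $\Psi_\theta$, which is established nowhere and is not obvious --- lasso solutions and their active sets can jump as $\theta$ varies. The paper avoids this entirely in Step 2 of the proof of Theorem \ref{generalmain}: it defines the explicit point $\theta^*:=\theta_0-\Gamma^{-1}\E_n\big[\psi\big((Y,X),\theta_0,h_0(\theta_0,X)\big)\big]$, notes $|\theta^*-\theta_0|=O_p(n^{-1/2})\le C\rho_n$ by the central limit theorem, and applies the Step-1 linearization at $\theta^*$ to conclude that the empirical moment at $\theta^*$, hence the infimum over $\Theta$, is $o_p(n^{-1/2})$; no continuity in $\theta$ of the estimated criterion is needed. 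Note also that in the paper consistency at rate $\rho_n$ (Step 3) is derived \emph{after} the near-zero property, from \ref{C6}(i), the uniform maximal inequality over $\Psi(\theta)$, and the identification inequality \ref{C7}, rather than before it as in your ordering; replacing your sign-change argument by the $\theta^*$-construction and reordering accordingly closes the gap, and the rest of your argument then tracks the paper's proof.
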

A standard bootstrap can be applied to estimate the unknown variance $\Sigma$. Therefore, asymptotic level-$\alpha$ tests for null hypothesis can be constructed from theorem \ref{main}.


\section{Simulation}\label{simulation}

This section provides a simulation study of the proposed estimator. The data generating process is given by
\begin{align*}
\Lambda_{\theta_0}(Y)=X\beta_{\theta_0}+\varepsilon.
\end{align*}
The coefficients are set to $$\beta_{\theta_0,j}=\begin{cases}1 \quad\text{for }j\le s\\
0\quad\text{for }j>s.\end{cases}$$
Therefore, $\beta_{\theta_0}$ is a sparse vector with $\|\beta_{\theta_0}\|_0=s$ and $\{\Lambda_\theta:\theta\in\Theta\}$ is a given class of transformations.
The design matrix is simulated as $$X\sim\mathcal{N}\left(0,\Sigma^{(X)}\right)$$
for the three different correlation structures
$$\Sigma_0^{(X)}=I_p,$$ 
$$\Sigma_1^{(X)}=(c^{|i-j|})_{i,j\in\{1,\dots,p\}}$$
and
$$\Sigma_2^{(X)}=(1-c^p)I_p+(c^{p-|i-j|})_{i,j\in\{1,\dots,p\}}$$
with $c=0.35$. The error terms $\varepsilon\sim\mathcal{N}(0,\sigma^2)$ iid, where $\sigma^2$ is chosen according to the correlation matrix $\Sigma^{(X)}$ to keep the signal-to-noise ratio (SNR) at a fixed level. To obtain the simulated values for $Y_i$, which are used for the estimation of $\theta_0$, we apply the inverse transformation $\Lambda_{\theta_0}^{-1}$ onto the simulated values of $\Lambda_{\theta_0}(Y_i)$.\\
We consider different classes of transformations, correlation structures, and we vary the number of the regressors $p$ as well as the SNR. The SNR is defined as
$$SNR=\frac{Var(X\beta_0)}{Var(\varepsilon)}.$$
The sample size is $n=200$ and the sparsity index is $s=5$. The number of repetitions is set to $R=500$. The accuracy of the estimate $\hat{\theta}$ is measured by the mean-absolute-error (MAE) 
$$\text{MAE}=\frac{1}{R}\sum\limits_{h=1}^R|\hat{\theta}_h-\theta_0|.$$
The accuracy of the predictive performance is measured out-of-sample on an independent sample (testing sample) by the mean-squared-error  
$$\text{MSE}=\frac{1}{R}\sum\limits_{h=1}^{R}\E_{n_t}\left[\left(\Lambda_{\theta_0}(Y)-X\hat{\beta}_{\hat{\theta}_h}\right)^2\right].$$
The empirical expectation $\E_{n_t}[\cdot]$ is taken over by a new and independent sample of size $n_t=200$. Both measures MAE and MSE are based on the unknown transformation parameter $\theta_0$.\\
Additionally, for a fixed level $\alpha=0.05$, we validate the significance level (acceptance rate) of a test of the form 
$$H_0: \theta_0=\theta.$$
We test if a given $\theta\in\Theta$ is the right transformation parameter to guarantee normal distributed errors. Therefore, we estimate the unknown variance $\Sigma$ via bootstrap by drawing $k=100$ bootstrap samples and construct a $(1-\alpha)$-confidence interval of the form
$$\left[\theta-\sqrt{\hat{\Sigma}}z_{(1-\alpha/2)}, \theta+\sqrt{\hat{\Sigma}}z_{(1-\alpha/2)}\right],$$
where $z_\gamma$ is the $\gamma$-quantile of the standard normal distribution. The empirical rejection rate is reported.\\
\subsection{Box-Cox-Transformations}
In the first setting, we analyze the class of Box-Cox-Transformations. The Box-Cox-Transformations are defined as 
\begin{align*}
\Lambda_\theta(y)=\begin{cases} \frac{y^{\theta}-1}{\theta}&\text{for $\theta\neq 0$}\\ \log(y)&\text{for $\theta=0$.}\end{cases}
\end{align*}
This class and the class of its derivatives with respect to the transformation parameter $\theta$ are VC classes by lemma \ref{VC}.
To ensure that we simulate non-negative observations, we restrict our simulation to the true transformation parameter $\theta_0=0$ and vary the number of regressors from $20$ to $100$. The results for the various settings are summarized in Tables \ref{bxcx1}--\ref{bxcx3}. 
In all three settings, the average of the estimator is close to the true value of $0$. The acceptance rate and relative MSE seem to be comparable for all three settings, while the second setting has the smallest MAE.
In summary, the results reveal that the estimated parameter value is, on average, close to the true one and that the the acceptance rate is close to the nominal level of $95\%$. \\ \\
  
\begin{table}[ht]
\centering
\begin{tabular}{cccccccc}
  \hline
p & n & $\theta_0$ & SNR & Estimator & Acceptance rate & MAE & rel. MSE \\ 
  \hline
20 & 200 & 0.0 & 1.0 & -0.00116535 & 0.946 & 0.0134 & 0.9801 \\ 
  20 & 200 & 0.0 & 3.0 & -0.00069033 & 0.954 & 0.0135 & 0.9805 \\ 
  50 & 200 & 0.0 & 1.0 & 0.00043417 & 0.946 & 0.0138 & 1.0529 \\ 
  50 & 200 & 0.0 & 3.0 & 0.00049114 & 0.938 & 0.0139 & 1.0535 \\ 
   \hline
\end{tabular}
\caption{Simulation for $\Sigma^{(X)}=I_p$}
\label{bxcx1} 
\end{table}

\begin{table}[ht]
\centering
\begin{tabular}{cccccccc}
  \hline
p & n & $\theta_0$ & SNR & Estimator & Acceptance rate & MAE & rel. MSE \\ 
  \hline
20 & 200 & 0.0 & 1.0 & -0.00087619 & 0.944 & 0.0102 & 0.9789 \\ 
  20 & 200 & 0.0 & 3.0 & -0.00049946 & 0.952 & 0.0103 & 0.9796 \\ 
  50 & 200 & 0.0 & 1.0 & 0.00026786 & 0.956 & 0.0104 & 1.0438 \\ 
  50 & 200 & 0.0 & 3.0 & 0.00033492 & 0.934 & 0.0105 & 1.0439 \\ 
   \hline
\end{tabular}
\caption{Simulation for $\Sigma^{(X)}=\Sigma_1^{(X)}$}
\label{bxcx2}  
\end{table}

\begin{table}[ht]
\centering
\begin{tabular}{cccccccc}
  \hline
p & n & $\theta_0$ & SNR & Estimator & Acceptance rate & MAE & rel. MSE \\ 
  \hline
20 & 200 & 0.0 & 1.0 & -0.00118608 & 0.952 & 0.0135 & 0.9836 \\ 
  20 & 200 & 0.0 & 3.0 & -0.00070107 & 0.954 & 0.0136 & 0.9842 \\ 
  50 & 200 & 0.0 & 1.0 & 0.00065337 & 0.950 & 0.0139 & 1.0532 \\ 
  50 & 200 & 0.0 & 3.0 & 0.00069861 & 0.956 & 0.0139 & 1.0543 \\ 
   \hline
\end{tabular}
\caption{Simulation for $\Sigma^{(X)}=\Sigma_2^{(X)}$}
\label{bxcx3}  
\end{table}
 
\begin{figure}[H]
 \label{fig1}
 \caption{Empirical distribution of the estimator}
 \centering
\begin{knitrout}
\definecolor{shadecolor}{rgb}{0.969, 0.969, 0.969}\color{fgcolor}
\includegraphics[width=\textwidth]{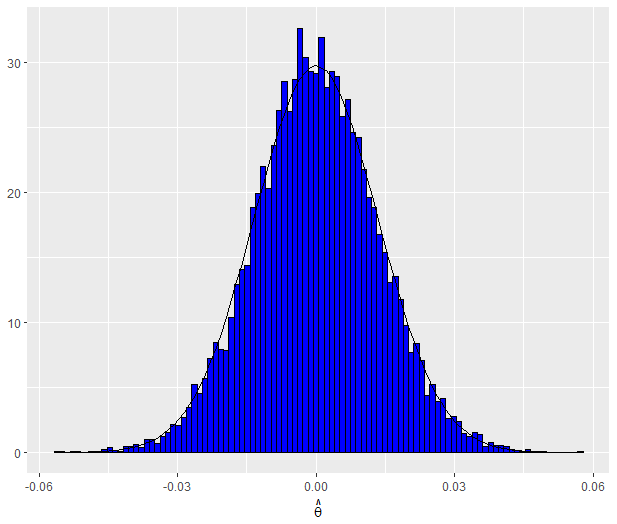} 

\end{knitrout}
\label{fignorm}
\end{figure}
Figure \ref{fignorm} shows the empirical distribution of $\hat{\theta}$ generated by $10000$ independent simulations from the last setting in Table \ref{bxcx2} with SNR=$1$. This confirms that our estimator is normally distributed.

\subsection{Yeo-Johnson Power Transformations}
Next, we consider the class of Yeo-Johnson Power Transformations. The Yeo-Johnson Power Transformations are defined as
\begin{align*}
\Lambda_\theta(y) = \begin{cases} \frac{(y+1)^\theta-1}{\theta}, &\text{for } y \ge 0,\theta\neq 0 \\
 \log(y+1), &\text{for } y\ge 0,\theta=0\\
 -\frac{(-y+1)^{2-\theta}-1}{2-\theta}, &\text{for } y<0,\theta\neq 2 \\
 -\log(-y+1), &\text{for } y<0,\theta=2.
 \end{cases}
\end{align*}
We set the true transformation parameter $\theta_0=1$ and vary the number of regressors from $20$ to $100$, again. The results are summarized in Tables \ref{yj1}--\ref{yj3}. We get similar patterns as under the Box-Cox transformation. We get similar results for acceptance rate, average estimator, and relative MSE for all three settings. The MAE of the second setting is smallest.

We can summarize that the test is close to the nominal level of $95\%$ and the transformation parameter is estimated accurately.\\

\begin{table}[ht]
\centering
\begin{tabular}{cccccccc}
  \hline
p & n & $\theta_0$ & SNR & Estimator & Acceptance rate & MAE & rel. MSE \\ 
  \hline
20 & 200 & 1.0 & 1.0 & 0.99797054 & 0.952 & 0.0339 & 0.9811 \\ 
  20 & 200 & 1.0 & 3.0 & 0.99905333 & 0.960 & 0.0310 & 0.9812 \\ 
  50 & 200 & 1.0 & 1.0 & 1.00162067 & 0.954 & 0.0350 & 1.0536 \\ 
  50 & 200 & 1.0 & 3.0 & 1.00134004 & 0.936 & 0.0325 & 1.0543 \\ 
   \hline
\end{tabular}
\caption{Simulation for $\Sigma^{(X)}=I_p$}
\label{yj1}  
\end{table}

\begin{table}[ht]
\centering
\begin{tabular}{cccccccc}
  \hline
p & n & $\theta_0$ & SNR & Estimator & Acceptance rate & MAE & rel. MSE \\ 
  \hline
20 & 200 & 1.0 & 1.0 & 0.99833234 & 0.954 & 0.0296 & 0.9795 \\ 
  20 & 200 & 1.0 & 3.0 & 0.99916908 & 0.960 & 0.0270 & 0.9802 \\ 
  50 & 200 & 1.0 & 1.0 & 1.00108300 & 0.944 & 0.0304 & 1.0438 \\ 
  50 & 200 & 1.0 & 3.0 & 1.00121914 & 0.934 & 0.0281 & 1.0450 \\ 
   \hline
\end{tabular}
\caption{Simulation for $\Sigma^{(X)}=\Sigma_1^{(X)}$}
\label{yj2} 
\end{table}

\begin{table}[ht]
\centering
\begin{tabular}{cccccccc}
  \hline
p & n & $\theta_0$ & SNR & Estimator & Acceptance rate & MAE & rel. MSE \\ 
  \hline
20 & 200 & 1.0 & 1.0 & 0.99783058 & 0.952 & 0.0341 & 0.9842 \\ 
  20 & 200 & 1.0 & 3.0 & 0.99895182 & 0.950 & 0.0313 & 0.9850 \\ 
  50 & 200 & 1.0 & 1.0 & 1.00180567 & 0.948 & 0.0351 & 1.0539 \\ 
  50 & 200 & 1.0 & 3.0 & 1.00181213 & 0.942 & 0.0327 & 1.0552 \\ 
   \hline
\end{tabular}
\caption{Simulation for $\Sigma^{(X)}=\Sigma_2^{(X)}$}
\label{yj3} 
\end{table}

\newpage

\section{Application}\label{application}
\subsection{Econometric Specification of the Wage Equation}\label{wagemodel}
In labor economics, the analysis of wage data is key. In addition, labor economics aims to identify the determinants of wages, to estimate a so-called Mincer equation, and evaluate the impact of labor market programs on wages. Wages are non-negative and show a high-degree of skewness, which is not compatible with a normal distribution. Hence, wages are transformed in almost all studies by the logarithm. Figure \ref{figwage} shows the weekly wage distribution from the US survey data, which are described in the next section.  Here, we focus on the estimation of a Mincer type equation. The Mincer equation formulates a relationship between log wages (W) and schooling (S), experience (Exp), and other control variables (X, p-dimensional):
\begin{align}
\log W = \alpha + \beta S + \gamma Exp + \delta Exp^2 + \mu' X +\varepsilon
\end{align}
with $\varepsilon\sim\mathcal{N}(0,\sigma^2)$. $\alpha, \beta, \gamma, \delta$ are coefficients and $\mu$ is a p-dimensional vector of the coefficients of the control variables. We consider a high-dimensional setting where the set of potential control variables is high and we do not take it for granted that the log-transformation is appropriate but instead estimate a transformation model and test for the transformation parameter. The model is given by
\begin{align}
\Lambda_{\theta_0}(W)= \alpha_{\theta_0} + \beta S_{\theta_0} + \gamma Exp_{\theta_0} + \delta Exp_{\theta_0}^2 + \mu_{\theta_0}' X +\varepsilon_{\theta_0}
\end{align}
with $\varepsilon_{\theta_0}\sim\mathcal{N}(0,\sigma^2)$.
\pagebreak

\begin{figure}[H]
 \label{fig2}
 \caption{Empirical wage distribution from the US survey data}
 \centering
\begin{knitrout}
\definecolor{shadecolor}{rgb}{0.969, 0.969, 0.969}\color{fgcolor}
\includegraphics[width=\maxwidth]{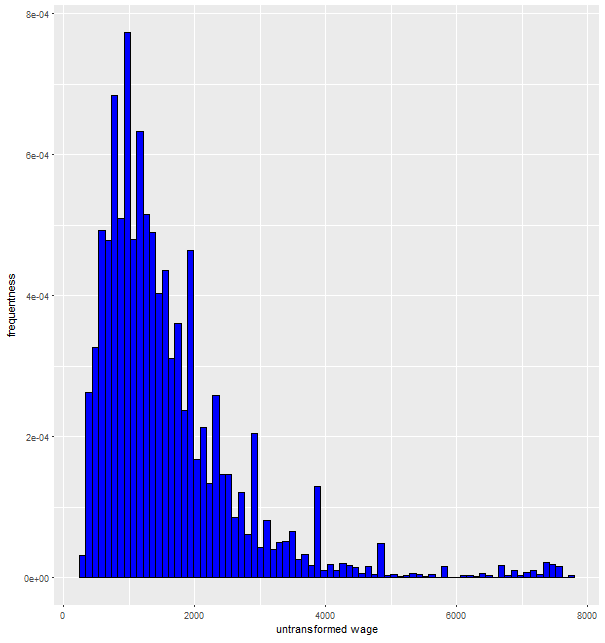} 

\end{knitrout}
\label{figwage}
\end{figure}

\subsection{Dataset}
\subsubsection{Overview}
In our empirical study, we use data from the 2015 American Community Survey (ACS), as provided by \cite{ipums} and extracted from the IPUMS-USA website\footnote{\url{https://usa.ipums.org/usa/}}. The ACS provides a 1\%-sample of the US population with mandatory participation. The data offers a large number of socio-economic characteristics at the individual and household level, such as education, industry, occupation, and earnings.\\
We restrict our attention to individuals who graduated from university and are working full time ($30+$ hours), at least $50$ weeks a year. Weekly earnings are computed as annual earnings divided by 52 (weeks). We exclude individuals with experience $> 60$ and age $>65$. Moreover, we discard individuals with a weekly wage of less than \$$10$ (which is likely to be unreasonable given that we only consider full-time employees). Then, we drop all observations with a weekly wage under the $2.5\%$-quantile and over the $97.5\%$-quantile. Our final sample comprises $315,291$  individual observations.\\
In our analysis, we use 14 initial regressors, which are either directly available from the ACS data or have been constructed. We list the variables in Table \ref{listofvars}. Mostly, we use the categories as provided in the ACS data. This might be particularly informative for the region, occupation, and industry variables for which different definitions exist. Moreover, we construct the variables 'years of education' and 'labor market experience' from the information available. We construct all of the two-way interactions of the initial regressors, where the categorical variables are transformed to level-wise dummies. Additionally, we include the variable 'field of degree' to account for the individual's educational background. Finally, we drop all of the constructed variables that are nearly constant over all observations and we end with a high-dimensional setting with a total of $1,743$ regressors.

\begin{table}[!ht]
\centering
\begin{tabular}{lrrrr}
  \hline 	  \\[0.8ex]
  Variable & Type &  Baseline Category  \\[0.8ex]
  \hline  \\[-0.8ex]
Female & binary &    \\ 
Marital status & six categories  & never married, single \\
Race & four categories &   White \\ 
English language skills & five categories &  speaks only English \\ 
Hispanic & binary &  \\
Veteran Status & binary  & \\ 
Industry & 14 categories  & wholesale trade\\
Occupation & 26 categories  & management, science, arts \\
Region (US census) & nine categories & New England division \\ 
Experience (years) & continuous & \\ 
Experience squared & continuous & \\
Years of Education & continuous & \\
Family Size & continuous &\\
Number of own young children & continuous & \\
Field of degree & 37 categories & administration, teaching \\
   \hline \\[0.8ex]
\end{tabular} 
\caption{List of Regressors} 
\label{listofvars}
\end{table}

\subsubsection{Descriptive Statistics}
Table \ref{summarystats} provides the summary statistics for a selection of the variables that are available in our final sample from the ACS data.  Figure $\ref{figwage}$ and the following descriptive statistics illustrate that the mean of weekly wage for university graduates is higher  than the median; hence, we have skewed data. Weekly wage is characterized by non-negativity and a high variability.\\
\begin{table}[!ht]
\centering
\begin{tabular}{lrrr}
  \hline 	  \\[0.8ex]
  Variable   & Mean & SD & Median \\[0.8ex]
  \hline  \\[-0.8ex]
Weekly wage & 1591.22 & 1100.20 & 1307.69\\ 
Experience (years) &  20.75 & 11.36& 21 \\ 
Years of Education &  16.91 & 1.24& 17\\ 
Female & 0.48 & 0.50 & - \\ 
White &  0.84 & 0.37 & -\\ 
Black/Negro &  0.07 & 0.25 & - \\ 
Chinese &  0.02 & 0.15 & -\\
Hispanic &  0.05 & 0.23 & -\\ 
Veteran Status &  0.05 & 0.21& -\\ 
 \hline \\[-0.8ex]
Sample Size & 315291 & & \\ 
   \hline \\[0.8ex]
\end{tabular} 
\caption{Summary Statistics, ACS Data} 
\label{summarystats}
\end{table}
\ \\
\subsection{Results}
The estimated transformation parameter is $\hat{\theta}=-0.1260646$. Because the confidence interval $[-0.1307524 ,-0.1213768 ]$ is based on the asymptotic normality of the estimate $\hat{\theta}$ and variance estimation via $300$ Bootstrap samples, we can reject the null hypothesis $\theta=0$ on a $5\%$ significance level, which is equivalent to a log transformation. In Figure \ref{quantiles}, we compare the Q-Q plot of the untransformed wages with the Q-Q plot of the transformed wages with our estimated parameter (with a normal distribution determined by the sample mean and sample variance) and with the Q-Q plot under log-transformation ($\theta=0$). The estimated errors without transformation are not normally distributed, whereas after the transformation of the response variable with $\hat{\theta}$ the estimated error term seem to fit a normal distribution quite well. Considering the transformation function, one can recognize that for a transformation parameter below zero the transformation has a stronger curvature (see figure \ref{transformationfunctions}). This implies that the wages are more positively skewed towards normal distribution after a log transformation. Although we reject the log-transformation, Figure \ref{quantiles} reveals that the log-transformation might give a reasonable approximation for applications in labor economics.\\

\begin{figure}[H]
 \caption{Comparison of the Q-Q plots}
 \centering
\begin{knitrout}
\definecolor{shadecolor}{rgb}{0.969, 0.969, 0.969}\color{fgcolor}
\includegraphics[width=\maxwidth]{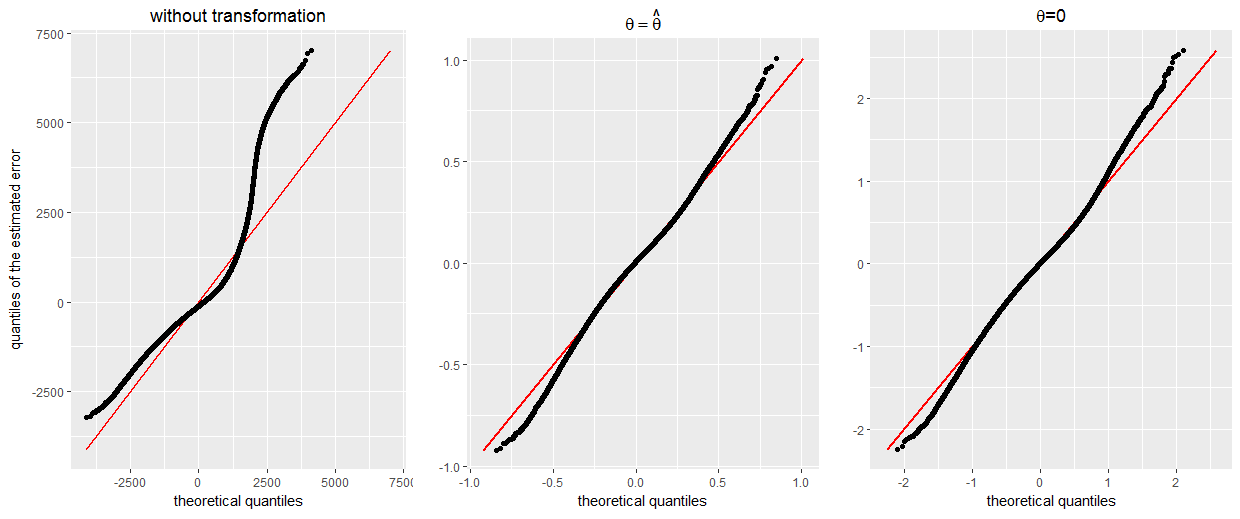} 
\end{knitrout}
\label{quantiles}
\end{figure}
 
\begin{figure}[H]
 \caption{Transformation function for $\theta=0$ (black) and $\theta=\hat{\theta}$ (red)}
 \centering
\begin{knitrout}
\definecolor{shadecolor}{rgb}{0.969, 0.969, 0.969}\color{fgcolor}
\includegraphics[width=\maxwidth]{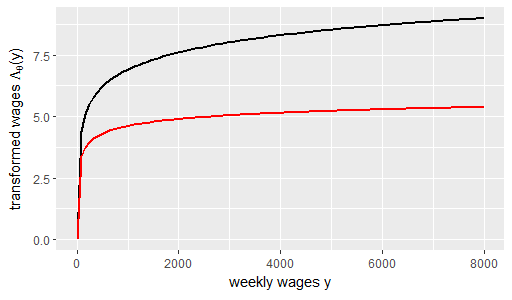} 
\end{knitrout}
\label{transformationfunctions}
\end{figure}

\pagebreak
\section{Conclusion}
In this paper, we propose an estimator for the transformation parameter in a high-dimensional setting. Transformation models, in particular the Box-Cox and Yeo-Johnson transformation, are very popular in applied statistics and econometrics. The rise of digitization has led to an increased availability of high-dimensional datasets and, hence, make it necessary to extend models for this setting when the number of variables $p$ is large (or even larger) compared to the sample size $n$. We build on the recent results on the Neyman orthogonality condition to prove the asymptotic normality of our estimator. The nuisance functions are estimated with lasso.

Our setting fits into a general Z-estimation problem with a high-dimensional nuisance function, which depends on the target parameter $\theta$. We extend the results in Belloni et al. (2014) \cite{belloni2014uniform} and Chernozhukov et al. (2017) \cite{chernozhukov2017double} to allow for an explicit dependence of the nuisance function on the target parameter $\theta$. This result might be of  interest for Z-estimation problems with the same structure.

In labor economics, wage is by default transformed with the logarithm. In our application, by analyzing US survey data we are able to show that the log-transformation is rejected on $5\%$ significance level but the log transformation might give an appropriate approximation.

In future research, we would like to address the problem of estimation and inference on elements of the coefficient vector of the regressors.

\newpage
\appendix
\section{Proofs}\label{appendixA}
\begin{proof}[Proof of Lemma \ref{orthogonalityprop}]\ \\
For the notation we refer to section \ref{identificationsec}. Let $h=(h_1,h_2,h_3,h_4)\in\mathcal{H}'$ arbitrary. First, we consider
\begin{align*}
&\quad\partial_r\Big(\Lambda_{\theta_0}(Y)-\big(m_{\theta_0}(X)+r\big(h_1(\theta_0,X)-m_{\theta_0}(X)\big)\big)\Big)\Big|_{r=0}\\&=m_{\theta_0}(X)-h_1(\theta_0,X)
\end{align*}
and analogous
\begin{align*}
&\quad\partial_r \Big(\dot \Lambda_{\theta_0}(Y)-\big(\dot m_{\theta_0}(X)+r\big(h_3(\theta_0,X)-\dot m_{\theta_0}(X)\big)\big)\Big)\Big|_{r=0}\\&=\dot m_{\theta_0}(X)-h_3(\theta_0,X).
\end{align*}
Additionally we have
$$\partial_r \left(\left(\sigma^2_{\theta_0}+r(h_2(\theta_0)-\sigma^2_{\theta_0})\right)^{-1}\right)\Big|_{r=0}=-\frac{h_2(\theta_0)-\sigma^2_{\theta_0}}{\left(\sigma^2_{\theta_0}\right)^2}$$
and 
$$\partial_r \left(\dot \sigma^2_{\theta_0}+r(h_4(\theta_0)-\dot\sigma^2_{\theta_0})\right)\Big|_{r=0}=h_4(\theta_0)-\dot\sigma^2_{\theta_0}.$$
By the product rule we obtain
\begin{align*}
&\quad\E\left[\partial_rI(\theta_0,\sigma^2_\theta+r(h_2-\sigma^2_\theta),\dot\sigma^2_\theta+r(h_4-\dot\sigma^2_\theta))|_{r=0}|X\right]\\&=\E \left[\frac{h_4(\theta_0)-\dot\sigma^2_{\theta_0}}{2\sigma^2_\theta}-\dot\sigma^2_\theta\frac{h_2(\theta_0)-\sigma^2_{\theta_0}}{2\left(\sigma^2_{\theta_0}\right)^2}\bigg|X\right]\\&=\frac{h_4(\theta_0)-\dot\sigma^2_{\theta_0}}{2\sigma^2_\theta}-\dot\sigma^2_\theta\frac{h_2(\theta_0)-\sigma^2_{\theta_0}}{2\left(\sigma^2_{\theta_0}\right)^2},
\end{align*}

\begin{align*}
&\E\left[\partial_rII(\theta_0,m_\theta +r(h_1-m_\theta),\sigma^2_\theta+r(h_2-\sigma^2_\theta),\dot m_\theta+r(h_1-\dot m_\theta))|_{r=0}|X\right]\\
=&\E \left[-\frac{h_2(\theta_0)-\sigma^2_{\theta_0}}{\left(\sigma^2_{\theta_0}\right)^2}\big(\Lambda_{\theta_0}(Y)-m_{\theta_0}(X)\big)\big(\dot\Lambda_{\theta_0}(Y)-\dot m_{\theta_0}(X)\big)\bigg|X\right]\\
&+\E \left[\frac{1}{\sigma_{\theta_0}^2}\big(m_{\theta_0}(X)-h_1(\theta_0,X)\big)\big(\dot\Lambda_{\theta_0}(Y)-\dot m_{\theta_0}(X)\big)\bigg|X\right]\\
&+\E \left[\frac{1}{\sigma_{\theta_0}^2}\big(\Lambda_{\theta_0}(Y)-m_{\theta_0}(X)\big)\big(\dot m_{\theta_0}(X)-h_3(\theta_0,X)\big)\bigg|X\right]\\
=&-\frac{h_2(\theta_0)-\sigma^2_{\theta_0}}{\left(\sigma^2_{\theta_0}\right)^2}\E \left[\varepsilon_{\theta_0}\dot \varepsilon_{\theta_0}\big|X\right]+\frac{m_{\theta_0}(X)-h_1(\theta_0,X)}{\sigma_{\theta_0}^2}\underbrace{\E \left[\dot \varepsilon_{\theta_0}\big|X\right]}_{=0}\\&+\frac{\dot m_{\theta_0}(X)-h_3(\theta_0,X)}{\sigma_{\theta_0}^2}\underbrace{\E \left[\varepsilon_{\theta_0}\big|X\right]}_{=0}\\
=&-\frac{h_2(\theta_0)-\sigma^2_{\theta_0}}{\left(\sigma^2_{\theta_0}\right)^2}\E \left[\varepsilon_{\theta_0}\dot \varepsilon_{\theta_0}\big|X\right],
\end{align*}
and
\begin{align*}
&\E\left[\partial_rIII(\theta_0,m_\theta +r(h_1-m_\theta),\sigma^2_\theta+r(h_2-\sigma^2_\theta),\dot\sigma^2_\theta+r(h_4-\dot\sigma^2_\theta))|_{r=0}|X\right]\\
=&\E \left[\frac{h_4(\theta_0)-\dot\sigma^2_{\theta_0}}{2\left(\sigma^2_{\theta_0}\right)^2}\big(\Lambda_{\theta_0}(Y)-m_{\theta_0}(X)\big)^2\bigg|X\right]\\&-\E \left[\dot \sigma_{\theta_0}^2\frac{h_2(\theta_0)-\sigma^2_{\theta_0}}{\left(\sigma^2_{\theta_0}\right)^3}\big(\Lambda_{\theta_0}(Y)-m_{\theta_0}(X)\big)^2\bigg|X\right]\\
&+\E \left[\frac{\dot \sigma_{\theta_0}^2}{\left(\sigma^2_{\theta_0}\right)^2}\big(\Lambda_{\theta_0}(Y)-m_{\theta_0}(X)\big)\big(m_{\theta_0}(X)-h_1(\theta_0,X)\big)\bigg|X\right]\\
=&\frac{h_4(\theta_0)-\dot\sigma^2_{\theta_0}}{2\left(\sigma^2_{\theta_0}\right)^2}\underbrace{\E \left[\big(\Lambda_{\theta_0}(Y)-m_{\theta_0}(X)\big)^2\big|X\right]}_{=\sigma_{\theta_0}^2}\\&-\dot \sigma_{\theta_0}^2\frac{h_2(\theta_0)-\sigma^2_{\theta_0}}{\left(\sigma^2_{\theta_0}\right)^3}\underbrace{\E \left[\big(\Lambda_{\theta_0}(Y)-m_{\theta_0}(X)\big)^2\big|X\right]}_{=\sigma_{\theta_0}^2}\\
&+\frac{\dot \sigma_{\theta_0}^2}{\left(\sigma^2_{\theta_0}\right)^2}\big(m_{\theta_0}(X)-h_1(\theta_0,X)\big)\underbrace{\E \left[\varepsilon_{\theta_0}\big|X\right]}_{=0}\\
=&\frac{h_4(\theta_0)-\dot\sigma^2_{\theta_0}}{2\sigma^2_{\theta_0}}-\dot \sigma_{\theta_0}^2\frac{h_2(\theta_0)-\sigma^2_{\theta_0}}{\left(\sigma^2_{\theta_0}\right)^2}.
\end{align*}
The conditions enable us to change derivation and integration, hence we obtain
\begin{align*}
&\quad D_0[h-h_0]\\&=\partial_r\Big\{\E\Big[\psi\Big((Y,X),\theta_0,h_0+r(h-h_0)\Big)\Big]\Big\}\Big|_{r=0}\\
&=\E\Big[\partial_r\psi\Big((Y,X),\theta_0,h_0+r(h-h_0)\Big)\Big|_{r=0}\Big]\\
&=\E\Big[\E\Big[\partial_r\psi\Big((Y,X),\theta_0,h_0+r(h-h_0)\Big)\Big|_{r=0}\Big|X\Big]\Big]\\
&=\E\bigg[-\E\bigg[\partial_rI(\theta_0,\sigma^2_\theta+r(h_2-\sigma^2_\theta),\dot\sigma^2_\theta+r(h_4-\dot\sigma^2_\theta))|_{r=0}|X\bigg]\\
&\quad-\E\bigg[\partial_rII(\theta_0,m_\theta +r(h_1-m_\theta),\sigma^2_\theta+r(h_2-\sigma^2_\theta),\dot m_\theta\\
&\quad+r(h_1-\dot m_\theta))|_{r=0}|X\bigg]+\E\bigg[\partial_rIII(\theta_0,m_\theta +r(h_1-m_\theta),\sigma^2_\theta+\\
&\quad r(h_2-\sigma^2_\theta),\dot\sigma^2_\theta+r(h_4-\dot\sigma^2_\theta))|_{r=0}|X\bigg]+\underbrace{\partial_r c_{\theta_0}|_{r=0}}_{=0}\bigg]\\
&=\E\bigg[-\frac{h_4(\theta_0)-\dot\sigma^2_{\theta_0}}{2\sigma^2_\theta}+\dot\sigma^2_\theta\frac{h_2(\theta_0)-\sigma^2_{\theta_0}}{2\left(\sigma^2_{\theta_0}\right)^2}+\frac{h_2(\theta_0)-\sigma^2_{\theta_0}}{\left(\sigma^2_{\theta_0}\right)^2}\E \left[\varepsilon_{\theta_0}\dot \varepsilon_{\theta_0}\big|X\right]\\
&\quad+\frac{h_4(\theta_0)-\dot\sigma^2_{\theta_0}}{2\sigma^2_{\theta_0}}-\dot \sigma_{\theta_0}^2\frac{h_2(\theta_0)-\sigma^2_{\theta_0}}{\left(\sigma^2_{\theta_0}\right)^2}\bigg]\\
&=0,
\end{align*}
where we used $\dot \sigma^2_{\theta_0}=2\E [\varepsilon_{\theta_0}\dot\varepsilon_{\theta_0}|X]$
in the last step.
\end{proof}
\newpage
\begin{proof}[Proof of Theorem \ref{Uniformlasso1}]\ \\
The assumptions \ref{A1}-\ref{A8} imply the conditions in theorem \ref{Uniformlasso} for the model \ref{model} and \ref{modeldot} choosing $\delta_n=O\left(n^{-1/4}\log(p\vee n)\right)$.\
\end{proof}
\begin{proof}[Proof of Theorem \ref{varianceest}]\ \\
As shown in the proof to Theorem \ref{Uniformlasso} the class of functions $$\mathcal{E}_{\Lambda}=\big\{\varepsilon_\theta=\left(\Lambda_{\theta}(\cdot)-\E\left[\Lambda_{\theta}(\cdot)|X\right]\right)|\theta\in\Theta\big\}$$
obeys
\begin{align*}
&\quad\log N(\varepsilon \|(F_{\Lambda}+F_{\Lambda}')\|_{\tilde{Q},2},\mathcal{E}_{\Lambda},L_2(\tilde{Q}))\\&\le \log N\left(\frac{\varepsilon}{2} \|F_{\Lambda}\|_{\tilde{Q},2},\mathcal{F}_{\Lambda},L_2(\tilde{Q})\right)+\log N\left(\frac{\varepsilon}{2} \|F'_{\Lambda}\|_{\tilde{Q},2},\mathcal{F}'_{\Lambda},L_2(\tilde{Q})\right)\\&\le \sup\limits_{Q}\log N\left(\frac{\varepsilon}{2} \|F_{\Lambda}\|_{Q,2},\mathcal{F}_{\Lambda},L_2(Q)\right)+\sup\limits_{Q'}\log N\left(\frac{\varepsilon}{2} \|F'_{\Lambda}\|_{Q',2},\mathcal{F}'_{\Lambda},L_2(Q')\right)\\&\le 2\sup\limits_{Q}\log N\left(\frac{\varepsilon^2}{16} \|F_{\Lambda}\|_{Q,2},\mathcal{F}_{\Lambda},L_2(Q)\right)\\&\le 4C_{\Lambda}'\log \left(4\sqrt{C_{\Lambda}''}/\varepsilon\right).
\end{align*}
Using Lemma 1 in the supplement to \cite{belloni2017program} we obtain with probabiliy $1-o(1)$
\begin{align*}
\sup\limits_{\theta\in\Theta}\frac{1}{n}\sum\limits_{i=1}^n|\varepsilon_{\theta,i}|\le \underbrace{\frac{1}{\sqrt{n}}\sup\limits_{g\in\mathcal{E}_{\Lambda}}G_n(g)}_{=O(\log(n)n^{-1/2})}+\underbrace{\sup\limits_{\theta\in\Theta}E[|\varepsilon_{\theta}|]}_{< C\ (\ref{B5})}=O(1).
\end{align*}
and
$$\sup\limits_{\theta\in\Theta}\frac{1}{n}\sum\limits_{i=1}^n\big(\varepsilon_{\theta,i}^2-\E [\varepsilon_\theta^2 ]\big)=\frac{1}{\sqrt{n}}\sup\limits_{g\in\mathcal{E}_{\Lambda}^2 }G_n(g)=O(\log(n)n^{-1/2})$$
as shown in the proof to Theorem \ref{Uniformlasso} with 
$$\mathcal{E}_{\Lambda}^2=\big\{\varepsilon_\theta^2=\left(\Lambda_{\theta}(\cdot)-\E\left[\Lambda_{\theta}(\cdot)|X\right]\right)^2|\theta\in\Theta\big\}.$$ 
We obtain with probabiliy $1-o(1)$
\begin{align*}
&\quad\sup\limits_{\theta\in\Theta}|\hat{\sigma}^2_{\theta}-\sigma^2_{\theta}|\\&=\sup\limits_{\theta\in\Theta}\bigg|\frac{1}{n}\sum\limits_{i=1}^n\Big(\underbrace{\Lambda_\theta(Y_i)-X_i^T\hat{\beta}_\theta}_{=\varepsilon_{\theta,i}-X_i^T\big(\hat{\beta}_\theta-\beta_\theta\big)}\Big)^2-\E [\varepsilon_\theta^2 ]\bigg|\\
&=\sup\limits_{\theta\in\Theta}\bigg|\frac{1}{n}\sum\limits_{i=1}^n\big(\varepsilon_{\theta,i}^2-\E [\varepsilon_\theta^2 ]\big)-\frac{2}{n}\sum\limits_{i=1}^n\varepsilon_{\theta,i}X_i^T\big(\hat{\beta}_\theta-\beta_\theta\big)\\&+\underbrace{\frac{1}{n}\sum\limits_{i=1}^n \Big(X_i^T\big(\hat{\beta}_\theta-\beta_\theta\big)\Big)^2}_{=||X^T(\hat{\beta}_\theta-\beta_\theta)||_{\mathbb{P}_{n,2}}^2}\bigg|\\
&\le \underbrace{\sup\limits_{\theta\in\Theta}\Big|\frac{1}{n}\sum\limits_{i=1}^n\big(\varepsilon_{\theta,i}^2-\E [\varepsilon_\theta^2 ]\big)\Big|}_{=O(\log(n)n^{-1/2})}+\underbrace{\sup\limits_{\theta\in\Theta}\Big|\frac{2}{n}\sum\limits_{i=1}^n\varepsilon_{\theta,i}X_i^T\big(\hat{\beta}_\theta-\beta_\theta\big)\Big|}_{\le 2K\sup\limits_{\theta\in\Theta}||\hat{\beta}_\theta-\beta_\theta||_1\frac{1}{n}\sum\limits_{i=1}^n|\varepsilon_{\theta,i}| }\\&+\underbrace{\sup\limits_{\theta\in\Theta}\Big| ||X^T(\hat{\beta}_\theta-\beta_\theta)||_{\mathbb{P}_{n,2}}^2\Big|}_{= O\Big(\frac{s\log(p\vee n)}{n}\Big)}\\
&\le 2K\underbrace{\sup\limits_{\theta\in\Theta}||\hat{\beta}_\theta-\beta_\theta||_1}_{= O\Big(\sqrt{\frac{s^2\log(p\vee n)}{n}}\Big)}\underbrace{\sup\limits_{\theta\in\Theta}\frac{1}{n}\sum\limits_{i=1}^n|\varepsilon_{\theta,i}|}_{= O(1)}+O\Big(\frac{s\log(p\vee n)}{n}\Big)+O(\log(n)n^{-1/2})\\
&=O\left(\max\left(\sqrt{\frac{s^2\log(p\vee n)}{n}},\frac{\log(n)}{n^{1/2}}\right)\right)=o\left(n^{-\frac{1}{4}}\right)
\end{align*}
with $K$ beeing a constant to bound the support of the regressors $X$.
As shown above and in the proof to Theorem \ref{Uniformlasso} the class of functions 
$$\dot{\mathcal{E}}_{\Lambda}=\big\{\dot\varepsilon_\theta=\left(\dot\Lambda_{\theta}(\cdot)-\E\left[\dot\Lambda_{\theta}(\cdot)|X\right]\right)|\theta\in\Theta\big\}$$
obeys
$$
\log N(\varepsilon \|(\dot F_{\Lambda}+\dot F_{\Lambda}')\|_{\tilde{Q},2},\dot{\mathcal{E}}_{\Lambda},L_2(\tilde{Q}))\le 4C_{\Lambda}'\log (4\sqrt{C_{\Lambda}''}/\varepsilon).
$$
As above using Lemma 1 in the supplement to \cite{belloni2017program} we obtain with probabiliy $1-o(1)$
$$\sup\limits_{\theta\in\Theta}\sum\limits_{i=1}^n|\dot{\varepsilon}_{\theta,i}|\le \underbrace{\sup\limits_{g\in\dot{\mathcal{E}}_{\Lambda}}G_n(g)}_{=O(\log(n)n^{-1/2})}+\underbrace{\sup\limits_{\theta\in\Theta}E[|\dot\varepsilon_{\theta}|]}_{< C}=O(1).$$
Since 
$$\E\left[\left(F_{\Lambda}+F_{\Lambda}'\right)^2\left(\dot F_{\Lambda}+\dot F_{\Lambda}'\right)^2\right]\le \left(\E\left[\left(F_{\Lambda}+F_{\Lambda}'\right)^4\right]\E\left[\left(\dot F_{\Lambda}+\dot F_{\Lambda}'\right)^4\right]\right)^{\frac{1}{2}}<\infty$$
an analogous argument to the one in the proof of \ref{Uniformlasso} applies and yields
$$\sup\limits_{\theta\in\Theta}\Big|\frac{1}{n}\sum\limits_{i=1}^n\big(\varepsilon_{\theta,i}\dot\varepsilon_{\theta,i} -\E [\varepsilon_\theta\dot\varepsilon_\theta  ]\big)\Big|=\frac{1}{\sqrt{n}}\sup\limits_{g\in\mathcal{E}_{\Lambda}\dot{\mathcal{E}_{\Lambda}} }\big|G_n(g)\big|=O(\log(n)n^{-1/2}).$$
We obtain with probabiliy $1-o(1)$
\begin{align*}
&\quad\sup\limits_{\theta\in\Theta}|\hat{\dot\sigma}^2_{\theta}-\dot\sigma^2_{\theta}|\\&=\sup\limits_{\theta\in\Theta}\bigg|\frac{2}{n}\sum\limits_{i=1}^n\Big(\underbrace{\Lambda_\theta(Y_i)-\hat{m}_\theta(X_i)}_{=\varepsilon_{\theta,i}-\big(\hat{m}_\theta(X_i)-m_\theta(X_i)\big)}\Big)\Big(\underbrace{\Lambda_\theta(Y_i)-\hat{\dot m}_\theta(X_i)}_{=\dot \varepsilon_{\theta,i}-\big(\hat{\dot m}_\theta(X_i)-\dot m_\theta(X_i)\big)}\Big)
\\&\quad-2\E [\varepsilon_\theta\dot\varepsilon_\theta ]\bigg|\\
&=2\sup\limits_{\theta\in\Theta}\bigg|\frac{1}{n}\sum\limits_{i=1}^n\big(\varepsilon_{\theta,i}\dot\varepsilon_{\theta,i} -\E [\varepsilon_\theta\dot\varepsilon_\theta  ]\big)-\frac{1}{n}\sum\limits_{i=1}^n\varepsilon_{\theta,i}\big(\hat{\dot m}_\theta(X_i)-\dot m_\theta(X_i)\big)\\
&\quad-\frac{1}{n}\sum\limits_{i=1}^n\dot\varepsilon_{\theta,i}\big(\hat{m}_\theta(X_i)-m_\theta(X_i)\big)\\&+\underbrace{\frac{1}{n}\sum\limits_{i=1}^n\big(\hat{m}_\theta(X_i)-m_\theta(X_i)\big)\big(\hat{\dot m}_\theta(X_i)-\dot m_\theta(X_i)\big)}_{\le\Big(\frac{1}{n}\sum\limits_{i=1}^n\big(\hat{m}_\theta(X_i)-m_\theta(X_i)\big)^2\Big)^{\frac{1}{2}}\Big(\frac{1}{n}\sum\limits_{i=1}^n\big(\hat{\dot m}_\theta(X_i)-\dot m_\theta(X_i)\big)^2\Big)^{\frac{1}{2}}}\bigg|\\
&\le 2\underbrace{\sup\limits_{\theta\in\Theta}\Big|\frac{1}{n}\sum\limits_{i=1}^n\big(\varepsilon_{\theta,i}\dot\varepsilon_{\theta,i} -\E [\varepsilon_\theta\dot\varepsilon_\theta  ]\big)\Big|}_{=O(\log(n)n^{-1/2})}+2\underbrace{\sup\limits_{\theta\in\Theta}\Big|\frac{1}{n}\sum\limits_{i=1}^n\varepsilon_{\theta,i}\big(\hat{\dot m}_\theta(X_i)-\dot m_\theta(X_i)\big)\Big|}_{\le 2K\sup\limits_{\theta\in\Theta}||\hat{\dot \beta}_\theta-\dot \beta_\theta||_1\frac{1}{n}\sum\limits_{i=1}^n|\varepsilon_{\theta,i}| }\\
&\quad +2\underbrace{\sup\limits_{\theta\in\Theta}\Big|\frac{1}{n}\sum\limits_{i=1}^n\dot\varepsilon_{\theta,i}\big(\hat{m}_\theta(X_i)-m_\theta(X_i)\big)\Big|}_{\le 2K\sup\limits_{\theta\in\Theta}||\hat{ \beta}_\theta- \beta_\theta||_1\frac{1}{n}\sum\limits_{i=1}^n|\dot\varepsilon_{\theta,i}| }\\
&\quad+\bigg(\underbrace{\sup\limits_{\theta\in\Theta}\Big| ||\big(\hat{m}_\theta(X_i)-m_\theta(X_i)\big)||_{\mathbb{P}_{n,2}}^2\Big|}_{= O\Big(\frac{s\log(p\vee n)}{n}\Big)}\underbrace{\sup\limits_{\theta\in\Theta}\Big| ||\big(\hat{\dot m}_\theta(X_i)-\dot m_\theta(X_i)\big)||_{\mathbb{P}_{n,2}}^2\Big|}_{= O\Big(\frac{s\log(p\vee n)}{n}\Big)}\Big)^{\frac{1}{2}}\\
&\le 2K\underbrace{\sup\limits_{\theta\in\Theta}||\hat{\dot \beta}_\theta-\dot \beta_\theta||_1}_{= O\Big(\sqrt{\frac{s^2\log(p\vee n)}{n}}\Big)}\underbrace{\sup\limits_{\theta\in\Theta}\frac{1}{n}\sum\limits_{i=1}^n|\varepsilon_{\theta,i}|}_{= O(1)}+2K\underbrace{\sup\limits_{\theta\in\Theta}||\hat{\beta}_\theta-\beta_\theta||_1}_{= O\Big(\sqrt{\frac{s^2\log(p\vee n)}{n}}\Big)}\underbrace{\sup\limits_{\theta\in\Theta}\frac{1}{n}\sum\limits_{i=1}^n|\dot\varepsilon_{\theta,i}|}_{= O(1)}\\
&\quad+O\Big(\frac{s\log(p\vee n)}{n}\Big)+O(\log(n)n^{-1/2})
\end{align*}
and therefore 
\begin{align*}
\quad\sup\limits_{\theta\in\Theta}|\hat{\dot\sigma}^2_{\theta}-\dot\sigma^2_{\theta}|=O\left(\max\left(\sqrt{\frac{s^2\log(p\vee n)}{n}},\frac{\log(n)}{n^{1/2}}\right)\right)=o\left(n^{-\frac{1}{4}}\right).
\end{align*}
\end{proof}
\newpage
\begin{proof}[Proof of Theorem \ref{entropy}]\ \\
The strategy of the proof is similar to the proof of theorem 1 from Belloni et al. (2014) \cite{belloni2014uniform}. Let $C,C_1$ and $C_2$ denote generic positive constants that may differ in each appearance, but do not depend on the sequence $P\in\mathcal{P}_n$.\\ 
For every $\theta\in\Theta$ the set $\tilde{\mathcal{H}}_1(\theta)$ consists of unions of $p$ choose $Cs$ sets, where the set of indices $\{i\in\{1,\dots,p\}:\beta_i\neq 0\}$ has cardinality not more than $Cs$ and therefore is a subset of a vector space with dimension $Cs$.
It follows that $\tilde{\mathcal{H}}_1(\theta)$ consists of unions of $p$ choose $Cs$ VC-subgraph classes $\tilde{\mathcal{H}}_{1,k}(\theta)$ with VC indices less or equal to $Cs+2$ (Lemma 2.6.15, Van der Vaart and Wellner (1996))\cite{vanweak}.\\
Using Theorem 2.6.7 in Van der Vaart and Wellner (1996) we obtain
\begin{align*}
&\quad\sup_{Q}\log N(\varepsilon\|\tilde{H}_1\|_{Q,2},\tilde{\mathcal{H}}_1(\theta),L_2(Q))\\&\le\sup_{Q}\log\Bigg(\sum\limits_{k=1}^{\binom{p}{Cs}}N(\varepsilon\|\tilde{H}_1\|_{Q,2},\tilde{\mathcal{H}}_{1,k}(\theta),L_2(Q))\Bigg)\\
&\le \sup_{Q}\log\Bigg( \underbrace{\binom{p}{Cs}}_{\le \big(\frac{e\cdot p}{Cs}\big)^{Cs}} K(Cs+2)(16e)^{Cs+2}\left(\frac{1}{\varepsilon}\right)^{2Cs+2}\Bigg)\\
&\le\log\Bigg( \left(\frac{e\cdot p}{Cs}\right)^{Cs} K(Cs+2)(16e)^{Cs+2}\left(\frac{1}{\varepsilon}\right)^{2Cs+2}\Bigg)\\
&\le Cs\log\Big(\frac{p}{\varepsilon}\Big)
\end{align*}
with $C$ beeing independent from $\theta$. Because
\begin{align*}
\sup\limits_{h_1(\theta)\in\tilde{\mathcal{H}}_1(\theta)}|h_1(\theta,x)|&\le\sup\limits_{\tilde{\beta}:\|\tilde{\beta}_\theta-\beta_\theta\|_1\le Cn^{-\frac{1}{4}}}|x\tilde{\beta}|\\
&\le\sup\limits_{\tilde{\beta}:\|\tilde{\beta}_\theta-\beta_\theta\|_1\le Cn^{-\frac{1}{4}}}|x\tilde{\beta}-x\beta_\theta|+|x\beta_\theta|\\
&\le KC+\E\left[F_\Lambda|X=x\right]=:\tilde{H}_1(x)
\end{align*}
the envelope $\tilde{H}_1$ can be choosen independent from $\theta$. Here and in the following we omit the dependence from $Y$ in $F_\Lambda\equiv F_\Lambda(Y)$ to simplify notation.\\
With the same argument we obtain
$$\sup_{Q}\log N(\varepsilon\|\tilde{H}_3\|_{Q,2},\tilde{\mathcal{H}}_3(\theta),L_2(Q)) \le Cs\log\Big(\frac{p}{\varepsilon}\Big)$$
with envelope $\tilde{H}_3(x):=KC+\E\left[\dot F_\Lambda|X=x\right]$.\\
Next we consider
\begin{align*}
\tilde{\mathcal{H}}_4(\theta):&=\left\{c\in\mathbb{R}\big|\ |c-\dot\sigma_\theta^2|\le Cn^{-1/4}\right\}=\left[\dot\sigma_\theta^2-Cn^{-1/4},\dot\sigma_\theta^2+Cn^{-1/4}\right]\\
&\subseteq\left[-(c+Cn^{-1/4}),(c+Cn^{-1/4})\right],
\end{align*}
where $c=\sup_{\theta\in\Theta}|\dot\sigma_\theta^2|<\infty$. This implies for all $\theta\in\Theta$
\begin{align*}
&\quad\sup_{Q}\log N(\varepsilon\|\tilde{H}_4\|_{Q,2},\tilde{\mathcal{H}}_4(\theta),L_2(Q))\\
&\le\sup_{Q}\log N\left(\varepsilon(c+C),\left[-(c+Cn^{-1/4}),c+Cn^{-1/4}\right],|\cdot|\right)\le \log\left(\frac{C}{\varepsilon}\right)
\end{align*}
with envelope $\tilde{H}_4=c+C$ and $C$ independet from $\theta$.\\ \\
Remark that $0<c_1=\inf_{\theta\in\Theta}\sigma_\theta^2$ and $c_2=\sup_{\theta\in\Theta}\sigma_\theta^2<\infty$ due to assumptions \ref{A5}-\ref{A6}. For $n$ sufficient large we find a $c_3$ with $0<c_3\le c_1-Cn^{-1/4}$. Therefore we can define
\begin{align*}
\bar{\mathcal{H}}_2(\theta):&=\left\{\frac{1}{\tilde{h}_2(\theta)}\big|\ \tilde{h}_2(\theta)\in\tilde{\mathcal{H}}_2(\theta)\right\}\\
&=\left\{1/c\big|\ |c-\sigma_\theta^2|\le Cn^{-1/4}\right\}\\
&=\left\{1/c\big|\ \frac{|c-\sigma_\theta^2|}{|c\sigma_\theta^2|}\le\frac{1}{|c\sigma_\theta^2|}Cn^{-1/4}\right\}\\
&\subseteq\left\{1/c\big|\ \frac{|c-\sigma_\theta^2|}{|c\sigma_\theta^2|}\le C^*n^{-1/4}\right\}\\
&=\left\{\bar{c}\big|\ |\bar{c}-1/\sigma_\theta^2|\le C^*n^{-1/4}\right\}\\
&=\left[1/\sigma_\theta^2-C^*n^{-1/4},1/\sigma_\theta^2+C^*n^{-1/4}\right]\\
&\subseteq\left[1/c_2-C^*n^{-1/4},1/c_1+C^*n^{-1/4}\right]
\end{align*}
with $C^*=\frac{C}{c_3c_1}$. Analogous as above we obtain for all $\theta\in\Theta$
\begin{align*}
\sup_{Q}\log N(\varepsilon\|\bar{H}_2\|_{Q,2},\bar{\mathcal{H}}_2(\theta),L_2(Q))\le \log\left(\frac{C}{\varepsilon}\right)
\end{align*}
with envelope $\bar{H}_2=1/c_2+C^*$ and $C$ independet from $\theta$. Define
\begin{align*}
I(\theta,\bar{\mathcal{H}}_2,\tilde{\mathcal{H}}_4)&:=\left\{-\frac{1}{2}h_4(\theta)h_2(\theta)|\ h_4(\theta)\in\tilde{\mathcal{H}}_4(\theta),h_2(\theta)\in\bar{\mathcal{H}}_2(\theta)\right\},
\end{align*}
\begin{align*}
&\quad II(\theta,\tilde{\mathcal{H}}_1,\bar{\mathcal{H}}_2,\tilde{\mathcal{H}}_3)\\&:=\Bigg\{(y,x)\mapsto-h_2(\theta)\left(\Lambda_\theta(y)-h_1(\theta,x)\right)\left(\dot\Lambda_\theta(y)-h_3(\theta,x)\right)\\&|\ h_1(\theta)\in\tilde{\mathcal{H}}_1(\theta),h_2(\theta)\in\bar{\mathcal{H}}_2(\theta),h_3(\theta)\in\tilde{\mathcal{H}}_3(\theta)\Bigg\}
\end{align*}
and
\begin{align*}
III(\theta,\tilde{\mathcal{H}}_1,\bar{\mathcal{H}}_2,\tilde{\mathcal{H}}_4)&:=\Bigg\{(y,x)\mapsto \frac{1}{2}h_2^2(\theta)h_4(\theta)\left(\Lambda_\theta(y)-h_1(\theta,x)\right)^2\\
&\quad\quad\quad|\ h_1(\theta)\in\tilde{\mathcal{H}}_1(\theta),h_2(\theta)\in\bar{\mathcal{H}}_2(\theta),h_4(\theta)\in\tilde{\mathcal{H}}_4(\theta)\Bigg\}.
\end{align*}
By Lemma L.1 in the supplement to \cite{belloni2017program} we have
\begin{align*}
&\quad\log N\left(\varepsilon\|1/2\bar{H}_2\tilde{H}_4\|_{Q,2},I(\theta,\bar{\mathcal{H}}_2,\tilde{\mathcal{H}}_4),L_2(Q)\right)\\
&\le\log N\left(\frac{\varepsilon}{4}\|\bar{H}_2\|_{Q,2},\bar{\mathcal{H}}_2(\theta),L_2(Q)\right)+\log N\left(\frac{\varepsilon}{4}\|\tilde{H}_4\|_{Q,2},\tilde{\mathcal{H}}_4(\theta),L_2(Q)\right)\\
&\le 2\log\left(\frac{C}{\varepsilon}\right).
\end{align*}
With \ref{A6} we obtain
\begin{align*}
&\quad\log N\bigg(\varepsilon\|\bar{H}_2(F_\Lambda+\tilde{H}_1)(\dot F_\Lambda+\tilde{H}_3)\|_{Q,2},II(\theta,\tilde{\mathcal{H}}_1,\bar{\mathcal{H}}_2,\tilde{\mathcal{H}}_3),L_2(Q)\bigg)\\
&\le\log N\left(\frac{\varepsilon}{2}\|\bar{H}_2(\theta)\|_{Q,2},\bar{\mathcal{H}}_2(\theta),L_2(Q)\right)\\
&\quad+\log N\left(\frac{\varepsilon}{4}\|(F_\Lambda+\tilde{H}_1)\|_{Q,2},\mathcal{F}_\Lambda-\tilde{\mathcal{H}}_1(\theta),L_2(Q)\right)\\
&\quad+\log N\left(\frac{\varepsilon}{4}\|(\dot{F}_\Lambda+\tilde{H}_3)\|_{Q,2},\dot{\mathcal{F}}_\Lambda-\tilde{\mathcal{H}}_3(\theta),L_2(Q)\right)\\
&\le \log\left(\frac{2C}{\varepsilon}\right)+\log N\left(\frac{\varepsilon}{8}\|F_\Lambda\|_{Q,2},\mathcal{F}_\Lambda,L_2(Q)\right)\\
&\quad+\log N\left(\frac{\varepsilon}{8}\|\dot{F}_\Lambda\|_{Q,2},\dot{\mathcal{F}}_\Lambda,L_2(Q)\right)\\
&\quad+\log N\left(\frac{\varepsilon}{8}\|\tilde{H}_1\|_{Q,2},\tilde{\mathcal{H}}_1(\theta),L_2(Q)\right)\\
&\quad+\log N\left(\frac{\varepsilon}{8}\|\tilde{H}_3\|_{Q,2},\tilde{\mathcal{H}}_3(\theta),L_2(Q)\right)\\
&\le \log\left(\frac{2C}{\varepsilon}\right)+ C_{\Lambda}'\log (8C_{\Lambda}''/\varepsilon)+\dot{C}_{\Lambda}'\log (8\dot{C}_{\Lambda}''/\varepsilon)+Cs\log\Big(\frac{8p}{\varepsilon}\Big)\\
&\quad+Cs\log\Big(\frac{8p}{\varepsilon}\Big)\\
&\le C_1s\log\Big(\frac{C_2 p}{\varepsilon}\Big)
\end{align*}
and with an analogous argument
\begin{align*}
&\quad\log N\left(\varepsilon\|\frac{1}{2}\bar{H}^2_2\tilde{H}_4(F_\Lambda+\tilde{H}_1)^2\|_{Q,2},III(\theta,\tilde{\mathcal{H}}_1,\bar{\mathcal{H}}_2,\tilde{\mathcal{H}}_4),L_2(Q)\right)\\&\le C_1s\log\Big(\frac{C_2 p}{\varepsilon}\Big).
\end{align*}
Because
$$\Psi(\theta)=I(\theta,\bar{\mathcal{H}}_2,\tilde{\mathcal{H}}_4)+II(\theta,\tilde{\mathcal{H}}_1,\bar{\mathcal{H}}_2,\tilde{\mathcal{H}}_3)+III(\theta,\tilde{\mathcal{H}}_1,\bar{\mathcal{H}}_2,\tilde{\mathcal{H}}_4)+c_\theta$$
we can define the envelope
\begin{align*}
\bar\psi(Y,X)&:=\frac{1}{2}\bar{H}_2\tilde{H}_4+\bar{H}_2(F_\Lambda+\tilde{H}_1)(\dot F_\Lambda+\tilde{H}_3)\\
&\quad+\frac{1}{2}\bar{H}^2_2\tilde{H}_4(F_\Lambda+\tilde{H}_1)^2+J_\Lambda,
\end{align*}
which is independent from $\theta$ with
\begin{align*}
&\quad\E\left[\left(\bar\psi(Y,X)\right)^4\right]\\&=\E\left[\left(\frac{1}{2}\bar{H}_2\tilde{H}_4+\bar{H}_2(F_\Lambda+\tilde{H}_1)(\dot F_\Lambda+\tilde{H}_3)+\frac{1}{2}\bar{H}^2_2\tilde{H}_4(F_\Lambda+\tilde{H}_1)^2+J_\Lambda\right)^4\right]\\
&<\infty
\end{align*}
where we used \ref{A6} and \ref{A9}. Additionally by using $N(\varepsilon||J_\Lambda||_{Q,2},c_\theta,L_2(Q))=1$ for all $\theta\in\Theta$ and Lemma L.1 in the supplement to \cite{belloni2017program} we obtain 
$$\sup_{Q}\log N(\varepsilon||\bar\psi||_{Q,2},\Psi(\theta),L_2(Q))\le C_1 s\log\left(\frac{C_2(p\vee n)}{\varepsilon}\right),$$
where the supremum is taken over all probability measures $Q$ with\\ $\E_Q\left[\left(\bar\psi(Y,X)\right)^2\right]<\infty$.
\end{proof}
\newpage

\begin{proof}[Proof of Theorem \ref{main}]\ \\
We demonstrate that conditions \ref{C1}-\ref{C7} from theorem \ref{generalmain} are satisfied. Most conditions are already proven in the preceding theorems.
The condition \ref{C1} is shown in lemma \ref{identification}. 
Due to theorem \ref{Uniformlasso1} and \ref{varianceest} condition \ref{C4} is satisfied with $\tilde{\mathcal{H}}$ and $\tilde{\mathcal{H}}(\theta)$ as defined in \ref{entropysec}. Condition \ref{C5} is proved in theorem \ref{entropy}.
Again, choosing $\mathcal{H}'=\tilde{\mathcal{H}}$ as defined in \ref{entropysec} the conditions in lemma \ref{orthogonalityprop} hold
where we used \ref{boundsigma} and the envelope in \ref{C5} which implies \ref{C2}.
Since conditions \ref{C3} and \ref{C7} are the same as \ref{A12} and \ref{A11}, we need to verify \ref{C6}. Due to condition \ref{A3}, choosing $\rho_n=o(n^{-1/4})$, we have
\begin{align*}
&\quad\sup\limits_{\theta\in\Theta,\tilde{h}\in\tilde{\mathcal{H}}(\theta)}|\E[\psi((Y,X)),\theta,h_0(\theta)]-\E[\psi((Y,X)),\theta,\tilde{h}(\theta)]|\\
&\le \sup\limits_{\theta\in\Theta,\tilde{h}\in\tilde{\mathcal{H}}(\theta)}\E\Big[\Big(\psi\big((Y,X),\theta,\tilde{h}(\theta,X)\big)-\psi\big((Y,X),\theta,h_0(\theta,X)\big)\Big)^2\Big]^{\frac{1}{2}}\\
&\le \sup\limits_{\theta\in\Theta,\tilde{h}\in\tilde{\mathcal{H}}(\theta)}C\E\left[\|\tilde{h}(\theta,X)-h_0(\theta,X)\|^{2}_2\right]^{\frac{1}{2}}\\
&\le C\rho_n,
\end{align*}
where we used \ref{A10} (ii) and
\begin{align*}
E\Big[\|\tilde{h}(\theta,X)-h_0(\theta,X)\|^{2}_2\Big]&=\E\big[(\tilde{h}_1(\theta,X)-m_\theta(X))^{2}\big]+\E\big[(\tilde{h}_2(\theta)-\sigma^2_\theta)^{2}\big]\\
&\quad+\E\big[(\tilde{h}_3(\theta,X)-\dot m_\theta(X))^{2}\big]+\E\big[(\tilde{h}_4(\theta)-\dot\sigma^2_\theta)^{2}\big]\\
&\le \rho_n^2.
\end{align*}
The last inequality follows from the properties of $\tilde{\mathcal{H}}$. We have
\begin{align*}
\E\big[(\tilde{h}_1(\theta,X)-m_\theta(X))^2\big]&=\E\left[\left(X(\tilde{\beta}_\theta-\beta_\theta)\right)^2\right]\\
&\le K^2\sup\limits_{\theta\in\Theta}||\tilde{\beta}_\theta-\beta_\theta||^2_{1}\\
&\le \rho_n^2,
\end{align*}
\begin{align*}
\E\big[(\tilde{h}_2(\theta)-\sigma^2_\theta)^{2}\big] 
&\le \rho_n^2
\end{align*}
and the same holds for the two remaining terms with an analogous argument. Therefore \ref{C6} (i) holds.\\ \\
In the following, we take the supremum over all $\theta$ with $|\theta-\theta_0|\le C\rho_n$ and $\tilde{h}\in\tilde{\mathcal{H}}(\theta)$, meaning
$$\sup\equiv\sup\limits_{\theta:|\theta-\theta_0|\le C\rho_n,\tilde{h}\in\tilde{\mathcal{H}}(\theta)}.$$
By \ref{A10} (i) and (ii), we have
\begin{align*}
&\quad\sup\E\Big[\Big(\psi\big((Y,X),\theta,\tilde{h}(\theta,X)\big)-\psi\big((Y,X),\theta_0,h_0(\theta_0,X)\big)\Big)^2\Big]^{1/2}\\
&= \sup\E\bigg[\Big(\psi\big((Y,X),\theta,\tilde{h}(\theta,X)\big)-\psi\big((Y,X),\theta,h_0(\theta,X)\big)\\
&\quad\quad+\psi\big((Y,X),\theta,h_0(\theta,X)\big)-\psi\big((Y,X),\theta_0,h_0(\theta_0,X)\big)\Big)^2\bigg]^{1/2}\\
&\le \sup\E\bigg[\Big(\psi\big((Y,X),\theta,\tilde{h}(\theta,X)\big)-\psi\big((Y,X),\theta,h_0(\theta,X)\big)\Big)^2\\
&\quad+\Big(\psi\big((Y,X),\theta,h_0(\theta,X)\big)-\psi\big((Y,X),\theta_0,h_0(\theta_0,X)\big)\Big)^2\\
&\quad+2\Big(\psi\big((Y,X),\theta,\tilde{h}(\theta,X)\big)-\psi\big((Y,X),\theta,h_0(\theta,x)\big)\Big)\\
&\quad\quad \Big(\psi\big((Y,X),\theta,h_0(\theta,X)\big)-\psi\big((Y,X),\theta_0,h_0(\theta_0,X)\big)\Big)\bigg]^{1/2}\\
&\le\sup C\bigg(E\Big[\|\tilde{h}(\theta,X)-h_0(\theta,X)\|^{2}_2\Big]+|\theta-\theta_0|^{2}\\
&\quad+|\theta-\theta_0|\sqrt{E\Big[\|\tilde{h}(\theta,X)-h_0(\theta,X)\|^{2}_2\Big]}\bigg)^{1/2}\\
&\le C\rho_n\\
&\le Cn^{-1/4}.
\end{align*}
Because of the growth condition \ref{A3} we have
\begin{align*}
n^{-1/4} s^{\frac{1}{2}}\log\Big(\frac{(p\vee n) }{n^{-1/4}}\Big)^{\frac{1}{2}}+n^{-\frac{1}{2}+\frac{1}{q}}s\log\Big(\frac{(p\vee n) }{n^{-1/4}}\Big)=o(1)
\end{align*}
and \ref{C6} (ii) follows.\\ \\
Condition $\ref{C6}$ (iii) follows directly from $\ref{A10}$ (iii): 
\begin{align*}
&\quad\sup\limits_{r\in(0,1)}\sup\left|\partial^2_r\bigg\{\E\Big[\psi\big((Y,X),\theta_0+r(\theta-\theta_0),h_0+r(\tilde{h}-h_0)\big)\Big]\bigg\}\right|\\
&\le \sup\limits_{\theta:|\theta-\theta_0|\le C\rho_n,\tilde{h}\in\tilde{\mathcal{H}}(\theta)}C\left(|\theta-\theta_0|^2+\sup\limits_{\theta^*\in\Theta}\E\Big[\|\tilde{h}(\theta^*,X)-h_0(\theta^*,X)\|^{2}_2\Big]\right)\\
&\le C\rho_n^2\\
&=o(n^{-1/2}).
\end{align*}
\end{proof}
\begin{proof}[Proof of lemma \ref{VC}]
\begin{remark}
The proof for Box-Cox-Transformations is from \cite{vanweak}, who refer to \cite{quiroz1996estimation}. It heavily relies on the properties of the dual density from \cite{assouad1983densite}. We give a detailed version of the proof of \cite{quiroz1996estimation} and extend the idea to the class of derivatives and Yeo-Johnson Power Transformations.
\end{remark}\ \\
Since adding a single function to a class of functions can increase the VC index at most by one, we exclude the parameter $\theta=0$ from the proof and restrict the class to
$$\mathcal{F}'_1=\big\{\Lambda_{\theta}(\cdot)|\theta\in\mathbb{R}\setminus\{0\}\big\}.$$
At first recall that $\mathcal{F}'_1$ is a VC class if and only if the between graph set 
$$\mathcal{C}:=\big\{C_\theta|\theta\in\mathbb{R}\setminus\{0\}\big\}$$
with
$$C_\theta:=\big\{(x,t)\in \mathbb{R}^+\times\mathbb{R}|0\le t\le \Lambda_\theta(x)\text{ or } \Lambda_\theta(x)\le t\le 0\big\}$$
is a VC class (cf. \cite{vanweak}, page 152).
We now consider the dual class (cf. \cite{assouad1983densite}) of $\mathcal{C}$ given by
$$\mathcal{D}:=\big\{D_{(x,t)}|(x,t)\in \mathbb{R}^+\times\mathbb{R}\big\}$$
with
\begin{align*}
D_{(x,t)}:&=\big\{\theta\in \mathbb{R}\setminus\{0\}|(x,t)\in C_\theta\big\}\\
&=\big\{\theta\in \mathbb{R}\setminus\{0\}|0\le t\le \Lambda_\theta(x)\text{ or } \Lambda_\theta(x)\le t\le 0\big\}.
\end{align*}
For the derivative of $\Lambda_\theta(x)$ we have
\begin{alignat*}{2}
&&\dot\Lambda_\theta(x)=\frac{1}{\theta^2}\left(\left(\theta \log(x)-1\right)x^\theta+1\right)&\ge 0\\
\Leftrightarrow&&\left(\theta \log(x)-1\right)x^\theta&\ge-1\\
\Leftrightarrow&& \log(x^\theta)&\ge\frac{x^\theta-1}{x^\theta},
\end{alignat*}
which is true for all $x$ and $\theta$.
Since $\Lambda_\theta(x)$ is continuous and monotone increasing in $\theta$ the set $D_{(x,t)}$ is the union of at most two intervals in $\mathbb{R}\setminus\{0\}$ and therefore $\mathcal{D}$ is a VC class, which by proposition 2.12 in \cite{assouad1983densite} implies that $\mathcal{C}$ is a VC class.\\
With the same argument as above we have to prove that
$$\mathcal{D}'=\big\{D'_{(x,t)}|(x,t)\in \mathbb{R}^+\times\mathbb{R}\big\}$$
is a VC class with
$$D'_{(x,t)}:=\big\{\theta\in \mathbb{R}\setminus\{0\}|0\le t\le \dot\Lambda_\theta(x)\big\}$$
since $\dot\Lambda_\theta(x)\ge 0$. The second derivative with respect to $\theta$ is given by
$$\ddot \Lambda_\theta(x)=\frac{1}{\theta^3}\Big(\underbrace{\left(\log(x^\theta)-1\right)^2x^\theta+x^\theta-2}_{=:f(x^\theta)}\Big).$$
The case $x=1$ directly implies $\ddot \Lambda_\theta(x)=0$. Substitue $z=x^\theta$ in $f(x^\theta)$ and see that
$$f'(z)=\left(\log(z)-1\right)^2+2\left(\log(z)-1\right)+1=\left(\log(z)\right)^2\ge 0.$$
This together with $f(1)=0$ implies $f(z)\ge 0$ for $z\ge 1$ and  $f(z)< 0$ for $z< 1$. The four cases
  $$\begin{array}{cl}
   \left.\begin{aligned}
     &x>1,\  \theta>0 \\
     0<&x<1,\  \theta<0 
   \end{aligned}\right\} &\Rightarrow x^\theta>1\\
   \left.\begin{aligned}
     &x>1,\ \theta<0 \\
     0<&x<1,\ \theta>0 
   \end{aligned}\right\} &\Rightarrow 0<x^\theta<1
 \end{array}$$
and the coefficient $1/\theta^3$ imply
$$\ddot \Lambda_\theta(x)=\begin{cases} \ge 0 \text{ for } x\ge 1\\ < 0 \text{ for } x\le 1.\end{cases}$$
We have that $\dot\Lambda_\theta(x)$ is continuous in $\theta$, monotone increasing for $x\ge 1$ and monotone decreasing for $x<1$. This again implies that the set $D_{(x,t)}$ is the union of at most two intervals in $\mathbb{R}\setminus\{0\}$.
We now consider the class of Yeo-Johnson Power Transformations
$$\mathcal{F}_2=\big\{\Psi_{\theta}(\cdot)|\theta\in\mathbb{R}\setminus\{0,2\}\big\},$$ 
where we exclude the parameters $\theta=0$ and $\theta=2$. The between graph set is given by
$$\tilde{\mathcal{C}}:=\big\{\tilde{C}_\theta|\theta\in\mathbb{R}\setminus\{0,2\}\big\}$$
with 
$$\tilde{C}_\theta:=\big\{(x,t)\in \mathbb{R}\times\mathbb{R}|0\le t\le \Psi_\theta(x)\text{ or } \Psi_\theta(x)\le t\le 0\big\}.$$
Since $\Psi_\theta(x)\ge 0$ for $x\ge 0$ and $\Psi_\theta(x)< 0$ for $x<0$ we have
\begin{align*}
\tilde{C}_\theta:&=\big\{(x,t)\in \mathbb{R}\times\mathbb{R}|0\le t\le \Psi_\theta(x)\text{ or } \Psi_\theta(x)\le t\le 0\big\}\\
&=\big\{(x,t)\in \mathbb{R}^+_0\times\mathbb{R}|0\le t\le \Psi_\theta(x)\big\}\cup\big\{(x,t)\in \mathbb{R}^-\times\mathbb{R}|\Psi_\theta(x)\le t\le 0\big\}\\
&=\underbrace{\big\{(x,t)\in \mathbb{R}^+_0\times\mathbb{R}|0\le t\le \Lambda_\theta(x+1)\big\}}_{=:\tilde{C}_{\theta,1}}\\&\quad\cup\underbrace{\big\{(x,t)\in \mathbb{R}^-\times\mathbb{R}|-\Lambda_{2-\theta}(-x+1)\le t\le 0\big\}}_{=:\tilde{C}_{\theta,2}}.
\end{align*}
The sets
$$\tilde{\mathcal{C}_1}:=\big\{\tilde{C}_{\theta,1}|\theta\in\mathbb{R}\setminus\{0,2\}\big\}\text{ and }\tilde{\mathcal{C}}_2:=\big\{\tilde{C}_{\theta,2}|\theta\in\mathbb{R}\setminus\{0,2\}\big\}$$
are VC-classes as shown above. Using Lemma 2.6.17 (iii) from \cite{vanweak} we obtain that
$$\tilde{\mathcal{C}}_1\sqcup \tilde{\mathcal{C}}_2=\big\{\tilde{C}_{\theta,1}\cup \tilde{C}_{\theta,2}|\tilde{C}_{\theta,1}\in\tilde{\mathcal{C}}_1,\tilde{C}_{\theta,2}\in\tilde{\mathcal{C}}_2\big\}$$ is a VC-class which contains $\tilde{\mathcal{C}}$. The proof for the class of derivatives can be shown analogously.
\end{proof}
\newpage

\section*{Supplementary Material}
\section{Uniform convergence rates for the lasso}\label{appendixB}
Assumptions \textbf{B1}-\textbf{B7}.\\
The following assumptions hold uniformly in $n\ge n_0,P\in\mathcal{P}_n$:
\begin{enumerate}[label=\textbf{B\arabic*},ref=B\arabic*]
\item\label{B1}
Uniformly in $\theta$ the model is sparse, namely $\sup\limits_{\theta\in\Theta}\|\beta_\theta\|_0\le s$
\item \label{B2}
The parameters obey the growth conditions  $n^{-1/4}\log(p\vee n)\le \delta_n$ and $s\log(p\vee n)\le \delta_nn$ for $\delta_n\searrow 0$ approaching zero from above at a speed at most polynomial in $n$.
\item \label{B3}
For all $n\in\mathbb{N}$ the regressor $X=(X_1,\dots,X_p)$ has a bounded support $\mathcal{X}$.
\item \label{B4}
Uniformly in $\theta$ the variance of the error term is bounded from zero 
    $$0<c\le \inf\limits_{\theta\in\Theta}\E\big[\varepsilon_\theta^2\big].$$
\item \label{B5}
The transformations are measurable and the class of transformations 
    $$\mathcal{F}_{\Lambda}:=\big\{\Lambda_{\theta}(\cdot)|\theta\in\Theta\big\}$$
    has VC index $C_{\Lambda}$ and an envelope $F_{\Lambda}$ with 
    $$\E [F_{\Lambda}(Y)^6]<\infty.$$ 
\item \label{B6}
The transformations are differentiable with respect to $\theta$ and the following condition holds:
    $$\sup\limits_{\theta\in\Theta}\E \Big[\big(\dot\Lambda_\theta(Y)\big)^2\Big]\le C.$$
\item \label{B7}
 With probability $1-o(1)$ the empirical minimum and maximum sparse eigenvalues are bounded from zero and above, namely
    \begin{align*}
		0<\kappa' &\le\inf\limits_{||\delta||_0\le s\log(n),||\delta||=1}||X^T\delta||_{\mathbb{P}_{n,2}}\\
		&\le\sup\limits_{||\delta||_0\le s\log(n),||\delta||=1}||X^T\delta||_{\mathbb{P}_{n,2}}\le\kappa''<\infty.
		\end{align*}
\end{enumerate}\ \\
\begin{theorem}\label{Uniformlasso}\ \\
Under assumptions \ref{B1}-\ref{B7} above, uniformly for all $P\in\mathcal{P}_n$ with prohability $1-o(1)$, it holds:
\begin{enumerate}
\item $\sup\limits_{\theta\in\Theta}||\hat{\beta}_\theta||_0=O(s)$.
\item $\sup\limits_{\theta\in\Theta}||X^T(\hat{\beta}_\theta-\beta_\theta)||_{\mathbb{P}_{n,2}}=O\left(\sqrt{\frac{s\log(p\vee n)}{n}}\right)$.
\item $\sup\limits_{\theta\in\Theta}||\hat{\beta}_\theta-\beta_\theta||_{1}=O\left(\sqrt{\frac{s^2\log(p\vee n)}{n}}\right)$.
\end{enumerate}
\end{theorem}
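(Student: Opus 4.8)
The plan is to reduce the uniform statement to the standard fixed-$\theta$ lasso oracle inequalities of Belloni et al.\ (2017) \cite{belloni2017program}, supplying as the one genuinely $\theta$-dependent ingredient a uniform control of the lasso score via empirical-process arguments built on the VC structure in \ref{B5}. For a fixed $\theta$ the estimator $\hat\beta_\theta$ from subsection \ref{nuissance} minimizes $\mathbb{E}_n[(\Lambda_\theta(Y)-X\beta)^2]+\frac{\lambda}{n}\|\Psi_\theta\beta\|_1$, and the entire oracle theory hinges on the event that the penalty level dominates the score, i.e.\ that $\lambda/n\ge c\,\|\mathbb{E}_n[\varepsilon_\theta X]\|_\infty$ for a slack constant $c>1$. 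The central task is therefore to show that, with probability $1-o(1)$,
$$\sup_{\theta\in\Theta}\max_{1\le j\le p}\big|\mathbb{E}_n[\varepsilon_\theta X_j]\big|=O\!\left(\sqrt{\tfrac{\log(p\vee n)}{n}}\right),$$
which then licenses the choice $\lambda\asymp\sqrt{n\log(p\vee n)}$ simultaneously for all $\theta$.

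To establish this uniform score bound I would first argue that the error class $\mathcal{E}_\Lambda=\{\varepsilon_\theta=\Lambda_\theta(\cdot)-\mathbb{E}[\Lambda_\theta(\cdot)\mid X]:\theta\in\Theta\}$ is of VC type: by \ref{B5} the class $\mathcal{F}_\Lambda$ has finite VC index and an $L_6$-integrable envelope $F_\Lambda$, and subtracting the conditional mean preserves a polynomial covering-number bound, so that $\log N(\varepsilon\|F\|_{Q,2},\mathcal{E}_\Lambda,L_2(Q))\lesssim\log(1/\varepsilon)$ uniformly in $Q$ (the computation is the one already carried out in the proof of Theorem \ref{varianceest}). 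Since \ref{B3} makes each $X_j$ bounded, the product classes $\{\varepsilon_\theta X_j:\theta\in\Theta\}$ inherit the same entropy bound with envelope proportional to $F_\Lambda$, and the moment condition $\mathbb{E}[F_\Lambda(Y)^6]<\infty$ supplies the integrability needed for a maximal inequality. Applying such an inequality to $\sup_\theta\max_j|G_n(\varepsilon_\theta X_j)|$ and combining the $\sqrt{\log p}$ cost of the maximum over $j$ with the $O(1)$ entropy of $\mathcal{E}_\Lambda$ yields the displayed rate after dividing by $\sqrt n$.

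With the uniform score bound in hand, I would close the argument by invoking the fixed-$\theta$ lasso bounds of \cite{belloni2017program} on the favorable event. The restricted/sparse eigenvalue input these results require is exactly \ref{B7}, which holds with probability $1-o(1)$ with constants $\kappa',\kappa''$ that do not depend on $\theta$; together with the uniformly valid penalty level this makes the constants in the oracle inequalities uniform in $\theta$. This gives the prediction-norm rate $\sup_\theta\|X^T(\hat\beta_\theta-\beta_\theta)\|_{\mathbb{P}_{n,2}}=O(\sqrt{s\log(p\vee n)/n})$ and, via the associated compatibility-type inequality, the $\ell_1$ rate $O(\sqrt{s^2\log(p\vee n)/n})$; the sparsity bound $\sup_\theta\|\hat\beta_\theta\|_0=O(s)$ follows from the standard lasso sparsity argument, again uniformized by \ref{B7} and the growth condition \ref{B2}.

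The main obstacle, and the only place where genuinely new work beyond citing \cite{belloni2017program} is required, is the uniform-in-$\theta$ empirical-process step: one must certify that $\mathcal{E}_\Lambda$ and the product classes are VC type with the right envelope moments, so that the maximal inequality applies with a bound that is uniform over the infinite index set $\Theta$ and over the $p$ coordinates simultaneously. Everything downstream is a routine transcription of single-$\theta$ lasso theory, made uniform precisely because \ref{B7} and the penalty level are $\theta$-free.
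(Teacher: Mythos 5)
Your proposal is correct in outline but takes a genuinely different route from the paper. The paper never re-derives lasso oracle inequalities and never bounds the score process itself: its entire proof consists of verifying Assumption 6.1 of Belloni et al. (2017) \cite{belloni2017program}, whose results already deliver rates uniform over a continuum of indices (there quantile indices, here $\theta$), including the treatment of the data-driven penalty loadings. Concretely, after checking sparsity, growth, boundedness and the eigenvalue condition, the paper establishes three empirical-process facts: $\sup_{\theta\in\Theta}|(\mathbb{E}_n-\E)\varepsilon_\theta^2\vee(\mathbb{E}_n-\E)\Lambda_\theta(Y)^2|=O(\delta_n)$, $n^{1/2}\sup_{|\theta-\theta'|\le 1/n}|\mathbb{E}_n[\varepsilon_\theta-\varepsilon_{\theta'}]|=O(\delta_n)$, and $\log(p\vee n)^{1/2}\sup_{|\theta-\theta'|\le 1/n}\mathbb{E}_n[(\varepsilon_\theta-\varepsilon_{\theta'})^2]^{1/2}=O(\delta_n)$; the last two are continuity-in-$\theta$ moduli over a $1/n$-net, proved via a mean-value expansion using \ref{B6} (giving $\E[(\varepsilon_\theta-\varepsilon_{\theta'})^2]=O(n^{-2})$) together with the same VC-entropy and maximal-inequality toolkit (Lemma 1 of \cite{belloni2014uniform}) that you invoke. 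You instead uniformize the fixed-$\theta$ oracle inequalities by hand through a single uniform score bound $\sup_{\theta\in\Theta}\max_{j\le p}|\mathbb{E}_n[\varepsilon_\theta X_j]|=O(\sqrt{\log(p\vee n)/n})$ over the union of $p$ VC-type product classes, after which everything downstream is deterministic on the intersection with the eigenvalue event of \ref{B7}. This is a legitimate alternative: it is more self-contained, it makes the net-continuity conditions (and hence \ref{B6}) unnecessary, and the envelope moment $\E[F_\Lambda(Y)^6]<\infty$ is ample for the maximal inequality. What your sketch underplays, and what the cited theorem handles internally, is the penalty loadings: the estimator penalizes $\|\Psi_\theta\beta\|_1$, so the score-domination event must read $\lambda/n\ge c\,\|\hat\Psi_\theta^{-1}\mathbb{E}_n[\varepsilon_\theta X]\|_\infty$, and you additionally need $\hat\Psi_{\theta,jj}$ bounded above and away from zero uniformly in $(\theta,j)$, i.e., a uniform law of large numbers for $\{X_j^2\varepsilon_\theta^2:\theta\in\Theta,\ j\le p\}$; this follows from \ref{B3}, \ref{B4} and the same entropy argument, so it is a fixable omission rather than a fatal gap, but it is exactly the kind of condition the paper's verification of its displayed condition (1) exists to supply.
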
\ \\

\begin{proof}
We verify the Assumption 6.1 from Belloni et al. (2017) \cite{belloni2017program}. Due to assumptions $\ref{B1}$ and $\ref{B2}$ the condition 6.1(i) is satisfied. Needless to say, the assumption 6.1(ii) holds for a compact $\Theta\subset\mathbb{R}$. Remark that assumption \ref{B5} implies the conditions 6.1 (iii)
which follows from \ref{boundsigma}. Due to assumption \ref{B3} the conditions in 6.1(iv)(a) are satisfied and we can omit the $X$ in the technical conditions in 6.1(iv)(b). The eigenvalue condition 6.1(iv)(c) is the same as in \ref{B7}. Therefore we have to show with probability $1-o(1)$:
\begin{itemize}
\item[(1)] $\sup\limits_{\theta\in\Theta}|(\mathbb{E}_n-\E )\varepsilon^2_\theta\vee(\mathbb{E}_n-\E )\Lambda_\theta(Y)^2|=O\left(\delta_n\right)$
\item[(2)] $n^{1/2}\sup\limits_{|\theta-\theta'|\le 1/n}|\mathbb{E}_n\left[
\varepsilon_\theta-\varepsilon_{\theta'}\right]|=O\left(\delta_n\right)$ and
\item[(3)] $\log(p\vee n)^{1/2}\sup\limits_{|\theta-\theta'|\le 1/n}\mathbb{E}_n\left[(
\varepsilon_\theta-\varepsilon_{\theta'})^2\right]^{1/2}=O\left(\delta_n\right).$
\end{itemize}
Because $\mathcal{F}_{\Lambda}$ is a VC-class of functions with VC index $C_{\Lambda}$, we have by Theorem 2.6.7 in \cite{vanweak}
\begin{align}\label{B.1}
\log N(\varepsilon \|F_{\Lambda}\|_{Q,2},\mathcal{F}_{\Lambda},L_2(Q))\le C_{\Lambda}'\log (C_{\Lambda}''/\varepsilon),
\end{align}
for any $Q$ with $ \|F_{\Lambda}\|_{Q,2}^2=\mathbb{E}_{Q}[F^2_{\Lambda}]<\infty$, where the constants $C_{\Lambda}'$ and $C_{\Lambda}''$ only depend on the VC index. Define
$$\mathcal{F}'_{\Lambda}:=\big\{\E\left[\Lambda_{\theta}(\cdot)|X\right]|\theta\in\Theta\big\}$$ with envelope $F'_{\Lambda}:=E[F_{\Lambda}|X]$ and $$\mathcal{E}_{\Lambda}^2:=\big\{\left(\Lambda_{\theta}(\cdot)-\E\left[\Lambda_{\theta}(\cdot)|X\right]\right)^2|\theta\in\Theta\big\}.$$
with envelope $\left(F_{\Lambda}+F'_\Lambda\right)^2$. By Lemma L.2 in the supplement to \cite{belloni2017program} we have
\begin{align}\label{B.2}
\sup\limits_{Q'}\log N(\varepsilon \|F'_{\Lambda}\|_{Q',2},\mathcal{F}'_{\Lambda},L_2(Q'))\le \sup\limits_{Q}\log N((\varepsilon/4)^2 \|F_{\Lambda}\|_{Q,2},\mathcal{F}_{\Lambda},L_2(Q)),
\end{align}
where the supremum on the left-hand side is taken over all probability measures $Q'$ with $$\|F'_{\Lambda}\|_{Q',2}^2:=\mathbb{E}_{Q'}\Big[\big(\mathbb{E}[F_{\Lambda}(Y)|X]\big)^2\Big]\equiv\mathbb{E}_{Q'}\Big[\big(\mathbb{E}[F_{\Lambda}|X]\big)^2\Big]<\infty.$$
Since $\mathcal{E}_{\Lambda}^2\subset (\mathcal{F}_{\Lambda}-\mathcal{F}'_{\Lambda})^2$ it follows by Lemma L.1 in the supplement to \cite{belloni2017program} for any $\tilde{Q}$ with $\mathbb{E}_{\tilde{Q}}[(F_{\Lambda}+F'_{\Lambda})^4]<\infty$ and $0<\varepsilon\le 1$
\begin{align*}
&\quad\log N(\varepsilon \|(F_{\Lambda}+F'_{\Lambda})^2\|_{\tilde{Q},2},\mathcal{E}_{\Lambda}^2,L_2(\tilde{Q}))\\&\le
2\log N\left(\frac{\varepsilon}{2}\|F_{\Lambda}+F'_{\Lambda}\|_{\tilde{Q},2},\mathcal{F}_{\Lambda}-\mathcal{F}'_{\Lambda},L_2(\tilde{Q})\right)\\&\le 2\log N\left(\frac{\varepsilon}{4} \|F_{\Lambda}\|_{\tilde{Q},2},\mathcal{F}_{\Lambda},L_2(\tilde{Q})\right)+2\log N\left(\frac{\varepsilon}{4} \|F'_{\Lambda}\|_{\tilde{Q},2},\mathcal{F}'_{\Lambda},L_2(\tilde{Q})\right)\\&\le 2\sup\limits_{Q}\log N\left(\frac{\varepsilon}{4} \|F_{\Lambda}\|_{Q,2},\mathcal{F}_{\Lambda},L_2(Q)\right)\\&\quad+2\sup\limits_{Q'}\log N\left(\frac{\varepsilon}{4} \|F'_{\Lambda}\|_{Q',2},\mathcal{F}'_{\Lambda},L_2(Q')\right)\\&\le 4\sup\limits_{Q}\log N\left(\frac{\varepsilon^2}{256} \|F_{\Lambda}\|_{Q,2},\mathcal{F}_{\Lambda},L_2(Q)\right),
\end{align*}
where we used \ref{B.2} in the last step.  We conclude
\begin{align*}
\log N(\varepsilon \|(F_{\Lambda}+F'_{\Lambda})^2\|_{\tilde{Q},2},\mathcal{E}_{\Lambda}^2,L_2(\tilde{Q}))&\le 4C_{\Lambda}'\log (256C_{\Lambda}''/\varepsilon^2)\\&=16C_{\Lambda}'\log (16\sqrt{C_{\Lambda}''}/\varepsilon)
\end{align*}
by \ref{B.1}. Under \ref{B5} for all $r \in\{1,2,3\}$ it holds
$$\E\left[F'^{2r}_{\Lambda}\right]=\E\left[\left(\E\left[F_{\Lambda}|X\right]\right)^{2r}\right]\le \E\left[\E\left[\left(F_{\Lambda}\right)^{2r}|X\right]\right]=\E\left[F^{2r}_{\Lambda}\right]<\infty,$$
which implies
\begin{align*}
\mathbb{E}\left[(F_{\Lambda}+F'_{\Lambda})^4\right]&=\mathbb{E}\left[F^4_{\Lambda}\right]+\underbrace{\mathbb{E}\left[F'^4_{\Lambda}\right]}_{\le\mathbb{E}\left[F^4_{\Lambda}\right]}+6\underbrace{\mathbb{E}\left[F^2_{\Lambda}F'^2_{\Lambda}\right]}_{\le \sqrt{\mathbb{E}\left[F^4_{\Lambda}\right]\mathbb{E}\left[F'^4_{\Lambda}\right]}}\\&\quad+4\underbrace{\mathbb{E}\left[F^3_{\Lambda}F'_{\Lambda}\right]}_{\le\sqrt{\mathbb{E}\left[F^6_{\Lambda}\right]\mathbb{E}\left[F'^2_{\Lambda}\right]}}+4\underbrace{\mathbb{E}\left[F_{\Lambda}F'^3_{\Lambda}\right]}_{\le\sqrt{\mathbb{E}\left[F^2_{\Lambda}\right]\mathbb{E}\left[F'^6_{\Lambda}\right]}}\\
&\le C<\infty.
\end{align*}
Remark that 
\begin{align}\label{boundsigma}
\E\left[\sup\limits_{\theta\in\Theta}\varepsilon_\theta^2\right]&\le \mathbb{E}\left[(F_{\Lambda}+F'_{\Lambda})^2\right]\le C<\infty.
\end{align}
We have
\begin{align*}
\sqrt{n}\sup\limits_{\theta\in\Theta}|(\mathbb{E}_n-\E )\varepsilon^2_\theta|=\sup\limits_{g\in\mathcal{E}_{\Lambda}^2}|G_n(g)|.
\end{align*}
For every $\sigma^2_C$ with $\sup_{g\in\mathcal{E}_{\Lambda}^2}\E [g^2]\le\sigma^2_C\le \E \big[(F_{\Lambda}+F'_{\Lambda})^4\big]:=G_1<\infty$ and universal constants $K$ and $K_2$ with probability not less than $1-(1/\log (n))$
\begin{align*}
&\quad\sup\limits_{g\in\mathcal{E}_{\Lambda}^2}|G_n(g)|\\&\le 2K\left[\left(S\sigma_C^2\log(AG_1^{1/2}/\sigma_C)\right)^{1/2}+SG_1^{1/2}\log(AG_1^{1/2}/\sigma_C)\right]\\&\quad+K_2(\sigma_C\log(n)^{1/2}+G_1^{1/2}\log(n))\\&=O(\log(n))
\end{align*}
by Lemma 1 in \cite{belloni2014uniform} with $q=2$, $t=\log(n)$, $A=16\sqrt{C_{\Lambda}''}$ and $S=16C_{\Lambda}'$. Therefore it follows with probability $1-o(1)$
$$\sup\limits_{\theta\in\Theta}|(\mathbb{E}_n-\E )\varepsilon^2_\theta|=O\left(\frac{\log(n)}{\sqrt{n}}\right).$$
Analogous it can be shown with probability $1-o(1)$
$$\sup\limits_{\theta\in\Theta}|(\mathbb{E}_n-\E )\Lambda_\theta(Y)^2|=O\left(\frac{\log(n)}{\sqrt{n}}\right).$$
(1) follows with assumption \ref{B2}.\\ \\
Further we have
\begin{align*}
\sup\limits_{|\theta-\theta'|\le 1/n}|\mathbb{E}_n\left[
\varepsilon_\theta-\varepsilon_{\theta'}\right]|=\sup\limits_{|\theta-\theta'|\le 1/n}\frac{1}{\sqrt{n}}|G_n(\varepsilon_\theta-\varepsilon_\theta')|
\end{align*}
Define $\mathcal{E}'_{\Lambda}:=\big\{\varepsilon_\theta-\varepsilon_{\theta'}|\theta,\theta'\in\Theta\big\}$ and $\mathcal{E}_{\Lambda}:=\big\{\varepsilon_\theta=\left(\Lambda_{\theta}(\cdot)-\E\left[\Lambda_{\theta}(\cdot)|X\right]\right)|\theta\in\Theta\big\}$. Using the same argument as above we obtain
\begin{align*}
&\quad\log N(\varepsilon \|2(F_{\Lambda}+F_{\Lambda}')\|_{\tilde{Q},2},\mathcal{E}'_{\Lambda},L_2(\tilde{Q}))\\&\le \log N\left(\frac{\varepsilon}{2} \|2F_{\Lambda}\|_{\tilde{Q},2},\mathcal{F}_{\Lambda},L_2(\tilde{Q})\right)+\log N\left(\frac{\varepsilon}{2} \|2F'_{\Lambda}\|_{\tilde{Q},2},\mathcal{F}'_{\Lambda},L_2(\tilde{Q})\right)\\&\le \sup\limits_{Q}\log N\left(\varepsilon \|F_{\Lambda}\|_{Q,2},\mathcal{F}_{\Lambda},L_2(Q)\right)+\sup\limits_{Q'}\log N\left(\varepsilon \|F'_{\Lambda}\|_{Q',2},\mathcal{F}'_{\Lambda},L_2(Q')\right)\\&\le 2\sup\limits_{Q}\log N\left(\left(\frac{\varepsilon}{4}\right)^2 \|F_{\Lambda}\|_{Q,2},\mathcal{F}_{\Lambda},L_2(Q)\right)\\&\le 4C_{\Lambda}'\log (4\sqrt{C_{\Lambda}''}/\varepsilon).
\end{align*}
Because
$$\mathcal{E}''_{\Lambda}:=\big\{\varepsilon_\theta-\varepsilon_{\theta'}|\theta,\theta'\in\Theta,|\theta-\theta'|\le 1/n\big\}\subset \mathcal{E}'_{\Lambda}$$
we can use Lemma 1 again, since we obtain the same envelope and bound for the entropy as for $\mathcal{E}'_{\Lambda}$.\\
We achieve for every $\sigma^2_n$ with $\sup_{g\in\mathcal{E}''_{\Lambda}}\E [g^2]\le\sigma^2_n\le \E [4(F_{\Lambda}+F'_{\Lambda})^2]:=G_2$ and universal constants $K$ and $K_2$ with probability  at least $1-(1/\log (n))$ 
\begin{align*}
&\quad\sup\limits_{g\in\mathcal{E}''_{\Lambda}}|G_n(g)|\\&\le 2K\left[\left(S\sigma_n^2\log(AG_2^{1/2}/\sigma_n)\right)^{1/2}+n^{-\frac{1}{4}}S2\E[(F_{\Lambda}+F'_{\Lambda})^4]^{1/4}\log(AG_2^{1/2}/\sigma_n)\right]\\&\quad+K_2(\sigma_n\log(n)^{1/2}+n^{-\frac{1}{4}}2\E[(F_{\Lambda}+F'_{\Lambda})^4]^{1/4}\log(n)).
\end{align*}
by Lemma 1 with $q=4$, $t=\log(n)$, $A=4\sqrt{C_{\Lambda}''}$, $S=4C_{\Lambda}'$.\\
We have 
\begin{align*}
&\quad\sup\limits_{|\theta-\theta'|\le \frac{1}{n}}\E [(\varepsilon_\theta-\varepsilon_{\theta'})^2]\\&=\sup\limits_{|\theta-\theta'|\le \frac{1}{n}}\E  \left[\big(\Lambda_\theta(Y)-\E [\Lambda_\theta(Y)|X]-\Lambda_{\theta'}(Y)+\E [\Lambda_{\theta'}(Y)|X]\big)^2 \right] \\
&=\sup\limits_{|\theta-\theta'|\le \frac{1}{n}}\E  \left[\Big(\big(\Lambda_\theta(Y)-\Lambda_{\theta'}(Y)\big)-\big(\E [\Lambda_\theta(Y)|X]-\E [\Lambda_{\theta'}(Y)|X]\big)\Big)^2 \right] \\
&=\sup\limits_{|\theta-\theta'|\le \frac{1}{n}}\Bigg(\E  \left[\big(\Lambda_\theta(Y)-\Lambda_{\theta'}(Y)\big)^2\right]+\E  \bigg[\underbrace{\E \Big[\big(\Lambda_\theta(Y)-\Lambda_{\theta'}(Y)\big)|X\Big]^2}_{\le \E \Big[\big(\Lambda_\theta(Y)-\Lambda_{\theta'}(Y)\big)^2|X\Big] } \bigg] \\
&\quad\quad -2\underbrace{\E  \bigg[\big(\Lambda_\theta(Y)-\Lambda_{\theta'}(Y)\big)\E \Big[\big(\Lambda_\theta(Y)-\Lambda_{\theta'}(Y)\big)|X\Big]\bigg]}_{\ge 0}\Bigg)\\
&\le \sup\limits_{|\theta-\theta'|\le \frac{1}{n}}2\E  \left[\big(\Lambda_\theta(Y)-\Lambda_{\theta'}(Y)\big)^2\right]\\
&= \sup\limits_{|\theta-\theta'|\le \frac{1}{n}}2\E \Big[(\theta-\theta')^2\big(\dot\Lambda_{\bar{\theta}}(Y)\big)^2\Big]\\
&\le \frac{2}{n^2}\underbrace{\sup\limits_{\theta\in\Theta}\E \Big[\big(\dot\Lambda_\theta(Y)\big)^2\Big]}_{\le C}=O(n^{-2}).
\end{align*}\ \\
Therefore we can choose $\sigma_n^2=O(n^{-2})$ and obtain obtain with probability $1-o(1)$
\begin{align*}
n^{1/2}\sup\limits_{|\theta-\theta'|\le 1/n}|\mathbb{E}_n\left[
\varepsilon_\theta-\varepsilon_{\theta'}\right]|&=\sup\limits_{|\theta-\theta'|\le 1/n}|G_n(\varepsilon_\theta-\varepsilon_\theta')|\\
&=\sup\limits_{g\in\mathcal{E}''_\Lambda}|G_n(g)|\\
&=O\left(\frac{\log(n)}{n^{1/4}}\right)=O\left(\delta_n\right).
\end{align*}
For (3) we can use the same arguments as above and we remark
\begin{align*}
\sup\limits_{|\theta-\theta'|\le 1/n}\mathbb{E}_n\left[(\varepsilon_\theta-\varepsilon_{\theta'})^2\right]&\le \sup\limits_{|\theta-\theta'|\le 1/n}\mathbb{E}\left[(\varepsilon_\theta-\varepsilon_{\theta'})^2\right]\\&\quad+\left|\sup\limits_{|\theta-\theta'|\le 1/n}\left(\mathbb{E}_n\left[(
\varepsilon_\theta-\varepsilon_{\theta'})^2\right]-\mathbb{E}\left[(
\varepsilon_\theta-\varepsilon_{\theta'})^2\right]\right)\right|\\
&\le \sup\limits_{g\in{\mathcal{E}_{\Lambda}^2}'}\frac{1}{\sqrt{n}}G_n(g)+O(n^{-2})
\end{align*}
with ${\mathcal{E}_{\Lambda}^2}':=\big\{(\varepsilon_\theta-\varepsilon_{\theta'})^2|\theta,\theta'\in\Theta\big\}.$ The entropy of this class is bounded by
\begin{align*}
&\quad\log N(\varepsilon \|4(F_{\Lambda}+F_{\Lambda}')^2\|_{\tilde{Q},2},{\mathcal{E}_{\Lambda}^2}',L_2(\tilde{Q}))\\&\le 2\log N\left(\frac{\varepsilon}{2} \|4(F_{\Lambda}+F_{\Lambda}')\|_{\tilde{Q},2},\mathcal{E}_{\Lambda}',L_2(\tilde{Q})\right)\\&\le 2\log N\left(\frac{\varepsilon}{4} \|4F_{\Lambda}\|_{\tilde{Q},2},\mathcal{F}_{\Lambda},L_2(\tilde{Q})\right)+2\log N\left(\frac{\varepsilon}{4} \|4F'_{\Lambda}\|_{\tilde{Q},2},\mathcal{F}'_{\Lambda},L_2(\tilde{Q})\right)\\&\le 2\sup\limits_{Q}\log N\left(\varepsilon \|F_{\Lambda}\|_{Q,2},\mathcal{F}_{\Lambda},L_2(Q)\right)+2\sup\limits_{Q'}\log N\left(\varepsilon \|F'_{\Lambda}\|_{Q',2},\mathcal{F}'_{\Lambda},L_2(Q')\right)\\&\le 4\sup\limits_{Q}\log N\left(\left(\frac{\varepsilon}{4}\right)^2 \|F_{\Lambda}\|_{Q,2},\mathcal{F}_{\Lambda},L_2(Q)\right)\\&\le 8C_{\Lambda}'\log \left(4\sqrt{C_{\Lambda}''}/\varepsilon\right).
\end{align*}
For every $\sigma^2_C$ with $\sup_{g\in{\mathcal{E}_{\Lambda}^2}'}\E [g^2]\le\sigma^2_C\le \E \big[16(F_{\Lambda}+F'_{\Lambda})^4\big]:=G_3<\infty$ and universal constants $K$ and $K_2$ with probability not less than $1-(1/\log (n))$, it holds
\begin{align*}
&\quad\sup\limits_{g\in{\mathcal{E}_{\Lambda}^2}'}|G_n(g)|\\&\le 2K\left[\left(S\sigma_C^2\log(AG_3^{1/2}/\sigma_C)\right)^{1/2}+SG_3^{1/2}\log(AG_3^{1/2}/\sigma_C)\right]\\&\quad+K_2(\sigma_C\log(n)^{1/2}+G_3^{1/2}\log(n))\\&=O(\log(n))
\end{align*}
by Lemma 1 in \cite{belloni2014uniform} with $q=2$, $t=\log(n)$, $A=4\sqrt{C_{\Lambda}''}$ and $S=8C_{\Lambda}'$.\\
We conclude
$$\sup\limits_{|\theta-\theta'|\le 1/n}\mathbb{E}_n\left[(
\varepsilon_\theta-\varepsilon_{\theta'})^2\right]=O\left(\frac{\log n}{\sqrt{n}}\right)$$
and therefore
\begin{align*}
\log(p\vee n)^{1/2}\sup\limits_{|\theta-\theta'|\le 1/n}\mathbb{E}_n\left[(
\varepsilon_\theta-\varepsilon_{\theta'})^2\right]^{1/2}&=O(\delta_n)
\end{align*}
since $n^{-1/4}\log(p\vee n)\le\delta_n$.
\end{proof}
\section{Inference on a target parameter in Z-problems when the high-dimensional nuisance functions depends on the target parameter}\label{appendixC}
In this section we consider a general Z-problem, where target parameter $\theta_0$ obeys the moment condition
$$\E \Big[\psi\big((Y,X),\theta_0,h_0(\theta_0,X)\big)\Big]=0.$$
In this setting the unkown nuisance function $h_0(\theta,X)=(h_{0,1}(\theta,X),\dots,h_{0,m}(\theta,X))\in\mathcal{H}$ may depend on $\theta$. 
The central theorem is a statement about the asymptotic distribution of an estimate which solves
\begin{align}\label{Zestimate}
\left|\mathbb{E}_n\Big[\psi\big((Y,X),\hat{\theta},\hat{h}_0(\hat{\theta},X)\big)\Big]\right|=\inf_{\theta\in\Theta}\left|\mathbb{E}_n\Big[\psi\big((Y,X),\theta,\hat{h}_0(\theta,X)\big)\Big]\right|+\epsilon_n.
\end{align}
We need a more general form of the conditions in section \ref{main section}.\\
Assumptions \textbf{C1}-\textbf{C7}.\\ 
The following assumptions hold uniformly in $n\ge n_0,P\in\mathcal{P}_n$:
\begin{enumerate}[label=\textbf{C\arabic*},ref=C\arabic*]
\item\label{C1}
The true parameter $\theta_0$ obeys the moment condition
$$\E \Big[\psi\big((Y,X),\theta_0,h_0\big)\Big]=0.$$
\item \label{C3}
The map $(\theta,h)\mapsto\E[\psi((X,Y),\theta,h)]$ is twice continuously Gateaux-differentiable on $\Theta\times\mathcal{H}$.
\item \label{C4}
Let $\tilde{\mathcal{H}}=\{\tilde{h}:\Theta\times\mathcal{X}\mapsto\mathbb{R}^m\}\subseteq\mathcal{H}$ be a suitable set of functions. For every $\theta\in\Theta$ we have a nuissance function estimator $\hat{h}(\theta)$ and a set of functions $\tilde{\mathcal{H}}(\theta)=\{\tilde{h}:\mathcal{X}\mapsto\mathbb{R}^m:\tilde{h}(x)=\tilde{h}(\theta,x)\in\tilde{\mathcal{H}}\}$ with $P(\hat{h}(\theta)\in\tilde{\mathcal{H}}(\theta))=1-o(1)$, where $\tilde{\mathcal{H}}(\theta)$ contains $h_0(\theta,\cdot)$ and is constrained by conditions given below.
\item \label{C2}
For all $\tilde{h}\in \tilde{\mathcal{H}}$ the score $\psi$ obeys the Neyman orthogonality property
$$D_0[\tilde{h}-h_0]=0.$$
\item \label{C5}
For all $\theta\in\Theta$ the class of functions
$$\Psi(\theta)=\Big\{(y,x)\mapsto \psi\big((y,x),\theta,\tilde{h}(\theta,x)\big),\tilde{h}\in\tilde{\mathcal{H}}(\theta)\Big\}$$
has a measurable envelope $\bar\psi\ge \sup_{\psi\in \Psi(\theta)}|\psi|$ independent from $\theta$, such that for some $q\ge 4$
$$\E \Big[(\bar\psi(Y,X))^q\Big]\le C.$$
The class $\Psi(\theta)$ is pointwise measurable and uniformly for all $\theta\in\Theta$
$$\sup_{Q}\log N(\varepsilon||\bar\psi||_{Q,2},\Psi(\theta),L_2(Q))\le C_1 s\log\left(\frac{C_2(p\vee n)}{\varepsilon}\right)$$
with $C_1$ and $C_2$ beeing independent from $\theta$.
\item \label{C6}
\begin{itemize}
\item[(i)] For a sequence $\rho_n$ with $$n^{-1/2}\left(s^{\frac{1}{2}}\log(p\vee n)^{\frac{1}{2}}+n^{-\frac{1}{2}+\frac{1}{q}}s\log(p\vee n) \right)=O(\rho_n)$$ we have
$$\sup\limits_{\theta\in\Theta,\tilde{h}\in\tilde{\mathcal{H}}(\theta)}|\E[\psi((Y,X)),\theta,h_0(\theta,X)]-\E[\psi((Y,X)),\theta,\tilde{h}(\theta,X)]|\le C\rho_n.$$
\item[(ii)] We define
\begin{align*}
\sup\E\Big[\Big(\psi\big((Y,X),\theta,\tilde{h}(\theta,X)\big)-\psi\big((Y,X),\theta_0,h_0(\theta_0,X)\big)\Big)^2\Big]^{1/2}=: r_n,
\end{align*}
where the supremum is taken over all $\theta$ with $|\theta-\theta_0|\le C\rho_n$ and $\tilde{h}\in\tilde{\mathcal{H}}$, meaning 
$$\sup\equiv\sup\limits_{\theta:|\theta-\theta_0|\le C\rho_n,\tilde{h}\in\tilde{\mathcal{H}}(\theta)},$$
and it holds $r_n s^{\frac{1}{2}}\log\Big(\frac{(p\vee n) }{r_n}\Big)^{\frac{1}{2}}+n^{-\frac{1}{2}+\frac{1}{q}}s\log\Big(\frac{(p\vee n) }{r_n}\Big)=o(1)$ with $q$ from assumption C$5$.
\item[(iii)] It holds
\begin{align*}\sup\left|\partial^2_r\bigg\{\E\Big[\psi\big((Y,X),\theta_0+r(\theta-\theta_0),h_0+r(\tilde{h}-h_0)\big)\Big]\bigg\}\right|=o(n^{-1/2}),
\end{align*}
where
$$\sup\equiv\sup\limits_{r\in(0,1),\theta:|\theta-\theta_0|\le C\rho_n,\tilde{h}\in\tilde{\mathcal{H}}(\theta)}.$$
\end{itemize}
\item \label{C7}
For $h \in \tilde{\mathcal{H}}$ the function
$$\theta\mapsto \E \Big[\psi\big((Y,X),\theta,h(\theta,X)\big)\Big]$$
is differentiable in a neighborhood of $\theta_0$ and
for all $\theta\in\Theta$, the identification relation
$$2|\E[\psi((Y,X)),\theta,h_0(\theta,X)]|\ge |\Gamma(\theta-\theta_0)|\wedge c_0$$
is satisfied with
$$\Gamma:=\partial_\theta\E \Big[\psi\big((Y,X),\theta_0,h_0(\theta_0,X)\big)\Big]>c_1.$$
\end{enumerate}
\noindent
Since the nuisance functions depend on the target parameter the conditions ensure that they can be estimated uniformly over all $\theta$ with a sufficiently fast rate. 
\begin{theorem}{} \label{generalmain}
Under the assumptions \ref{C1}-\ref{C7} an estimator $\hat{\theta}$ of the form in \ref{Zestimate} obeys
$$n^{\frac{1}{2}}(\hat{\theta}-\theta_0)\xrightarrow{\mathcal{D}}\mathcal{N}(0,\Sigma),$$
where $$\Sigma:=\E\Big[\Gamma^{-2}\psi^2\big((Y,X),\theta_0,h_0(\theta_0,X)\big)\Big]$$ with $\Gamma=\partial_\theta\E \Big[\psi\big((Y,X),\theta_0,h_0(\theta_0,X)\big)\Big]$.
\end{theorem}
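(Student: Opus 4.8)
\textbf{The plan} is to follow the Neyman-orthogonal Z-estimation template of \cite{belloni2014uniform} and \cite{chernozhukov2017double}, adapted so that the empirical process is controlled uniformly over $\theta$, since both the nuisance $h_0(\theta,\cdot)$ and its realization set $\tilde{\mathcal{H}}(\theta)$ vary with $\theta$. The starting point is the near-minimization in \eqref{Zestimate}: because $\theta_0$ is feasible,
$$\left|\mathbb{E}_n\big[\psi((Y,X),\hat\theta,\hat h_0(\hat\theta,X))\big]\right|\le\left|\mathbb{E}_n\big[\psi((Y,X),\theta_0,\hat h_0(\theta_0,X))\big]\right|+\epsilon_n,\qquad \epsilon_n=o(n^{-1/2}).$$
Throughout I would work on the event of probability $1-o(1)$ on which $\hat h_0(\theta)\in\tilde{\mathcal{H}}(\theta)$ for every $\theta$, guaranteed by \ref{C4}, so that the relevant random functions lie in the classes $\Psi(\theta)$ of \ref{C5}.

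\textbf{Step 1 (empirical process).} The core step is the stochastic equicontinuity statement
$$G_n\big(\psi((Y,X),\theta,\tilde h(\theta,X))\big)-G_n\big(\psi((Y,X),\theta_0,h_0(\theta_0,X))\big)=o_P(1)$$
uniformly over $|\theta-\theta_0|\le C\rho_n$ and $\tilde h\in\tilde{\mathcal{H}}(\theta)$. Here I would invoke the uniform-in-$\theta$ entropy bound and the $L_q$ envelope ($q\ge4$) of $\Psi(\theta)$ from \ref{C5}, which are exactly the inputs of the maximal inequality of \cite{chernozhukov2014gaussian} (Theorem 5.1). Feeding the $L_2$ radius $r_n$ of \ref{C6}(ii) into that inequality produces a bound of order $r_n s^{1/2}\log((p\vee n)/r_n)^{1/2}+n^{-1/2+1/q}s\log((p\vee n)/r_n)$, which is $o(1)$ by \ref{C6}(ii); this is the only place where the $\theta$-dependence of the nuisance genuinely complicates the classical argument.

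\textbf{Step 2 (bias removal and consistency).} For the drift I would study the deterministic functional $(\theta,\tilde h)\mapsto\E[\psi((Y,X),\theta,\tilde h(\theta,X))]$ and Taylor-expand the one-dimensional map $r\mapsto\E[\psi((Y,X),\theta_0+r(\theta-\theta_0),h_0+r(\tilde h-h_0))]$ to second order at $r=0$, which is licensed by the twice Gateaux-differentiability \ref{C3}. The value at $r=0$ vanishes by \ref{C1}; the first-order term splits into a $\theta$-part equal to $\Gamma(\theta-\theta_0)$ (with $\Gamma>c_1$ from \ref{C7}) and a nuisance part $D_0[\tilde h-h_0]$ that vanishes by Neyman orthogonality \ref{C2}; and the quadratic remainder is $o(n^{-1/2})$ by \ref{C6}(iii). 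Evaluating at $\tilde h=\hat h_0$, and combining with the bias bound \ref{C6}(i), the identification inequality of \ref{C7}, and the \ref{C5}-based bound on $G_n(\psi((Y,X),\theta_0,\hat h_0(\theta_0,X)))$, I obtain consistency and the preliminary rate $|\hat\theta-\theta_0|=O_P(\rho_n)$, which justifies restricting to the shrinking neighborhood used in Step 1.

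\textbf{Step 3 (linearization and the CLT).} Assembling the decomposition of $\sqrt n\,\mathbb{E}_n[\psi((Y,X),\hat\theta,\hat h_0(\hat\theta,X))]$ from Steps 1 and 2, and noting that its left-hand side is $o_P(1)$ by the near-minimization with $\epsilon_n=o(n^{-1/2})$ together with $\E[\psi((Y,X),\theta_0,h_0(\theta_0,X))]=0$, I would solve for the estimator to get
$$\sqrt n\,(\hat\theta-\theta_0)=-\Gamma^{-1}\,G_n\big(\psi((Y,X),\theta_0,h_0(\theta_0,X))\big)+o_P(1).$$
A Lyapunov central limit theorem applies to $G_n(\psi((Y,X),\theta_0,h_0(\theta_0,X)))$, the $q\ge4$ moment bound of \ref{C5} supplying the Lyapunov condition, so the right-hand side converges to $\mathcal{N}(0,\Gamma^{-2}\E[\psi^2((Y,X),\theta_0,h_0(\theta_0,X))])=\mathcal{N}(0,\Sigma)$. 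I expect the main obstacle to be Step 1: since the nuisance realization set depends on $\theta$, there is no fixed Donsker class to appeal to, and the whole argument rests on the uniform-in-$\theta$ entropy bound \ref{C5}, itself the delicate input (established in Theorem \ref{entropy}); by contrast, the orthogonality cancellation in Step 2 is conceptually routine once \ref{C2} and \ref{C6}(iii) are in hand.
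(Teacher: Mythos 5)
Your Steps 1 and 2 track the paper's own argument closely: the uniform-in-$\theta$ stochastic equicontinuity via the entropy condition \ref{C5} and the radius $r_n$ of \ref{C6}(ii) (the paper invokes Lemma 1 of Belloni, Chernozhukov and Kato (2014) rather than Theorem 5.1 of Chernozhukov et al. (2014), an immaterial difference), the second-order Taylor expansion of $r\mapsto\E\big[\psi\big((Y,X),\theta_0+r(\theta-\theta_0),h_0+r(\tilde h-h_0)\big)\big]$ with the first-order nuisance part killed by Neyman orthogonality and the remainder controlled by \ref{C6}(iii), and the consistency rate $|\hat\theta-\theta_0|=O_P(\rho_n)$ from \ref{C6}(i) and \ref{C7}. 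One small looseness there: before consistency is established, $\hat\theta$ can lie anywhere in $\Theta$, so you need the empirical-process supremum over all of $\Theta$ (which \ref{C5} delivers, as in the paper's Step 3), not merely the bound on $G_n\big(\psi((Y,X),\theta_0,\hat h_0(\theta_0,X))\big)$ at the single point $\theta_0$.

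The genuine gap is in your Step 3. You assert that $\sqrt n\,\E_n\big[\psi\big((Y,X),\hat\theta,\hat h_0(\hat\theta,X)\big)\big]=o_P(1)$ "by the near-minimization with $\epsilon_n=o(n^{-1/2})$ together with $\E[\psi((Y,X),\theta_0,h_0(\theta_0,X))]=0$," but your only feasibility comparison is at $\theta_0$, and $\sqrt n\,\E_n\big[\psi\big((Y,X),\theta_0,\hat h_0(\theta_0,X)\big)\big]=G_n\big(\psi((Y,X),\theta_0,h_0(\theta_0,X))\big)+o_P(1)$ converges to a nondegenerate normal: it is $O_P(1)$, not $o_P(1)$. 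Writing $Z_n:=G_n\big(\psi((Y,X),\theta_0,h_0(\theta_0,X))\big)$ and $t_n:=\sqrt n\,\Gamma(\hat\theta-\theta_0)$, your argument yields only $|Z_n+t_n|\le|Z_n|+o_P(1)$, which is compatible with, say, $t_n=0$ and does not force the asymptotic linearity $t_n=-Z_n+o_P(1)$. The missing idea is the paper's Step 2: the one-step corrected point $\theta^*:=\theta_0-\Gamma^{-1}\E_n\big[\psi\big((Y,X),\theta_0,h_0(\theta_0,X)\big)\big]$ satisfies $|\theta^*-\theta_0|=O_P(n^{-1/2})\le C\rho_n$, so the Step 1 expansion applies at $\theta^*$ and the drift $\Gamma(\theta^*-\theta_0)$ cancels the stochastic term exactly, giving $\E_n\big[\psi\big((Y,X),\theta^*,\hat h_0(\theta^*,X)\big)\big]=o_P(n^{-1/2})$ and hence $\inf_{\theta\in\Theta}\big|\E_n\big[\psi\big((Y,X),\theta,\hat h_0(\theta,X)\big)\big]\big|=o_P(n^{-1/2})$. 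Only then does the near-minimization in \ref{Zestimate} force the attained value at $\hat\theta$ to be $o_P(n^{-1/2})$, after which solving the Step 1 expansion for $\sqrt n(\hat\theta-\theta_0)$ and applying the CLT (your Lyapunov argument from the $q\ge 4$ envelope moment in \ref{C5} is fine) gives the limit $\mathcal{N}(0,\Sigma)$. With this construction inserted, the rest of your outline matches the paper.
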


\begin{remark}\label{entropycomment}   \ \\
This setting and the theorem is almost identical to the assumption 3.4 and theorem 3.3 in Chernozhukov et al. (2017) \cite{chernozhukov2017double}. Their theorem holds for dependent nuisance functions, but the entropy condition may be hard to verify in some settings:\\
Suppose the unknown nuisance function $h_0$ is a linear function of $X$, where the coefficients $\beta_0(\theta)$ ($\|\beta_0(\theta)\|_0\le s$ for all $\theta$) are dependent on the target parameter. If $h_0(\theta,X)=X\beta_0(\theta)$ is estimated with the lasso estimator the uniform covering entropy of 
$$\mathcal{F}_h:=\Big\{\psi\big(\cdot,\theta,h(\theta,\cdot)\big),\theta\in\Theta\Big\}$$
may not fulfill the desired condition. This is because the uniform covering entropy of the class
$$\mathcal{H}:=\Big\{h(\theta,\cdot):\mathcal{X}\to \mathbb{R}|h(\theta,X)=\beta(\theta) X,\|\beta(\theta)\|_0\le s,\theta\in\Theta\Big\}$$
can not be bounded by standard arguments as the union over sets with a bounded VC-index (see for example Belloni et al. (2014) \cite{belloni2014uniform}), since the indices which are not zero may vary for each $\theta$.\\
In their example, the estimation of the average treatment effect, this problem does not occur, because the estimated nuisance functions do not depend on the target parameter.\\
To bypass this we rely on a slightly different set of entropy conditions, which enables us to restrict the entropy of the classes for an arbitrary $\theta$, but uniformly over all $\theta\in\Theta$. 
\end{remark}
\begin{proof}
We mimic the proof of Theorem 2 from Belloni, Chernozhukov and Kato (2014) \cite{belloni2014uniform}.\\
We prove the theorem under an abitraty sequence $P=P_n\in\mathcal{P}_n$. Therefore the dependence of $P$ on $n$ can be suppressed.\\ \\
\textit{Step 1.}\\
Let $\tilde{\theta}$ be an abitrary estimator fullfilling $|\tilde{\theta}-\theta_0|\le C\rho_n$ with probability $1-o(1)$. We aim to prove that with probability $1-o(1)$
\begin{align*}
&\quad\E_n\Big[\psi\big((Y,X),\tilde{\theta},\hat{h}(\tilde{\theta},X)\big)\Big]\\&=\E_n\Big[\psi\big((Y,X),\theta_0,h_0(\theta_0,X)\big)\Big]\\
&\quad+\underbrace{\partial_\theta\E \Big[\psi\big((Y,X),\theta_0,h_0(\theta_0,X)\big)\Big]}_{:=\Gamma}(\tilde{\theta}-\theta_0)+o(n^{-\frac{1}{2}}). 
\end{align*}
By assumption \ref{C1} we can expand the term
\begin{align*}
&\quad\E_n\Big[\psi\big((Y,X),\tilde{\theta},\hat{h}(\tilde{\theta},X)\big)\Big]\\&=\E_n\Big[\psi\big((Y,X),\tilde{\theta},\hat{h}(\tilde{\theta},X)\big)\Big]+\underbrace{\E \Big[\psi\big((Y,X),\theta_0,h_0(\theta_0,X)\big)\Big]}_{=0}\\
&\quad+\E_n \Big[\psi\big((Y,X),\theta_0,h_0(\theta_0,X)\big)\Big]-\E_n \Big[\psi\big((Y,X),\theta_0,h_0(\theta_0,X)\big)\Big]\\
&\quad+\E\Big[\psi\big((Y,X),\tilde{\theta},\hat{h}(\tilde{\theta},X)\big)\Big]-\E\Big[\psi\big((Y,X),\tilde{\theta},\hat{h}(\tilde{\theta},X)\big)\Big]\\
&=\underbrace{\E_n \Big[\psi\big((Y,X),\theta_0,h_0(\theta_0,X)\big)\Big]}_{:=I}+\underbrace{\E\Big[\psi\big((Y,X),\tilde{\theta},\hat{h}(\tilde{\theta},X)\big)\Big]}_{:=II}\\
&\quad +\underbrace{\E_n\Big[\psi\big((Y,X),\tilde{\theta},\hat{h}(\tilde{\theta},X)\big)\Big]-\E\Big[\psi\big((Y,X),\tilde{\theta},\hat{h}(\tilde{\theta},X)\big)\Big]}_{:=III}\\
&\quad-\bigg(\underbrace{\E_n \Big[\psi\big((Y,X),\theta_0,h_0(\theta_0,X)\big)\Big]-\E \Big[\psi\big((Y,X),\theta_0,h_0(\theta_0,X)\big)\Big]}_{:=IV}\bigg)\\
&=I+II+III-IV
\end{align*}
Considering the last two terms we have with probability $1-o(1)$
\begin{align*}
&\quad n^{\frac{1}{2}}\Big(III-IV\Big)\\&=\frac{1}{\sqrt{n}}\sum\limits_{i=1}^n\bigg(\psi\big((Y,X),\tilde{\theta},\hat{h}(\tilde{\theta},X)\big)-\psi\big((Y,X),\theta_0,h_0(\theta_0,X)\big)\\
&\quad -\Big(\E\Big[\psi\big((Y,X),\tilde{\theta},\hat{h}(\tilde{\theta},X)\big)\Big]-\E \Big[\psi\big((Y,X),\theta_0,h_0(\theta_0,X)\big)\Big]\Big)\bigg)\\
&\le \sup\limits_{\theta:|\theta-\theta_0|\le C\rho_n}\bigg|\bigg[\frac{1}{\sqrt{n}}\sum\limits_{i=1}^n\bigg(\psi\big((Y,X),\theta,\hat{h}(\theta,X)\big)-\psi\big((Y,X),\theta_0,h_0(\theta_0,X)\big)\\
&\quad -\Big(\E\Big[\psi\big((Y,X),\theta,\hat{h}(\theta,X)\big)\Big]-\E \Big[\psi\big((Y,X),\theta_0,h_0(\theta_0,X)\big)\Big]\Big)\bigg)\bigg]\bigg|\\
&\le \sup\limits_{\theta:|\theta-\theta_0|\le C\rho_n}\left(\sup\limits_{f\in\Psi'(\theta)}|G_n(f)|\right)
\end{align*}
with
$$\Psi'(\theta)=\Big\{(y,x)\mapsto \psi\big((y,x),\theta,\tilde{h}(\theta,x)\big)-\psi\big((y,x),\theta_0,h_0(\theta_0,x)\big),\tilde{h}\in\tilde{\mathcal{H}}(\theta)\Big\}$$
and envelope $2\bar{\psi}$. Here we used assumption \ref{C5} and that with probability $1-o(1)$ we have $\hat{h}(\theta,X),h_0(\theta,X)\in\tilde{\mathcal{H}}(\theta)$ for all $\theta\in\Theta$ by assumption \ref{C4}.
Recall that
$$\sup_{Q}\log N(\varepsilon||2\bar\psi||_{Q,2},\Psi'(\theta),L_2(Q))\le C_1 s\log\left(\frac{C_2(p\vee n)}{\varepsilon}\right)$$
for constants $C_1$ and $C_2$ beeing independent from $\theta$.
We want to apply Lemma 1 from Belloni, Chernozhukov and Kato (2014) \cite{belloni2014uniform}.
By assumption \ref{C6} we have
\begin{align*}
&\quad\sup\limits_{\theta:|\theta-\theta_0|\le C\rho_n,f\in\Psi'(\theta)}\E \Big[f^2\big((Y,X)\big)\Big]\\
&=\sup\limits_{\theta:|\theta-\theta_0|\le C\rho_n,\tilde{h}\in\tilde{\mathcal{H}}(\theta)}\E\Big[\Big(\psi\big((Y,X),\theta,\tilde{h}(\theta,X)\big)-\psi\big((Y,X),\theta_0,h_0(\theta_0,X)\big)\Big)^2\Big]\\
&=: r_n^2
\end{align*}
with $r_n s^{\frac{1}{2}}\log\Big(\frac{(p\vee n) }{r_n}\Big)^{\frac{1}{2}}+n^{-\frac{1}{2}+\frac{1}{q}}s\log\Big(\frac{(p\vee n) }{r_n}\Big)=o(1)$. Choosing $\sigma_n^2=r_n^2$ and $\max_{q\in\{2,4\}}\E [(\bar\psi(Y,X))^q]\le C$ the first inequality of Lemma 1 in \cite{belloni2014uniform} implies
\begin{align*}
&\quad\E \left[\sup\limits_{f\in\Psi'(\theta)}|G_n(f)|\right]\\&\le K\bigg[\bigg(C_1 s  \sigma_n^2\log\Big(\frac{C_2(p\vee n) C^{\frac{1}{2}}}{\sigma_n}\Big)\bigg)^{\frac{1}{2}} + n^{-\frac{1}{2}+\frac{1}{q}}C_1s C^{\frac{1}{q}}\log\Big(\frac{C_2(p\vee n) C^{\frac{1}{2}}}{\sigma_n}\Big)\bigg]\\
&\le K'\bigg( \sigma_n\bigg( s  \log\Big(\frac{(p\vee n) }{\sigma_n}\Big)\bigg)^{\frac{1}{2}}+n^{-\frac{1}{2}+\frac{1}{q}}s\log\Big(\frac{(p\vee n) }{\sigma_n}\Big)\bigg).
\end{align*}
Applying the second part of Lemma 1 with $t=\log(n)$ we obtain
\begin{align*}
n^{\frac{1}{2}}|III-IV|&\le \sup\limits_{\theta:|\theta-\theta_0|\le C\rho_n}\Big(\sup\limits_{f\in\Psi'(\theta)}|G_n(f)|\Big)\\& \le \sup\limits_{\theta:|\theta-\theta_0|\le C\rho_n}\bigg(2\E \Big[\sup\limits_{f\in\Psi'(\theta)}|G_n(f)|\Big]\\&\quad+K_q\bigg(\sigma_n\log(n)^{\frac{1}{2}}+n^{-\frac{1}{2}+\frac{1}{q}}C^{\frac{1}{q}}\log(n)\bigg)\bigg)\\
&\le K_q'\bigg( \sigma_n\bigg( s  \log\Big(\frac{(p\vee n) }{\sigma_n}\Big)\bigg)^{\frac{1}{2}}+n^{-\frac{1}{2}+\frac{1}{q}}s\log\Big(\frac{(p\vee n) }{\sigma_n}\Big)\bigg)\\&=o(1).
\end{align*}
Now we expand the term II.  Let $\tilde{h}\in\tilde{\mathcal{H}}$ and $\tilde{\theta}\in\Theta$.\\
By Taylor expansion of the function $r\mapsto\E\Big[\psi\big((Y,X),\theta_0+r(\tilde{\theta}-\theta_0),h_0+r(\tilde{h}-h_0)\big)\Big]$ we have by assumption \ref{C3}
\begin{align*}
&\quad\E\Big[\psi\big((Y,X),\tilde{\theta},\tilde{h}\big)\Big]\\&=\E\Big[\psi\big((Y,X),\theta_0,h_0\big)\Big]\\
&\quad+\partial_r\bigg\{\E\Big[\psi\big((Y,X),\theta_0+r(\tilde{\theta}-\theta_0),h_0+r(\tilde{h}-h_0)\big)\Big]\bigg\}\bigg|_{r=0}\\
&\quad+\frac{1}{2}\partial^2_r\bigg\{\E\Big[\psi\big((Y,X),\theta_0+r(\tilde{\theta}-\theta_0),h_0+r(\tilde{h}-h_0)\big)\Big]\bigg\}\bigg|_{r=\bar{r}},
\end{align*}
for some $\bar{r}\in(0,1)$. Due to the orthogonality condition in \ref{C2} we have
\begin{align*}
\quad& \partial_r\bigg\{\E\Big[\psi\big((Y,X),\theta_0+r(\tilde{\theta}-\theta_0),h_0+r(\tilde{h}-h_0)\big)\Big]\bigg\}\bigg|_{r=0}\\
=& \partial_r\bigg\{\E\Big[\psi\big((Y,X),\theta_0+r(\tilde{\theta}-\theta_0),h_0+r(\tilde{h}-h_0)\big)\Big]\bigg\}\bigg|_{r=0}-D_0[\tilde{h}-h_0]\\
=&\partial_r\bigg\{\E\Big[\psi\big((Y,X),\theta_0+r(\tilde{\theta}-\theta_0),h_0+r(\tilde{h}-h_0)\big)\Big]\\&-\E\Big[\psi\big((Y,X),\theta_0,h_0+r(\tilde{h}-h_0)\big)\Big]\bigg\}\bigg|_{r=0}\\
=&\partial_r\bigg\{r(\tilde{\theta}-\theta_0)\partial_\theta\E\Big[\psi\big((Y,X),\theta,h_0+r(\tilde{h}-h_0)\big)\Big]\Big|_{\theta_\in[\theta_0,\theta_0+r(\tilde{\theta}-\theta_0)]}\bigg\}\bigg|_{r=0}\\
=&(\tilde{\theta}-\theta_0)\partial_\theta\E\Big[\psi\big((Y,X),\theta_0,h_0\big)\Big].
\end{align*}
By assumption \ref{C6} we have
\begin{align*}
\left|\partial^2_r\bigg\{\E\Big[\psi\big((Y,X),\theta_0+r(\tilde{\theta}-\theta_0),h_0+r(\tilde{h}-h_0)\big)\Big]\bigg\}\bigg|_{r=\bar{r}}\right|=o(n^{-1/2})
\end{align*}
and therefore
\begin{align*}
\E\Big[\psi\big((Y,X),\tilde{\theta},\tilde{h}\big)\Big]=\Gamma(\tilde{\theta}-\theta_0)+o(n^{-1/2}).
\end{align*}
In total, we obtain with probability $1-o(1)$
\begin{align*}
\E_n\Big[\psi\big((Y,X),\tilde{\theta},\hat{h}(\tilde{\theta},X)\big)\Big]&=\E_n\Big[\psi\big((Y,X),\theta_0,h_0(\theta_0,X)\big)\Big]\\
&\quad+\Gamma(\tilde{\theta}-\theta_0)+o(n^{-\frac{1}{2}}). 
\end{align*}
\newpage\noindent
\textit{Step 2.}\\
We want to prove that with probability $1-o(1)$
$$\inf\limits_{\theta\in\Theta}\Big|\E_n\Big[\psi\big((Y,X),\theta,\hat{h}(\theta,X)\big)\Big]\Big|=o(n^{-\frac{1}{2}}).$$
Define
$$\theta^*:=\theta_0-\Gamma^{-1}\E_n\Big[\psi\big((Y,X),\theta_0,h_0(\theta_0,X)\big)\Big].$$
Directly follows with central limit theorem
$$|\theta^*-\theta_0|= |\Gamma^{-1}|\Big|\underbrace{\E_n\Big[\psi\big((Y,X),\theta_0,h_0(\theta_0,X)\big)\Big]}_{=O\left(n^{-\frac{1}{2}}\right)}\Big|\le C \rho_n.$$
Using Step 1 we obtain with probability $1-o(1)$
$$\inf\limits_{\theta\in\Theta}\Big|\E_n\Big[\psi\big((Y,X),\theta,\hat{h}(\theta,X)\big)\Big]\Big|\le \Big|\E_n\Big[\psi\big((Y,X),\theta^*,\hat{h}(\theta^*,X)\big)\Big]\Big|= o(n^{-\frac{1}{2}})$$
by inserting the definition of $\theta^*$.\\ \\
\textit{Step 3.}\\
We aim to show that the estimated $\hat{\theta}$ converges towards $\theta_0$, meaning with probability $1-o(1)$
$$|\hat{\theta}-\theta_0|\le C\rho_n.$$
By definition of $\hat{\theta}$ and Step 2 we have
$$\Big|\E_n\Big[\psi\big((Y,X),\hat{\theta},\hat{h}(\hat{\theta},X)\big)\Big]\Big|= o(n^{-\frac{1}{2}}).$$
Since $\hat{h}(\theta)\in\tilde{\mathcal{H}}(\theta)$ with probability $1-o(1)$ for all $\theta\in\Theta$ we have 
\begin{align*}
&\quad\sup\limits_{\theta\in\Theta}\left|\E_n\Big[\psi\big((Y,X),\theta,\hat{h}(\theta,X)\big)\Big]-\E\Big[\psi\big((Y,X),\theta,\hat{h}(\theta,X)\big)\Big]\right|\\
&\le  \sup\limits_{\theta\in\Theta}\left(n^{-\frac{1}{2}}\sup\limits_{g\in\Psi(\theta)}|G_n(g)|\right)\\&=O\left(n^{-1/2}\left(s^{\frac{1}{2}}\log(p\vee n)^{\frac{1}{2}}+n^{-\frac{1}{2}+\frac{1}{q}}s\log(p\vee n)\right)\right),
\end{align*}
where we used Lemma 1 in \cite{belloni2014uniform} and $\E \Big[(\bar\psi(Y,X))^2\Big]\le C$ as in Step 1. Combining this with the triangle inequality we obtain
\begin{align*}
&\quad\big|\E\Big[\psi\big((Y,X),\hat{\theta},h_0(\hat{\theta},X)\big)\Big]\big|\\
&\le\sup\limits_{\theta\in\Theta,\tilde{h}\in\tilde{\mathcal{H}}(\theta)}|\E[\psi((Y,X)),\theta,h_0(\theta,X)]-\E[\psi((Y,X)),\theta,\tilde{h}(\theta,X)]|\\
&+\sup\limits_{\theta\in\Theta,\tilde{h}(\theta)\in\tilde{\mathcal{H}}(\theta)}\left|\E_n\Big[\psi\big((Y,X),\theta,\tilde{h}(\theta,X)\big)\Big]-\E\Big[\psi\big((Y,X),\theta,\tilde{h}(\theta,X)\big)\Big]\right|\\
&+
\Big|\E_n\Big[\psi\big((Y,X),\hat{\theta},\hat{h}(\hat{\theta},X)\big)\Big]\Big|\le C\rho_n
\end{align*}
by \ref{C6}. Hence, it follows by assumption \ref{C7} with probability $1-o(1)$,
\begin{align*}
|\Gamma(\hat{\theta}-\theta_0)|\wedge c_0 \le 2\big|\E[\psi((Y,X)),\hat{\theta},h_0(\hat{\theta},X)]\big|\le C\rho_n
\end{align*}
and dividing by $\Gamma>c_1$ gives the claim of this step.\\ \\ 
\textit{Step 4.}\\
Because of Step 3 we are able to use Step 1 for the estimated parameter and obtain with probability $1-o(1)$
\begin{align*}
\E_n\Big[\psi\big((Y,X),\hat{\theta},\hat{h}(\hat{\theta},X)\big)\Big]&=\E_n\Big[\psi\big((Y,X),\theta_0,h_0(\theta_0,X)\big)\Big]\\
&\quad+\Gamma(\hat{\theta}-\theta)+o\left(n^{-\frac{1}{2}}\right).
\end{align*}
By Step 2 we have
\begin{align*}
&\quad\Gamma(\hat{\theta}-\theta)\\
&=\underbrace{\E_n\Big[\psi\big((Y,X),\hat{\theta},\hat{h}(\hat{\theta},X)\big)\Big]}_{=o\left(n^{-\frac{1}{2}}\right)}-\E_n\Big[\psi\big((Y,X),\theta_0,h_0(\theta_0,X)\big)\Big]+o\left(n^{-\frac{1}{2}}\right)\\
&=-\bigg(\E_n\Big[\psi\big((Y,X),\theta_0,h_0(\theta_0,X)\big)\Big]-\underbrace{\E\Big[\psi\big((Y,X),\theta_0,h_0(\theta_0,X)\big)\Big]}_{=0}\bigg)\\
&\quad+o\left(n^{-\frac{1}{2}}\right)
\end{align*}
Using the central limit theorem we get with probability $1-o(1)$
\begin{align*}
&\quad n^{\frac{1}{2}}(\hat{\theta}-\theta) \\
&=\underbrace{-\Gamma^{-1}n^{\frac{1}{2}}\bigg(\E_n\Big[\psi\big((Y,X),\theta_0,h_0(\theta_0,X)\big)\Big]-\E\Big[\psi\big((Y,X),\theta_0,h_0(\theta_0,X)\big)\Big]\bigg)}_{\xrightarrow{\mathcal{D}}\mathcal{N}(0,\Sigma)}\\
&\quad\quad+o(1)
\end{align*}
with $$\Sigma:=\Var\Big(\Gamma^{-1}\psi\big((Y,X),\theta_0,h_0(\theta_0,X)\big)\Big)=\E\Big[\Gamma^{-2}\psi^2\big((Y,X),\theta_0,h_0(\theta_0,X)\big)\Big].$$
\end{proof}

\newpage

\footnotesize
\pagebreak
\bibliographystyle{imsart-number}
\bibliography{Literatur_NR}

\end{document}